\documentclass[11pt]{article}

\usepackage{lipsum}
\usepackage{amsfonts}
\usepackage{graphicx}
\usepackage{epstopdf}
\usepackage{algorithm}
\usepackage{algorithmic}
\ifpdf
  \DeclareGraphicsExtensions{.eps,.pdf,.png,.jpg}
\else
  \DeclareGraphicsExtensions{.eps}
\fi

\usepackage[left=2cm,right=2cm,top=2cm,bottom=2cm]{geometry}

\usepackage{tabu}       

\usepackage{nicefrac}       
\usepackage{microtype}      
\usepackage{lipsum}
\usepackage[utf8]{inputenc} 
\usepackage[T1]{fontenc}    
\usepackage{braket}
\usepackage{graphicx}  
\usepackage{amsmath}  

\usepackage{hyperref}
\usepackage{xcolor}
\hypersetup{
    colorlinks,
    linkcolor={red!50!black},
    citecolor={blue!50!black},
    urlcolor={blue!80!black}
}

\usepackage{multirow}
\usepackage{url}
\usepackage[all]{xy}
\usepackage{microtype}
\usepackage{graphicx}
\usepackage{subcaption}
\usepackage{ stmaryrd }
\usepackage{ upgreek }
\usepackage{booktabs} 
\usepackage{ dsfont }
\usepackage{amsfonts}       
\usepackage{amsthm}
\usepackage{enumerate}
\usepackage{forloop}
\usepackage{comment}

\usepackage{float}
\usepackage{tabularx}

\usepackage{ latexsym }
\usepackage{pst-node}
\usepackage{tikz-cd}
\usepackage{mathabx}
\usepackage{esvect}
\usepackage{tikz}
\usepackage{paralist}
\usetikzlibrary{arrows}

\usepackage{authblk}

\usepackage{mathtools}

\theoremstyle{plain}
\newtheorem{proposition}{Proposition}
\newtheorem{theorem}{Theorem}
\newtheorem{lemma}{Lemma}

\theoremstyle{definition}
\newtheorem{definition}{Definition}
\newtheorem{remark}{Remark}

\usepackage[capitalize,noabbrev]{cleveref}

\graphicspath{{./fig/}}

\title{A Gradient Sampling Algorithm for Stratified Maps with Applications to Topological Data Analysis}

\author[1]{Jacob Leygonie\thanks{\texttt{jacob.leygonie@maths.ox.ac.uk}}}
\author[2]{Mathieu Carrière\thanks{\texttt{mathieu.carriere@inria.fr}}}
\author[3]{Théo Lacombe\thanks{\texttt{theo.lacombe@univ-eiffel.fr}}}
\author[4]{Steve Oudot\thanks{\texttt{steve.oudot@inria.fr}}}
\affil[1]{Mathematical Institute, Oxford University, UK}
\affil[2]{DataShape, Université Côte d'Azur, Inria, France}
\affil[3]{LIGM, Université Gustave Eiffel, France.}
\affil[4]{DataShape, Université Paris-Saclay, CNRS, Inria, Laboratoire de Mathématiques d’Orsay, France.}
\date{}


\makeatletter
\newcommand*{\addFileDependency}[1]{
  \typeout{(#1)}
  \@addtofilelist{#1}
  \IfFileExists{#1}{}{\typeout{No file #1.}}
}
\makeatother

\newcommand\blfootnote[1]{%
  \begingroup
  \renewcommand\thefootnote{}\footnote{#1}%
  \addtocounter{footnote}{-1}%
  \endgroup
}

\newcommand{\R}{\mathbb{R}}
\newcommand{\N}{\mathbb{N}}
\newcommand{\Stratum}{X}
\newcommand{\Stratif}{\mathcal{X}}
\newcommand{\lipconst}{L}
\newcommand{\co}{\overbar{\mathrm{co}}}
\newcommand{\Goldstein}{\partial_\epsilon f}
\newcommand{\AppGoldstein}{\tilde{\partial}_\epsilon f}
\newcommand{\Perm}{\Sigma_n}
\newcommand{\StratumPi}{\mathcal{S}_{\pi}}

\newcommand{\Transp}{\mathcal{T}_n}
\newcommand{\inv}{\mathrm{inv}}
\newcommand{\adist}{\hat{d}}
\newcommand{\AppAppGoldstein}{\hat{\partial}_\epsilon f}
\newcommand{\persmap}{\mathrm{PH}}
\newcommand{\Pers}{\mathrm{Pers}}
\newcommand{\SComplex}{K}
\newcommand{\simplex}{\sigma}
\newcommand{\V}{V}
\newcommand{\Loss}{f}
\newcommand{\Barc}{\mathbf{Bar}}

\newcommand{\overbar}{\overline}

\newcommand{\defeq}{\vcentcolon=}

\newcommand{\AlgoGradientDescent}{\mathbf{SGS}}
\newcommand{\AlgoUpdateStep}{\mathbf{UpdateStep}}
\newcommand{\AlgoUpdateStepSimple}{\mathbf{SimpleUpdateStep}}
\newcommand{\AlgoGeneralizedGradient}{\mathbf{ApproxGradient}}
\newcommand{\AlgoMakeDiff}{\mathbf{MakeDifferentiable}}

\newcommand{\ControlConst}{C}
\newcommand{\DiffSet}{\mathcal{D}}
\newcommand{\DiffOracle}{\mathbf{DiffOracle}}
\newcommand{\SampleOracle}{\mathbf{SampleOracle}} 
\newcommand{\ApproxSampleOracle}{\mathbf{ApproxSampleOracle}} 
\newcommand{\proxypoint}{\tilde{\elt}_{\Stratum}}
\newcommand{\consta}{a}
\newcommand{\CardStrata}{N}

\newcommand{\mapper}{{\rm Map}}

\newcommand{\elt}{x}
\newcommand{\filt}{x}

\DeclareMathOperator*{\argmin}{\mathrm{argmin}}

\begin{document}

\maketitle

\begin{abstract}
  We introduce a novel gradient descent algorithm extending the well-known Gradient Sampling methodology to the class of stratifiably smooth objective functions, which are defined as locally Lipschitz functions that are smooth on some regular pieces---called the strata---of the ambient Euclidean space. For this class of functions, our algorithm achieves a sub-linear convergence rate. We then apply our method to objective functions based on the (extended) persistent homology map computed over lower-star filters, which is a central tool of Topological Data Analysis. For this, we propose an efficient exploration of the corresponding stratification by using the Cayley graph of the permutation group. Finally, we provide benchmark and novel topological optimization problems, in order to demonstrate the utility and applicability of our framework.
\end{abstract}

\blfootnote{This research was conducted while Théo Lacombe was affiliated to DataShape, Université Paris-Saclay, CNRS, Inria, Laboratoire de Mathématiques d’Orsay, France.}

\tableofcontents{}


\section{Introduction}
\label{sec:intro}

\subsection{Motivation and related work}
\label{sec:related_work}
In its most general instance nonsmooth, non convex, optimization seek to minimize a locally Lipschitz objective or loss function $f:\R^n \rightarrow \R$. Without further regularity assumptions on~$f$, most algorithms---such as the usual Gradient Descent with learning rate decay, or the Gradient Sampling method---are only guaranteed to produce iterates whose subsequences are asymptotically stationary, without explicit convergence rates. 
Meanwhile, when restricted to the class of min-max functions (like the maximum of finitely many smooth maps), stronger guarantees such as convergence rates can be obtained~\cite{helou2017local}. 
This illustrates the common paradigm in nonsmooth optimization: the richer the structure in the irregularities of $f$, the better the guarantees we can expect from an optimization algorithm. Note that there are algorithms specifically tailored to deal with min-max functions, e.g.~\cite{bertsekas1975nondifferentiable}.

Another example are bundle methods~\cite{fuduli2004dc,fuduli2004minimizing,haarala2007globally}. They consist, roughly, in constructing successive linear approximations of~$f$ as a proxy for minimization. Their convergence guarantees are strong, especially when an additional {\em semi-smoothness} property of~$f$~\cite{bihain1984optimization,mifflin1977algorithm} can be made. Other types of methods, like the variable metric ones, can also benefit from the semi-smoothness hypothesis~\cite{vlvcek2001globally}. In many cases, 
convergence properties of the algorithm are not only dependent on the structure on $f$, 
but also on the amount of information about $f$ that can be computed in practice. 
For instance, the bundle method~\cite{lukvsan1998bundle} assumes that the Hessian matrix, when defined locally, can be computed. For accounts of the theory and practice in nonsmooth optimization, we refer the interested reader to~\cite{bagirov2014introduction,kiwiel2006methods,shor2012minimization}.

The ability to cut~$\R^n$ in well-behaved pieces where~$f$ is regular, is another type of important structure. Examples, in increasing order of generality, are semi-algebraic, (sub)analytic, definable, tame (w.r.t. an o-minimal structure), and Whitney stratifiable functions~\cite{bolte2007clarke}. For such objective functions, the usual gradient descent (GD) algorithm, or a stochastic version of it, converges to stationary points~\cite{davis2020stochastic}. In order to obtain further theoretical guarantees such as convergence rates, it is necessary to design optimization algorithms specifically tailored for regular maps, since they enjoy stronger properties, e.g., tame maps are semi-smooth~\cite{ioffe2009invitation}, and the generalized gradients of Whitney stratifiable maps are closely related to the (restricted) gradients of the map 
along the strata~\cite{bolte2007clarke}. Besides, strong convergence guarantees can be obtained under the Kurdyka--{\L}ojasiewicz assumption~\cite{attouch2013convergence,noll2014convergence}, which includes the class of semi-algebraic maps. Our method is related to this line of work, in that we exploit the strata of~$\R^n$ in which~$f$ is smooth. 

The motivation of this work stems from Topological Data Analysis (TDA), where geometric objects such as graphs are described by means of computable and topological descriptors. Persistent Homology (PH) is one such descriptor, and has been successfully applied in various areas such as neuroscience~\cite{dabaghian2014reconceiving,bendich2016persistent}, material sciences~\cite{hiraoka2016hierarchical,townsend2020representation}, signal analysis~\cite{perea2015sliding,umeda2017time}, shape recognition~\cite{li2014persistence}, or machine learning \cite{chen2019topological,carriere2020perslay}. 

Persistent Homology describes graphs, and more generally simplicial complexes, over~$n$ nodes by means of a signature called the \emph{barcode}, or \emph{persistence diagram}~$\persmap(\filt)$. Here~$\filt$ is a \emph{filter function}, that is a function on the nodes, which we view as a vector in~$\R^n$. 
Loosely speaking,~$\persmap(\filt)$ is a finite multi-set of points in the upper half-plane $\{(b,d)\in \R^2,\ d\geq b \}$ that encodes topological and geometric information about the underlying simplicial complex and the function~$\filt$. 

Barcodes form a metric space $\Barc$ when equipped with the standard metrics of TDA, the so-called {\em bottleneck} and {\em Wasserstein} distances, and the persistence map~$\persmap: \R^n \rightarrow \Barc$ is locally Lipschitz~\cite{cohen2007stability,cohen2010lipschitz}. However~$\Barc$ is not Euclidean nor a smooth manifold, thus hindering the use of these topological descriptors in standard statistical or machine learning pipelines. Still, there exist natural notions of differentiability for maps in and out of~$\Barc$~\cite{leygonie2019framework}. 
In particular, the persistence map~$\persmap: \R^n \rightarrow \Barc$ restricts to a locally Lipschitz, smooth map on a stratification of~$\R^n$ by polyhedra. 
If we compose the persistence map with a smooth and Lipschitz map~$\V: \Barc\rightarrow \R$, the resulting objective (or loss) function
$$
\xymatrix{
\Loss: \R^n \ar[rr]^{\persmap}&& \Barc  \ar[rr]^{\V}&&\R }
$$
is itself Lipschitz and smooth on the various strata. From~\cite{davis2020stochastic}, and as recalled in~\cite{carriere2020note}, classical (Stochastic) Gradient Descent on~$\Loss$ asymptotically converges to stationary points. Similarly, the Gradient Sampling (GS) method asymptotically converges. See~\cite{solomon2020fast} for an application of GS to topological optimization. 

Nonetheless, it is important to design algorithms that take advantage of the structure in the irregularities of the persistence map~$\persmap$, in order to get better theoretical guarantees. For instance, one can locally integrate the gradients of~$\persmap$---whenever defined---to stabilize the iterates~\cite{solomon2020fast}, or add a regularization term to~$\Loss$ that acts as a smoothing operator~\cite{corcoran2020regularization}. In this work, we rather exploit the stratification of~$\R^n$ induced by~$\persmap$, as it turns out to be easy to manipulate. We will show in particular that we can efficiently access points~$\filt'$ located in neighboring strata of the current iterate $\filt$, as well as estimate the distance to these strata. 

For this reason, we believe that persistent homology-based objective functions~$\Loss$ form a rich playground for nonsmooth optimization, 
with many applications in point cloud inference~\cite{gameiro2016continuation}, surface reconstruction~\cite{bruel2020topology}, shape matching~\cite{poulenard2018topological}, graph classification~\cite{hofer2020graph,yim2021optimisation}, topological regularization for generative models~\cite{moor2019topological, hofer2019connectivity,gabrielsson2020topology}, image segmentation~\cite{hu2019topology,clough2019explicit}, or dimensionality reduction~\cite{kachan2020persistent}, to name a few. 

\begin{figure}
    \includegraphics[width=0.58\textwidth]{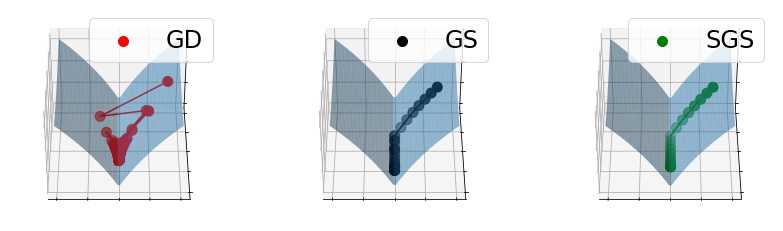}
    \includegraphics[width=0.35\textwidth]{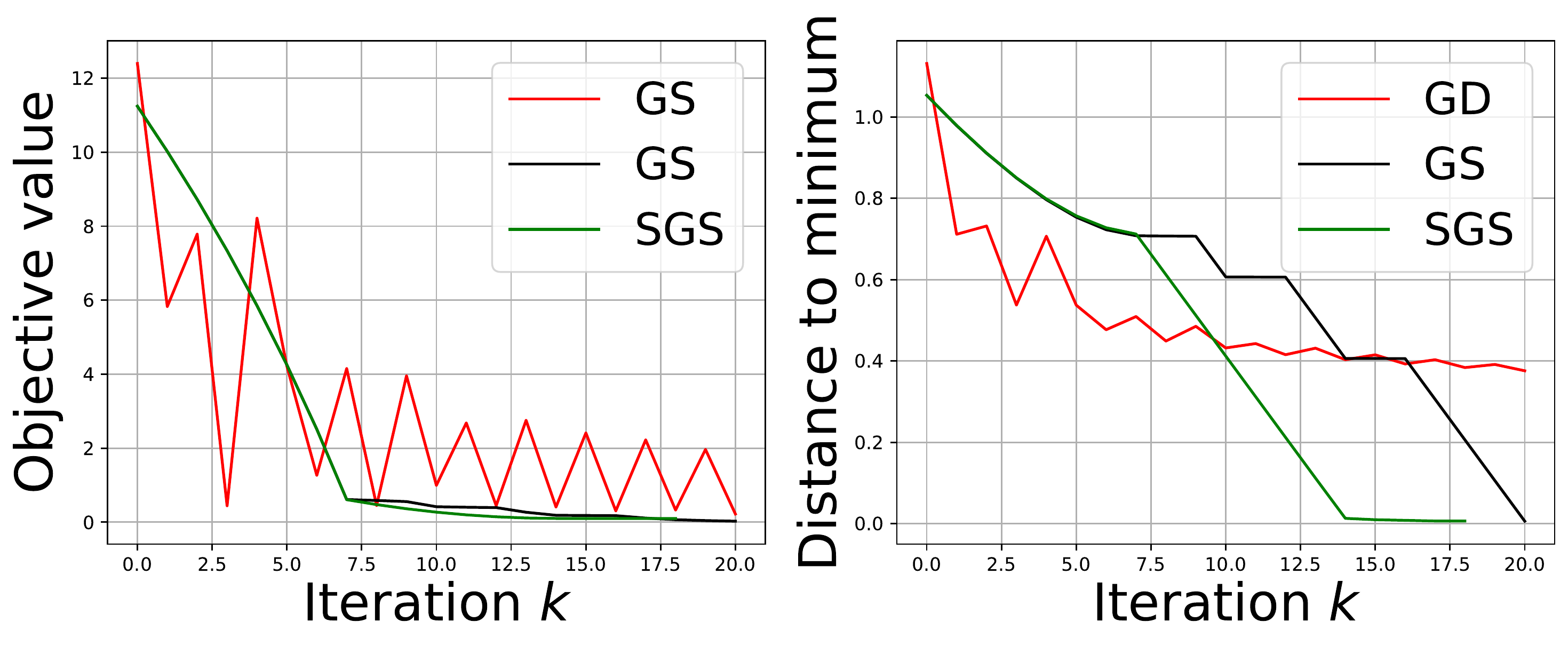}
    \caption{A proof-of-concept comparison between different gradient descent techniques. The objective function $\Loss : (z_1,z_2) \in \R^2 \to 10 \log(1 + |z_1|) + z_2^2 \in  \R$ (blue surface) attains its minimum at $\elt_* = (0,0)$ and is not smooth along the line~$\{z_1=0\}$. In particular, $\| \nabla \Loss \| > 1$ around~$\elt_*$, thus the gradient norm cannot be used as a stopping criterion. The traditional GD, for which updates are given by~$\elt_{k+1} = \elt_k - \frac{\lambda_0}{k+1} \nabla \Loss(\elt_k)$, oscillates around~$\{z_1=0\}$ due to the non-smoothness of~$\Loss$ and asymptotically converges toward~$\elt_*$ because of the decaying learning rate~$\frac{\lambda_0}{k+1}$. In the meantime, non-smooth optimization methods that sample points around~$\elt_k$ in order to produce reliable descent directions converge in finite time. Namely, the classical Gradient Sampling method randomly samples~$3$ points and manages to reach an $(\epsilon,\eta)$-stationary point of~$\Loss$ in~$\sim 20.6 \pm 3.9$ iterations (averaged over 100 experiments), while our stratified approach improves on this by leveraging the fact that we explicitly have access to the two strata~$\{z_1<0\}$ and~$\{z_1>0\}$ where~$\Loss$ is differentiable. In particular, we only sample additional points when~$\elt_k$ is $\epsilon$-near the line $\{z_1=0\}$, and reach an $(\epsilon, \eta)$-stationary point in~18 iterations. Right plots showcase the evolution of the objective value~$f(\elt_k)$ and the distance to the minimum~$\|\elt_k - \elt_*\|$ across iterations. Parameters: $\elt_0 = (0.8, 0.8)$, $\lambda_0 = 10^{-1}$, $\epsilon = 10^{-1}$, $\eta = 10^{-2}$.}
    \label{fig:poc_intro}
\end{figure}

\subsection{Contributions and outline of contents}
\label{sec:outline_content}

Our new method, called {\em Stratified Gradient Sampling} (SGS), is a variation of the established GS algorithm, whose main steps for updating the current iterate~$\elt_k\in \R^n$ we recall in \cref{alg:gradient_sampling} below.
\begin{algorithm*}
\caption{An update step with the Gradient Sampling algorithm}
\label{alg:gradient_sampling}
\begin{algorithmic}[1]
\STATE{Sample $m\geq n+1$ points $\elt_{k}^1,\cdots, \elt_k^{m}$ in a ball $B(\elt_k,\epsilon)$}
\STATE{Compute {\em approximate subgradient} $G_k\defeq\{\nabla \Loss(\elt_k), \nabla \Loss(\elt_k^{1}), \cdots, \nabla \Loss(\elt_k^{m}) \}$}
\STATE{Compute {\em descent direction}~$g_k\defeq\argmin \{\|g\|^2, \, \, g \text{ in convex hull of }G_k\}$}
\STATE{Find {\em step size} $t_k\geq 0$ so that~$\Loss(\elt_k -t_k g_k)\leq \Loss(\elt_k) -\beta t_k \|g_k\|^2$ ($\beta \in (0,1)$ hyperparameter)}
\STATE{Ensure that~$\Loss$ is differentiable at~$\elt_{k+1}\defeq\elt_k -t_k g_k$ by small perturbations}
\end{algorithmic}
\end{algorithm*}
Our method is motivated by the closing remarks of a recent overview of the GS methodology \cite{burke2019gradient}, in which the authors suggest that the GS theory and practice could be enhanced by assuming some extra structure on top of the non differentiability of~$\Loss$. 

In this work, we deal with {\em stratifiably smooth maps}, for which the non differentiability is organized around smooth submanifolds that together form a stratification of~$\R^n$. In \cref{sec:background}, we review some background material in nonsmooth analysis and define stratifiably smooth maps~$f:\R^n\rightarrow \R$, a variant of the Whitney stratifiable functions from~\cite{bolte2007clarke} for which we do not impose any Whitney regularity on the gluing between adjacent strata of~$\R^n$, but rather enforce that there exist local $C^2$-extensions of the restrictions of~$f$ to top-dimensional strata.   

In order to update the current iterate~$\elt_k$ when minimizing a stratifiably smooth objective function~$\Loss$, we introduce a new descent direction~$g_k$. As in GS, $g_k$ is obtained in our new SGS algorithm by collecting the gradients of samples around~$\elt_k$ in an approximate subgradient~$G_k$, and then by taking the element with minimal norm in the convex set generated by~$G_k$. A key difference with GS is that we only need to sample as many points around~$\elt_k$ as there are distinct strata close by, compare with the~$m\geq n+1$ samples of \cref{alg:gradient_sampling}. In~\cref{prop:approx_descent}, we show that we indeed obtain a descent direction, i.e., that we have the descent criterion~$\Loss(\elt_k -t_k g_k)\leq \Loss(\elt_k) -\beta t_k \|g_k\|^2$ (as in Line~4 of \cref{alg:gradient_sampling}) for a suitable choice of step size~$t_k$.

Our SGS algorithm is detailed in \cref{sec:algorithm} and its analysis in \cref{sec:convergence}. The convergence of the original GS methodology crucially relies on the sample size~$m\geq n+1$ in order to apply the Carathéodory Theorem to subgradients. Differently, our convergence analysis relies on the fact that the gradients of~$\Loss$, when restricted to neighboring strata, are locally Lipschitz. Hence, our proof of asymptotic convergence to stationary points (\cref{thm:convergence_algo}) is substantially different. In \cref{thm:rate_convergence_algo}, we determine a convergence rate of our algorithm that holds for any proper stratifiably smooth map, which is an improvement over the guarantees of GS for general locally Lipschitz maps. Finally, in \cref{section_approx_distance_strata}, we adapt our method and results to the case where only estimated distances to nearby strata are available.

In \cref{sec:persistence_optim}, we introduce the persistence map $\persmap$ over a simplicial complex~$\SComplex$, which gives rise to a wide class of stratifiably smooth objectivefunctions with rich applications in TDA. We characterize strata around the current iterate (i.e., filter function)~$\elt_k$ by means of the permutation group over~$n$ elements, where~$n$ is the number of vertices in~$\SComplex$. Then, the Cayley graph associated to the permutation group allows us to use Dijkstra's algorithm to efficiently explore the set of neighboring strata by increasing order of distances to~$\elt_k$, that are needed to compute descent directions. 

\Cref{sec:experiments} is devoted to the implementation of the SGS algorithm for the optimization of persistent homology-based objective functions~$\Loss$. In \cref{subsec:total_pers}, we provide empirical evidence that SGS behaves better than GD and GS with a simple experiment about minimization of total persistence. In \cref{subsec:expe_topomean} and \cref{subsec:expe_registration}, we consider two novel topological optimization problems which we believe are of interest in real-world applications. On the one hand, the {\em Topological Template Registration} of a filter function~$\filt$ defined on a complex~$\SComplex$, is the task of finding a filter function~$\filt'$ over a smaller complex~$\SComplex'$ that preserves the barcode of~$\filt$. On the other hand, given a Mapper graph~$G$, which is a standard visualization tool for arbitrary data sets~\cite{Singh2007}, we can bootstrap the data set in order to produce multiple bootstrapped graphs~$G_i$. The {\em Topological Mean} is then the task of finding a new graph $G^*$ whose barcode is as close as possible to the mean of the barcodes associated to the graphs~$G_i$. As a result we obtain a smoothed version $G^*$ of the Mapper graph~$G$ in which spurious and non-relevant graph attributes are removed. 

\section{A direction of descent for stratifiably smooth maps}
\label{sec:background}
In this section, we define the class of stratifiably smooth maps whose optimization is at stake in this work. For such maps, we can define an approximate subgradient and a corresponding descent direction, which is the key ingredient of our algorithm. 

\subsection{Nonsmooth Analysis}
\label{sec:prelim_non_smooth_analysis}
We first recall some useful background in nonsmooth analysis, essentially from~\cite{clarke1990optimization}. Throughout, $f:\R^n\rightarrow\R$ is a locally Lipschitz (non necessary smooth, nor convex) and proper (i.e., compact sublevel sets) function, which we aim to minimize.

First-order information of $f$ at $\elt \in\R^n$ in a direction  $v \in \R^n$  is captured by its {\em generalized directional derivative}:
\begin{equation}
	f^\circ(\elt; v) = \limsup_{ y \to x, t \shortdownarrow 0^+} \frac{f(y + tv) - f(y)}{t},
\end{equation}
Besides, the \emph{generalized gradient} is the following set of linear subapproximations:
\begin{equation}
\label{eq:generalized_gradient}
	\partial f (\elt)\defeq \big\{ \zeta \in \R^n,\ f^\circ(\elt; v) \geq \braket{\zeta, v} \text{ for all } v \in \R^n\big\}.	
\end{equation}
Given an arbitrary set~$S\subset \R^n$ of Lebesgue measure~$0$, we have an alternative description of the generalized gradient in terms of limits of surrounding gradients, whenever defined:
\begin{equation}
\label{eq:generalized_gradient_limiting_gradients}
	\partial f (\elt)= \co \big\{\lim \nabla f (\elt_i) \, | \, \elt_i\rightarrow \elt, \, \nabla f (\elt_i) \text{ is defined }, \, \lim \nabla f (\elt_i) \text{ exists}, \, \elt_i\notin S \big\},
\end{equation}
where~$\co$ is the operation of taking the closure of the convex hull.\footnote{Here the equality holds as well (by some compactness argument) when taking the convex hull without closure. As we take closed convex hulls later on, we choose not to introduce this subtlety explicitly.} The duality between generalized directional derivatives and gradients is captured by the equality:
\begin{equation}
	f^\circ(\elt; v) = \max\{ \braket{\zeta, v}, \zeta \in \partial f(\elt) \}.
\end{equation}
The {\em Goldstein subgradient}~\cite{goldstein1977optimization} is an $\epsilon$-relaxation of the generalized gradient:
\begin{equation}
\label{eq:goldstein_subgradient}
	\Goldstein (\elt)= \co \big\{ \lim \nabla f (\elt_i) \, | \, \elt_i\rightarrow \elt', \, \nabla f (\elt_i) \text{ is defined }, \, \lim \nabla f (\elt_i) \text{ exists, } \, |x-x'|\leqslant \epsilon  \big\}.
\end{equation}
Given~$\elt\in \R^n$, we say that:
\vspace{1mm}
\begin{center}
$\elt$ is a {\em stationary point} (for $f$) if $0\in \partial f (\elt)$.
\end{center}
\vspace{1mm}
Any local minimum is stationary, and conversely if $f$ is convex. We also have weaker notions. Namely, given~$\epsilon, \eta>0$, 
\begin{center}
$\elt$ is {\em $\epsilon$-stationary} if~$0\in \Goldstein (\elt)$; and\\ 
$\elt$ is {\em $(\epsilon,\eta)$-stationary} if $d(0,\Goldstein (\elt))\leqslant \eta$.
\end{center}
\vspace{2mm}
\subsection{Stratifiably smooth maps}
\label{subsec:stratifiably_smooth}
Desirable properties for an optimization algorithm is that it produces iterates~$(\elt_k)_k$ that either converge to an $(\epsilon,\eta)$-stationary point in finitely many steps, or whose subsequences (or some of them) converge to an $\epsilon$-stationary point. For this, we work in the setting of objective functions that are smooth when restricted to submanifolds, that together partition~$\R^n$.
\begin{definition}
\label{def:stratification}
A {\em stratification} $\Stratif=\{\Stratum_i\}_{i\in I}$ of a closed subset $\mathbb{X}\subseteq \R^n$ is a locally finite partition of~$\mathbb{X}$ by smooth submanifolds $\Stratum_i$---called {\em strata}---such that for $i\neq j\in I$:
\[\overbar{\Stratum_i} \cap \Stratum_j \neq \emptyset \Rightarrow \Stratum_j \subseteq \overbar{\Stratum_i}\setminus \Stratum_i .\]
This makes $(\mathbb{X},\Stratif)$ into a {\em stratified space}.
\end{definition}
Note that we do not impose any (usually needed) gluing conditions between adjacent strata, as we do not require them in the analysis. In particular, semi-algebraic, subanalytic, or definable subsets of~$\R^n$, together with Whitney stratified sets are stratified in the above weak sense. We next define the class of maps~$\Loss$ with smooth restrictions~$f_{|\Stratum_i}$ to strata~$\Stratum_i$ of some stratification~$\Stratif$, inspired by the Whitney stratifiable maps of~\cite{bolte2007clarke} (there~$\Stratif$ is required to be Whitney) and the \emph{stratifiable functions} of~\cite{drusvyatskiy2015clarke}, however we further require that the restrictions~$f_{|\Stratum_i}$ admit local extensions of class~$C^2$. 
\begin{definition}
\label{def:stratified_map}
The map $\Loss:\R^n\rightarrow \R$ is {\em stratifiably smooth} if there exists a stratification $\Stratif$ of $\R^n$, such that for each stratum $\Stratum_i\in \Stratif$, the restriction~$\Loss_{|\Stratum_{i}}$ admits an extension~$\Loss_{i}$ of class~$C^2$ in a neighborhood of $\Stratum_i$.
\end{definition}
\begin{remark}
\label{remark_stratified_map_def}
The slightly weaker assumption that the extension~$\Loss_{i}$ is continuously differentiable with locally Lipschitz gradient would have also been sufficient for our purpose. 
\end{remark} 

We denote by~$\Stratum_\elt$ the stratum containing $\elt$, and by $\Stratif_x\subseteq \Stratif$ the set of strata containing~$\elt$ in their closures. More generally, for $\epsilon> 0$, we let $\Stratif_{\elt,\epsilon}\subseteq \Stratif$ be the set of strata $\Stratum_i$ such that the closure of the  ball $B(\elt,\epsilon)$ has non-empty intersection with the closure of $\Stratum_i$. Local finiteness in the definition of a stratification implies that $\Stratif_{\elt,\epsilon}$ (and $\Stratif_{\elt}$) is finite. 

If~$f$ is stratifiably smooth and $\Stratum_i\in \Stratif_\elt$ is a stratum, there is a well-defined limit gradient~$\nabla_{\Stratum_{i}}f(\elt)$, which is the unique limit of the gradients $\nabla f_{|{\Stratum_i}}(\elt_n)$ where $\elt_n\in \Stratum_i$ is any sequence converging to $\elt$. Indeed, this limit exists and does not depend on the choice of sequence since $f_{|\Stratum_{i}}$ admits a local $C^2$ extension $f_i$. The following result states that the generalized gradient at $\elt$ can be retrieved from these finitely many limit gradients along the various adjacent top-dimensional strata. 
\begin{proposition}
\label{prop:generalized_gradient_along_strata}
If~$f$ is stratifiably smooth, then for any $\elt\in \R^n$ we have:
\[\partial f(\elt)= \mathrm{co} \big\{ \nabla_{\Stratum_{i}}f(\elt), \text{ $\Stratum_i\in \Stratif_\elt$ is of dimension $n$}  \big\}.\]
More generally, for $\epsilon>0$:
\[\Goldstein(\elt)= \co \big\{ \nabla_{\Stratum_{i}}f(\elt') \,| \, |x'-x|\leqslant \epsilon, \, \Stratum_i\in \Stratif_{\elt'}\subseteq \Stratif_{\elt,\epsilon}\text{ is of dimension $n$}  \big\}.\]
\end{proposition}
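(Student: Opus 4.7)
The plan is to apply the alternative description (\ref{eq:generalized_gradient_limiting_gradients}) of the generalized gradient, with a well-chosen measure-zero set $S$, and to use the local $C^2$ extensions from \cref{def:stratified_map} to identify the limits of surrounding gradients with the finitely many limit gradients $\nabla_{\Stratum_i}f(\elt)$ along the adjacent top-dimensional strata.

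\textbf{Step 1: measure-zero set of non top-dimensional strata.} Local finiteness of $\Stratif$ combined with a countable exhaustion of $\R^n$ by bounded sets shows that $\Stratif$ itself is countable. Each stratum $\Stratum_i$ of dimension strictly less than $n$ is a smooth submanifold of $\R^n$ of positive codimension, hence has Lebesgue measure zero. I would then set $S \defeq \bigcup \{ \Stratum_i \in \Stratif : \dim \Stratum_i < n\}$, a countable union of measure-zero sets, which is therefore itself of measure zero and may legitimately be used in \eqref{eq:generalized_gradient_limiting_gradients}.

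\textbf{Step 2: identification of the surrounding limits.} Fix $\elt \in \R^n$ and pick $\epsilon_0 > 0$ small enough that $\Stratif_{\elt,\epsilon_0}$ is finite. For any sequence $\elt_i \to \elt$ with $\elt_i \notin S$ and $\nabla f(\elt_i)$ defined such that the limit $\lim \nabla f(\elt_i)$ exists, each $\elt_i$ (for $i$ large) lies in some top-dimensional stratum of $\Stratif_{\elt,\epsilon_0}$. By finiteness of this set I extract a subsequence that stays within a single top-dimensional stratum $\Stratum_j$; necessarily $\elt \in \overbar{\Stratum_j}$, so $\Stratum_j \in \Stratif_\elt$. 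Since $f_{|\Stratum_j}$ admits a $C^2$ extension $f_j$ defined near $\Stratum_j$, we get $\nabla f(\elt_i) = \nabla f_j(\elt_i) \to \nabla f_j(\elt) = \nabla_{\Stratum_j}f(\elt)$. Conversely, for each top-dimensional $\Stratum_j \in \Stratif_\elt$, by picking any sequence in $\Stratum_j$ converging to $\elt$ we recover $\nabla_{\Stratum_j}f(\elt)$ as such a limit. Hence the set of admissible limits in \eqref{eq:generalized_gradient_limiting_gradients} is exactly $\{\nabla_{\Stratum_j}f(\elt) : \Stratum_j \in \Stratif_\elt, \ \dim \Stratum_j = n\}$. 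As this set is finite, its convex hull is already compact, so the closure in $\co$ is unnecessary, which yields the first equality.

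\textbf{Step 3: the Goldstein case.} I would run the same argument starting from \eqref{eq:goldstein_subgradient}. Each limit $\lim \nabla f(\elt_i)$ with $\elt_i \to \elt'$ and $|\elt-\elt'|\leqslant \epsilon$ equals some $\nabla_{\Stratum_j}f(\elt')$ with $\Stratum_j \in \Stratif_{\elt'}$ top-dimensional; since $\elt' \in \overbar{\Stratum_j}\cap \overbar{B(\elt,\epsilon)}$, we have $\Stratum_j \in \Stratif_{\elt,\epsilon}$. Conversely, any such $\nabla_{\Stratum_j}f(\elt')$ is realised as a limit along a sequence in $\Stratum_j$ converging to $\elt'$. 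The set $\Stratif_{\elt,\epsilon}$ being finite, only finitely many points $\elt'$ carry distinct strata contributions \emph{at the level of stratum indices}, but $\elt'$ itself still ranges over the closed ball, so I keep the closure $\co$ in this second equality.

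\textbf{Main obstacle.} The one nonroutine point is ensuring that a gradient-limit along an arbitrary sequence $\elt_i \to \elt$ inside a top-dimensional stratum $\Stratum_j$ always coincides with the stratum-limit $\nabla_{\Stratum_j}f(\elt)$; this is precisely what the $C^2$ extension hypothesis in \cref{def:stratified_map} (or the weaker variant of \cref{remark_stratified_map_def}) buys us, since it upgrades $\nabla f_{|\Stratum_j}$ to a continuous map on a neighbourhood of $\elt$ in $\overbar{\Stratum_j}$.
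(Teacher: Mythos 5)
Your proof is correct and follows essentially the same route as the paper's: both invoke the limiting-gradient description \eqref{eq:generalized_gradient_limiting_gradients} with a measure-zero set $S$ of lower-dimensional strata, extract a subsequence lying in a single top-dimensional stratum of the finite set $\Stratif_{\elt}$, and use the local $C^2$ extension to identify the limit with $\nabla_{\Stratum_j}f(\elt)$. The only cosmetic differences are that the paper takes $S$ to be the finite union of positive-codimension strata in $\Stratif_{\elt}$ rather than your countable global union, and it leaves the Goldstein case to the reader, which you spell out.
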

\begin{proof}
We show the first equality only, as the second can be proven along the same lines. We use the description of~$\partial f(\elt)$ in terms of limit gradients from \Cref{eq:generalized_gradient_limiting_gradients}, which implies the inclusion of the right-hand side in $\partial f(\elt)$. Conversely, let~$S$ be the union of strata in $\Stratif_\elt$ with positive codimension, which is of measure~$0$. Let $\elt_i$ be a sequence avoiding~$S$, converging to $\elt$, such that $\nabla f(\elt_i)$ converges as well. Since $\Stratif_\elt$ is finite, up to extracting a subsequence, we can assume that all $\elt_i$ are in the same top-dimensional stratum $\Stratum_i\in \Stratif_\elt$. Consequently,~$\nabla f(\elt_i)$ converges to~$\nabla_{\Stratum_{i}}f(\elt)$.
\end{proof}
\subsection{Direction of descent}
\label{sec:approx_goldstein_descent_direction}

Thinking of~$\elt$ as a current position, we look for a direction of {\em (steepest) descent}, in the sense that a perturbation of~$\elt$ in this direction produces a (maximal) decrease of~$\Loss$. Given~$\epsilon \geqslant 0$, we let~$g(\elt,\epsilon)$ be the projection of the origin on the convex set~$\Goldstein (\elt)$. Equivalently,~$g(\elt,\epsilon)$ solves the minimization problem:
\begin{equation}
\label{eq:g_minimal_norm}
g(\elt,\epsilon) = \argmin \big\{\|g\|, \, g\in \Goldstein (\elt) \big\}.
\end{equation}
Introduced in~\cite{goldstein1977optimization}, the direction $-g(\elt,\epsilon)$ is a good candidate of direction for descent, as we explain now. Since $g(\elt,\epsilon)$ is the projection of the origin on the convex closed set $\Goldstein (\elt)$, we have the classical inequality~$\braket{g(\elt,\epsilon), g(\elt,\epsilon)-g} \leqslant 0$ that holds for any~$g$ in the Goldstein subgradient at~$\elt$. Equivalently, 
\begin{equation}
\label{eq:g_dot_product}
 \forall g\in \Goldstein (\elt), \, \braket{-g(\elt,\epsilon), g} \leqslant -\|g(\elt,\epsilon)\|^2.
\end{equation}
Informally, if we think of a small perturbation $\elt-tg(\elt,\epsilon)$ of $\elt$ along this direction, for~$t>0$ small enough, then $\Loss(\elt-tg(\elt,\epsilon)) \approx \Loss(\elt)-t \braket{\nabla \Loss(\elt),g(\elt,\epsilon)}$. Using \Cref{eq:g_dot_product}, since $\nabla \Loss(\elt)\in \Goldstein (\elt)$, 
we deduce that $\Loss(\elt-tg(\elt,\epsilon))\leqslant \Loss(\elt)- t \|g(\elt,\epsilon)\|^2$. So~$\Loss$ locally decreases at linear rate in the direction $-g(\elt,\epsilon)$. This intuition relies on the fact that $\nabla \Loss(\elt)$ is well-defined so as to provide a first order approximation of~$\Loss$ around~$\elt$, and that $t$ is chosen small enough. In order to make this reasoning completely rigorous, we need the following well-known result (see for instance~\cite{clarke1990optimization}):

\begin{theorem}[Lebourg Mean value Theorem]
\label{prop:lebourg}
Let $\elt, \elt'\in \R^n$. Then there exists some $y\in [\elt,\elt']$ and some $w\in \partial \Loss(y)$ such that:
\[\Loss(\elt')-\Loss(\elt)=\braket{w, \elt'-\elt}.\]
\end{theorem}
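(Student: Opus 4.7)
The plan is to mimic the classical proof of the mean value theorem by reducing to a one-dimensional auxiliary function and applying a nonsmooth Rolle-type theorem together with the Clarke chain rule.

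First, I would introduce the auxiliary function $\phi:[0,1]\to \R$ defined by
\[
\phi(t) \defeq \Loss\bigl(\elt + t(\elt'-\elt)\bigr) + t\bigl(\Loss(\elt)-\Loss(\elt')\bigr).
\]
A direct computation gives $\phi(0) = \Loss(\elt) = \phi(1)$, and $\phi$ inherits local Lipschitzness from $\Loss$ on the compact segment $[\elt,\elt']$. Since $\phi$ is continuous on $[0,1]$ and takes equal values at the endpoints, it attains either its maximum or its minimum at some interior point $t_0\in (0,1)$. The nonsmooth Fermat rule (a standard consequence of the definition of $\partial\phi$, see Clarke) then yields $0 \in \partial \phi(t_0)$.

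Next I would invoke Clarke's chain rule applied to $t \mapsto \Loss(\elt + t(\elt'-\elt))$: since this is the composition of the $C^\infty$ affine map $t\mapsto \elt + t(\elt'-\elt)$ with the locally Lipschitz map $\Loss$, we have
\[
\partial\bigl[\Loss\bigl(\elt+\,\cdot\,(\elt'-\elt)\bigr)\bigr](t_0) \subseteq \bigl\{\braket{w,\elt'-\elt}\,|\, w\in \partial \Loss(y_{t_0})\bigr\},
\]
where $y_{t_0} \defeq \elt+t_0(\elt'-\elt)\in [\elt,\elt']$. Adding the contribution of the smooth linear term $t(\Loss(\elt)-\Loss(\elt'))$, whose derivative is the constant $\Loss(\elt)-\Loss(\elt')$, gives
\[
\partial \phi(t_0) \subseteq \bigl\{\braket{w,\elt'-\elt}+\Loss(\elt)-\Loss(\elt')\,|\, w\in \partial \Loss(y_{t_0})\bigr\}.
\]
Combining this with $0\in \partial \phi(t_0)$ yields an element $w\in \partial \Loss(y_{t_0})$ such that $\braket{w,\elt'-\elt} = \Loss(\elt')-\Loss(\elt)$, which is precisely the claimed identity.

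The main obstacle is that the proof relies on two nontrivial black-box results from nonsmooth analysis, namely the generalized Fermat rule for locally Lipschitz functions (extrema are stationary in the sense that $0\in \partial \phi$) and the Clarke chain rule through an affine map; I would simply cite \cite{clarke1990optimization} for both, since they are standard. Beyond these invocations, everything else is formal bookkeeping, and no further regularity beyond local Lipschitzness of $\Loss$ (which is a standing hypothesis in \cref{sec:prelim_non_smooth_analysis}) is needed.
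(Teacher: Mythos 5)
Your proof is correct, but note that the paper does not actually prove \cref{prop:lebourg}: it quotes it as a well-known result and simply points to \cite{clarke1990optimization}. Your argument---the auxiliary function $\phi(t)=\Loss(\elt+t(\elt'-\elt))+t(\Loss(\elt)-\Loss(\elt'))$ with $\phi(0)=\phi(1)$, an interior extremum, the nonsmooth Fermat rule, and the Clarke sum and chain rules through the affine parametrization of the segment---is precisely Lebourg's original proof as presented in that reference (including the degenerate case where $\phi$ is constant, in which every interior point serves as $t_0$), so the two "black boxes" you cite are exactly the ingredients the cited source uses and nothing further is needed.
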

Let $t>0$ be lesser than $\frac{\epsilon}{\|g(\elt,\epsilon)\|}$, in order to ensure that $\elt'\defeq \elt - t g(\elt,\epsilon)$ is $\epsilon$-close to $\elt$. Then by the mean value theorem (and \cref{prop:generalized_gradient_along_strata}), we have that 
\[\Loss(\elt-tg(\elt,\epsilon))- \Loss(\elt)= -t \braket{w, g(\elt,\epsilon)} \]
for some $w\in \Goldstein(\elt)$. \Cref{eq:g_dot_product} yields
\begin{equation}
\label{eq:descent}
\forall t\leqslant \frac{\epsilon}{\|g(\elt,\epsilon)\|}, \qquad \Loss(\elt-tg(\elt,\epsilon))\leqslant  \Loss(\elt)-t\|g(\elt,\epsilon)\|^2,
\end{equation}
as desired.

In practical scenarios however, it is unlikely that the exact descent direction $-g(\elt,\epsilon)$ could be determined. Indeed, from \cref{eq:g_minimal_norm}, it would require the knowledge of the set $\Goldstein(\elt)$, which consists of infinitely many (limits of) gradients in an $\epsilon$-neighborhood of $\elt$. We now build, provided $\Loss$ is stratifiably smooth, a faithful approximation $\AppGoldstein(\elt)$ of $\Goldstein(\elt)$, by collecting gradient information in the strata that are $\epsilon$-close to~$\elt$.

For each top-dimensional stratum $\Stratum_i\in \Stratif_{\elt,\epsilon}$, let $\elt_i$ be an arbitrary point in $\overbar{\Stratum}_i\cap \overbar{B}(\elt,\epsilon)$. Define
\begin{equation}
\label{eq:approximate_goldstein}
\AppGoldstein(\elt)\defeq\co \big\{ \nabla_{\Stratum_{i}} \Loss(\elt_i) , \,   \Stratum_i\in \Stratif_{\elt,\epsilon}  \big\}.
\end{equation}
Of course, $\AppGoldstein(\elt)$ depends on the choice of each~$\elt_i\in  \Stratum_i$. But this will not matter for the rest of the analysis, as we will only rely on the following approximation result which holds for arbitrary choices of points~$\elt_i$:
\begin{proposition}
\label{prop:approximation_goldstein_generalized_gradient}
Let $\elt\in \R^n$ and $\epsilon>0$. Assume that $\Loss$ is stratifiably smooth. Let $\lipconst$ be a Lipschitz constant of the gradients $\nabla f_{i}$ restricted to $\overbar{B}(\elt,\epsilon)\cap \Stratum_i$, where $\Loss_i$ is some local $C^2$ extension of $\Loss_{|\Stratum_i}$, and $ \Stratum_i\in \Stratif_{\elt,\epsilon} $ is top dimensional. Then we have:
\[\AppGoldstein(\elt) \subseteq \Goldstein(\elt) \subseteq \AppGoldstein(\elt)+  \overbar{B}(0, 2\lipconst \epsilon). \]
In particular, $d_H(\AppGoldstein(\elt), \Goldstein(\elt))\leqslant 2\lipconst \epsilon$.
\end{proposition}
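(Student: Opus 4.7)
The plan is to exploit the characterization of $\Goldstein(\elt)$ given by the second part of \cref{prop:generalized_gradient_along_strata}: namely, $\Goldstein(\elt)$ is the closed convex hull of the limit gradients $\nabla_{\Stratum_i}\Loss(\elt')$ as $\elt'$ ranges in $\overbar{B}(\elt,\epsilon)$ and $\Stratum_i$ ranges over the top-dimensional strata in $\Stratif_{\elt'}\subseteq \Stratif_{\elt,\epsilon}$. The whole argument reduces to comparing each such generator with a generator of $\AppGoldstein(\elt)$, and then passing through convex combinations. The last inclusion $d_H(\AppGoldstein(\elt),\Goldstein(\elt))\le 2L\epsilon$ is then immediate from the two-sided inclusion.

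For the inclusion $\AppGoldstein(\elt)\subseteq \Goldstein(\elt)$, observe that by construction each generator $\nabla_{\Stratum_i}\Loss(\elt_i)$ of $\AppGoldstein(\elt)$ has $\elt_i\in \overbar{\Stratum_i}\cap \overbar{B}(\elt,\epsilon)$, so $\Stratum_i\in \Stratif_{\elt_i}$ and $|\elt_i-\elt|\le \epsilon$. Hence $\nabla_{\Stratum_i}\Loss(\elt_i)$ is one of the generators of $\Goldstein(\elt)$ described by \cref{prop:generalized_gradient_along_strata}, and the inclusion follows by taking closed convex hulls on both sides.

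For the reverse inclusion, fix a top-dimensional $\Stratum_i\in \Stratif_{\elt,\epsilon}$ and let $\Loss_i$ be the local $C^2$ extension provided by \cref{def:stratified_map}, so that $\nabla_{\Stratum_i}\Loss(y)=\nabla \Loss_i(y)$ for every $y\in \overbar{\Stratum_i}$ at which the limit gradient is defined. If $\elt'\in \overbar{B}(\elt,\epsilon)$ with $\Stratum_i\in \Stratif_{\elt'}$, then both $\elt'$ and the chosen representative $\elt_i$ lie in $\overbar{\Stratum_i}\cap \overbar{B}(\elt,\epsilon)$. Since $\nabla \Loss_i$ is $L$-Lipschitz on $\overbar{B}(\elt,\epsilon)\cap \Stratum_i$, it is $L$-Lipschitz on the closure of that set by continuity of $\nabla \Loss_i$, and in particular
\[
\bigl\|\nabla_{\Stratum_i}\Loss(\elt')-\nabla_{\Stratum_i}\Loss(\elt_i)\bigr\|\le L\,\|\elt'-\elt_i\|\le 2L\epsilon,
\]
using the triangle inequality $\|\elt'-\elt_i\|\le \|\elt'-\elt\|+\|\elt-\elt_i\|\le 2\epsilon$.

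To finish, take an arbitrary $g\in \Goldstein(\elt)$. By \cref{prop:generalized_gradient_along_strata} it is a limit of finite convex combinations $g^{(m)}=\sum_k \lambda_k^{(m)}\,\nabla_{\Stratum_{i_k}}\Loss(\elt'_k)$ with the $\elt'_k,\Stratum_{i_k}$ as above. Replacing each generator by $\nabla_{\Stratum_{i_k}}\Loss(\elt_{i_k})$ yields a point $\tilde g^{(m)}\in \AppGoldstein(\elt)$ with $\|g^{(m)}-\tilde g^{(m)}\|\le 2L\epsilon$ by the previous bound and convexity of the norm. Hence every $g^{(m)}$ lies in the closed set $\AppGoldstein(\elt)+\overbar{B}(0,2L\epsilon)$, and so does its limit $g$. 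The main subtlety in the argument, and the only nontrivial point, is the extension of the Lipschitz estimate from the open stratum $\Stratum_i$ to its closure, which is handled by the continuity of $\nabla \Loss_i$ on a neighborhood of $\Stratum_i$ provided by \cref{def:stratified_map}.
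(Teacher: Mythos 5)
Your proof is correct and follows essentially the same route as the paper's: both use the Goldstein-subgradient description from \cref{prop:generalized_gradient_along_strata}, obtain the first inclusion directly from the generators, and get the second by the Lipschitz bound $\|\nabla_{\Stratum_i}\Loss(\elt')-\nabla_{\Stratum_i}\Loss(\elt_i)\|\leq 2\lipconst\epsilon$ together with the convexity and closedness of $\AppGoldstein(\elt)+\overbar{B}(0,2\lipconst\epsilon)$. Your explicit treatment of the extension of the Lipschitz estimate to the closure of the stratum and of the passage through convex combinations merely spells out steps the paper leaves implicit.
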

Note that, since the $\Loss_i$ are of class $C^2$, their gradients are locally Lipschitz, hence by compactness of $\overbar{B}(\elt,\epsilon)$, the existence of the Lipschitz constant $\lipconst$ above is always guaranteed.
\begin{proof}
From \cref{prop:generalized_gradient_along_strata}, we have
\[\Goldstein(\elt)= \co \big\{ \nabla_{\Stratum}\Loss(\elt') \,| \, |\elt'-\elt|\leqslant \epsilon, \, \Stratum\in \Stratif_{\elt'}\subseteq \Stratif_{\elt,\epsilon}\text{ is of dimension $n$ }  \big\}.\]
This yields the inclusion $\AppGoldstein(\elt) \subseteq \Goldstein(\elt)$. Now, let $\elt'\in \R^n$, $|\elt'-\elt|\leqslant \epsilon$, and let $\Stratum_i \in \Stratif_{\elt'}\subseteq \Stratif_{\elt,\epsilon}$ be a top-dimensional stratum touching $\elt'$. 
Based on how $\elt_i$ is defined in \cref{eq:approximate_goldstein}, we have that~$\elt'$ and $\elt_i$ both belong to $\overbar{B}(\elt,\epsilon)$, and they both belong to the stratum $\Stratum_i$. Therefore, $|\nabla_{\Stratum_i}\Loss(\elt')-\nabla_{\Stratum_i}\Loss(\elt_i)|\leqslant 2\lipconst  \epsilon$, and so $\nabla_{\Stratum_i}\Loss(\elt')\in \AppGoldstein(\elt)+  \overbar{B}(0, 2\lipconst \epsilon)$. The result follows from the fact that $\AppGoldstein(\elt)+  \overbar{B}(0, 2\lipconst \epsilon)$ is convex and closed.
\end{proof}
Recall from \cref{eq:descent} that the (opposite to the) descent direction $-g(\elt,\epsilon)$ is built as the projection of the origin onto $\Goldstein(\elt)$. Similarly, we define our approximate descent direction as $-\tilde{g}(\elt,\epsilon)$, where $\tilde{g}(\elt,\epsilon)$ is the projection of the origin onto the convex closed set $\AppGoldstein(\elt)$:
\begin{equation}
\label{eq:approx_g_minimal_norm}
\tilde{g}(\elt,\epsilon) = \argmin \big\{\|\tilde{g}\|, \, \tilde{g}\in \AppGoldstein (\elt) \big\}.
\end{equation}
We show that this choice yields a direction of decrease of $\Loss$, in a sense similar to \cref{eq:descent}.
\begin{proposition}
\label{prop:approx_descent}
Let~$\Loss$ be stratifiably smooth, and let~$\elt$ be a non-stationary point. Let $0<\beta<1$, and $\epsilon_0>0$. Denote by~$\lipconst$ a Lipschitz constant for all gradients of the restrictions~$\Loss_i$ to the ball $\overbar{B}(\elt,\epsilon_0)$ (as in \cref{prop:approximation_goldstein_generalized_gradient}). Then:
\begin{itemize}
\item[{\bf (i)}] For $0<\epsilon\leqslant \epsilon_0$ small enough we have~$\epsilon\leqslant \frac{1-\beta}{2\lipconst}\|\tilde{g}(\elt,\epsilon)\|$; and 
\item[{\bf (ii)}] For such~$\epsilon$, we have~$\forall t \leqslant \frac{\epsilon}{\|\tilde{g}(\elt,\epsilon)\|}, \, \Loss(\elt-t\tilde{g}(\elt,\epsilon))\leqslant \Loss(\elt)-\beta t\|\tilde{g}(\elt,\epsilon)\|^2$.
\end{itemize} 
\end{proposition}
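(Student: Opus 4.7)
The plan is to treat the two parts separately, with part (ii) being the core computation and part (i) a limiting argument to justify that the small-$\epsilon$ regime needed by (ii) is nonempty.

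For (ii), I will mimic the reasoning used after \Cref{eq:g_dot_product} to derive \Cref{eq:descent}, but with $\tilde{g}(\elt,\epsilon)$ and $\AppGoldstein(\elt)$ replacing $g(\elt,\epsilon)$ and $\Goldstein(\elt)$, paying the cost of the Hausdorff approximation quantified in \Cref{prop:approximation_goldstein_generalized_gradient}. Fix $t\leqslant \epsilon/\|\tilde{g}(\elt,\epsilon)\|$ and set $\elt'\defeq \elt - t\tilde{g}(\elt,\epsilon)$, so that $|\elt'-\elt|\leqslant \epsilon$. Applying the Lebourg mean value theorem (\Cref{prop:lebourg}) on the segment $[\elt,\elt']$ yields $y\in[\elt,\elt']$ and $w\in \partial \Loss(y)\subseteq \Goldstein(\elt)$ with $\Loss(\elt')-\Loss(\elt)=-t\braket{w,\tilde{g}(\elt,\epsilon)}$. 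By \Cref{prop:approximation_goldstein_generalized_gradient} I can decompose $w=\tilde{w}+r$ with $\tilde{w}\in \AppGoldstein(\elt)$ and $\|r\|\leqslant 2\lipconst\epsilon$. Since $\tilde{g}(\elt,\epsilon)$ is the projection of the origin onto the convex closed set $\AppGoldstein(\elt)$, the standard obtuse-angle inequality gives $\braket{\tilde{g}(\elt,\epsilon),\tilde{w}}\geqslant \|\tilde{g}(\elt,\epsilon)\|^2$, and Cauchy--Schwarz gives $\braket{\tilde{g}(\elt,\epsilon),r}\geqslant -2\lipconst\epsilon\,\|\tilde{g}(\elt,\epsilon)\|$. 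Combined, $\braket{w,\tilde{g}(\elt,\epsilon)}\geqslant \|\tilde{g}(\elt,\epsilon)\|\bigl(\|\tilde{g}(\elt,\epsilon)\|-2\lipconst\epsilon\bigr)$, and invoking (i) gives $\|\tilde{g}(\elt,\epsilon)\|-2\lipconst\epsilon\geqslant \beta\|\tilde{g}(\elt,\epsilon)\|$, hence the claimed descent inequality.

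For (i), the task is to show $\|\tilde{g}(\elt,\epsilon)\|$ stays bounded away from zero as $\epsilon\to 0$, so that the required inequality $2\lipconst\epsilon\leqslant(1-\beta)\|\tilde{g}(\elt,\epsilon)\|$ eventually holds. I will argue that $\AppGoldstein(\elt)$ converges in Hausdorff distance to $\partial \Loss(\elt)$ as $\epsilon\to 0$. Indeed, for $\epsilon$ small enough $\Stratif_{\elt,\epsilon}=\Stratif_\elt$ by local finiteness, each chosen sample $\elt_i\in\overbar{\Stratum}_i\cap \overbar{B}(\elt,\epsilon)$ is within $\epsilon$ of $\elt$, and the $C^2$ extensions $\Loss_i$ have locally Lipschitz gradients, so $\nabla_{\Stratum_i}\Loss(\elt_i)\to \nabla_{\Stratum_i}\Loss(\elt)$ uniformly in the choice of $\elt_i$; together with \Cref{prop:generalized_gradient_along_strata} this yields $d_H(\AppGoldstein(\elt),\partial \Loss(\elt))\to 0$. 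Continuity of the projection of the origin then gives $\tilde{g}(\elt,\epsilon)\to g(\elt,0)$, and since $\elt$ is assumed non-stationary, $\|g(\elt,0)\|>0$. Thus $\|\tilde{g}(\elt,\epsilon)\|$ is bounded below by, say, $\tfrac{1}{2}\|g(\elt,0)\|$ for all sufficiently small $\epsilon$, while $2\lipconst\epsilon\to 0$, which establishes (i).

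The subtlest step is the lower-bound argument in (i): it requires justifying, carefully and independently of the arbitrary choices of sample points $\elt_i$, that $\AppGoldstein(\elt)$ approaches $\partial \Loss(\elt)$. The key enabler is the $C^2$ regularity of the stratum-wise extensions from \Cref{def:stratified_map}, which forces the limit gradients to vary continuously along each adjacent stratum; without it, one could not control the perturbation $\nabla_{\Stratum_i}\Loss(\elt_i)-\nabla_{\Stratum_i}\Loss(\elt)$ uniformly as $\epsilon$ shrinks.
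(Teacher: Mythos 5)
Your proof is correct. Part (ii) is essentially identical to the paper's argument: Lebourg's mean value theorem, the $2\lipconst\epsilon$-approximation of $w\in\Goldstein(\elt)$ by some $\tilde{w}\in\AppGoldstein(\elt)$ from \cref{prop:approximation_goldstein_generalized_gradient}, the obtuse-angle inequality for the projection, and Cauchy--Schwarz, assembled in the same way. Part (i) takes a slightly different route: you prove that $\AppGoldstein(\elt)$ converges in Hausdorff distance to $\partial\Loss(\elt)$ (using $\Stratif_{\elt,\epsilon}=\Stratif_\elt$ for small $\epsilon$ and the Lipschitz control $|\nabla_{\Stratum_i}\Loss(\elt_i)-\nabla_{\Stratum_i}\Loss(\elt)|\leqslant\lipconst\epsilon$ uniformly over the sample choices), then pass to the limit of the projections. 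The paper instead sidesteps any dependence on the samples by using only the one-sided inclusion $\AppGoldstein(\elt)\subseteq\Goldstein(\elt)$, which gives $\|\tilde{g}(\elt,\epsilon)\|\geqslant\|g(\elt,\epsilon)\|$ for free, and then sandwiches $\Goldstein(\elt)$ between $\partial\Loss(\elt)$ and $\partial\Loss(\elt)+\overbar{B}(0,2\lipconst\epsilon)$ to get $\|g(\elt,\epsilon)\|\to\|g(\elt,0)\|>0$. Your version requires the extra (correct, but not strictly needed) observation that the samples' gradients converge uniformly; the paper's is marginally leaner. One small note: you only need convergence of the distances $d(0,\AppGoldstein(\elt))$, which is immediate from Hausdorff convergence, rather than convergence of the projection points themselves, so you can weaken that step if you wish.
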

\begin{proof}
Saying that~$\elt$ is non-stationary is equivalent to the inequality $\|g(\elt,0)\|>0$. We show that the map $\epsilon \in \R^+ \mapsto \|g(\elt,\epsilon)\|\in \R^+$, which is non-increasing, is continuous at~$0^+$. Let~$\epsilon$ be small enough such that the sets of strata incident to~$\elt$ are the same that meet with the~$\epsilon$-ball around~$\elt$, i.e., $\Stratif_{\elt,\epsilon}=\Stratif_{\elt}$. Such an~$\epsilon$ exists since there are finitely many strata, which are closed sets, that meet with a sufficiently small neighborhood of~$\elt$. Of course, all smaller values of~$\epsilon$ enjoy the same property. By \cref{prop:generalized_gradient_along_strata}, we then have the nesting
\[\partial f (\elt) \subseteq \Goldstein(\elt) \subseteq \partial f (\elt) +\overbar{B}(0, 2\lipconst \epsilon),\]
where $\lipconst$ is a Lipschitz constant for the gradients in neighboring strata. In turn,~$0\leqslant \|g(\elt,0)\|-\|g(\elt,\epsilon)\|\leqslant 2\lipconst \epsilon$. In particular,~$\|g(\elt,\epsilon)\|\rightarrow \|g(\elt,0)\|>0$ as~$\epsilon$ goes to~$0$, hence~$\epsilon=o(\|g(\elt,\epsilon)\|)$. Besides, the inclusion $\AppGoldstein(\elt)\subseteq \Goldstein(\elt)$ (\cref{prop:approximation_goldstein_generalized_gradient}) implies that~$\|\tilde{g}(\elt,\epsilon)\|\geqslant \|g(\elt,\epsilon)\|>0$. This yields $\epsilon=o(\|\tilde{g}(\elt,\epsilon)\|)$ and so item~{\bf (i)} is proved.

We now assume that~$\epsilon$ satisfies the inequality of item~{\bf (i)}, and let~$0\leqslant t \leqslant \frac{\epsilon}{\|\tilde{g}(\elt,\epsilon)\|}$. By the Lebourg mean value theorem, there exists a $y\in [\elt,\elt-t\tilde{g}(\elt,\epsilon)]$ and some~$w\in \partial \Loss(y)$ such that: 
\[\Loss(\elt-t\tilde{g}(\elt,\epsilon))- \Loss(\elt)=t \braket{w,-\tilde{g}(\elt,\epsilon)}. \]
Since $t\leqslant \frac{\epsilon}{\|\tilde{g}(\elt,\epsilon)\|}$,~$y$ is at distance no greater than $\epsilon$ from $\elt$. In particular, $w$ belongs to $\Goldstein(\elt)$. From \cref{prop:approximation_goldstein_generalized_gradient}, there exists some $\tilde{w}\in \AppGoldstein(\elt)$ at distance no greater than $2\lipconst \epsilon$ from $w$. We then rewrite:
\begin{equation}
\label{eq:proposition_approx_descent_1}
\Loss(\elt-t\tilde{g}(\elt,\epsilon))- \Loss(\elt)=t \braket{w-\tilde{w},-\tilde{g}(\elt,\epsilon)} + t \braket{\tilde{w},-\tilde{g}(\elt,\epsilon)}. 
\end{equation}
On the one hand, by the Cauchy-Schwarz inequality:
\begin{equation}
\label{eq:proposition_approx_descent_2}
 \braket{w-\tilde{w},-\tilde{g}(\elt,\epsilon)} \leqslant |w-\tilde{w}|\cdot\|\tilde{g}(\elt,\epsilon)\|\leqslant 2 \lipconst \epsilon \|\tilde{g}(\elt,\epsilon)\|\leqslant  (1-\beta) \|\tilde{g}(\elt,\epsilon)\|^2,
\end{equation}
where the last inequality relies on the assumption that $\epsilon\leqslant \frac{1-\beta}{2\lipconst}\|\tilde{g}(\elt,\epsilon)\|$. On the other hand, since $\tilde{g}(\elt,\epsilon)$ is the projection of the origin onto $\AppGoldstein(\elt)$, we obtain $\braket{\tilde{g}(\elt,\epsilon)- \tilde{w}, \tilde{g}(\elt,\epsilon)} \leqslant 0$, or equivalently:
\begin{equation}
\label{eq:proposition_approx_descent_3}
\braket{\tilde{w},-\tilde{g}(\elt,\epsilon)} \leqslant -\|\tilde{g}(\elt,\epsilon)\|^2.
\end{equation}
Plugging the inequalities of \Cref{eq:proposition_approx_descent_2,eq:proposition_approx_descent_3} into \cref{eq:proposition_approx_descent_1} proves item {\bf (ii)}.
\end{proof}
\section{Stratified Gradient Sampling (SGS)}
\label{sec:algo_stratified_descent}
In this section we develop a gradient descent algorithm for the optimization of stratifiably smooth functions, and then we detail its convergence properties. We require  that the objective function~$\Loss:\R^n\rightarrow \R$ has the following properties:
\begin{itemize}
    \item {\bf(Proper)}: {\em $\Loss$ has compact sublevel sets.}
    \item {\bf(Stratifiably smooth)}: {\em $\Loss$ is stratifiably smooth, and for each iterate~$\elt$ and~$\epsilon\geq0$ we have an oracle~$\SampleOracle(\elt,\epsilon)$ that samples one $\epsilon$-close element~$\elt'$ in each $\epsilon$-close top-dimensional stratum~$\Stratum'$.}
    \item{\bf (Differentiability check)} {\em We have an oracle~$\DiffOracle(\elt)$ checking whether an iterate~$\elt\in \R^n$ belongs to the set~$\DiffSet \subset \R^n$ over which~$\Loss$ is differentiable.}
\end{itemize}
That~$\Loss$ is a proper map is also needed in the original GS algorithm~\cite{burke2005robust}, but is a condition that can be omitted as in~\cite{kiwiel2007convergence} to allow the values~$\Loss(\elt_k)$ to decrease to~$-\infty$. In our case we stick to this assumption because we need the gradient of~$\Loss$ (whenever defined) to be Lipschitz on sublevel sets. 

Similarly, the ability to check that an iterate~$\elt_k$ belongs to~$\DiffSet$ is standard in the GS methodology. We use it to make sure that~$\Loss$ is differentiable at each iterate~$\elt_k$. For this, we call a subroutine~$\AlgoMakeDiff$ which slightly perturbs the iterate~$\elt_k$ to achieve differentiability and to maintain a descent condition. Note that these considerations are mainly theoretical because generically the iterates~$\elt_k$ are points of differentiability, hence~$\AlgoMakeDiff$ is unlikely to change anything. 

The last requirement that~$\Loss$ is stratifiably smooth replaces the classical weaker assumption used in the GS algorithm that~$\Loss$ is locally Lipschitz and that the set~$\DiffSet$ where~$\Loss$ is differentiable is open and dense. There are many possible ways to design the oracle~$\SampleOracle(\elt,\epsilon)$: for instance the sampling could depend upon arbitrary probability measures on each stratum, or it could be set by deterministic rules depending on the input~$(\elt,\epsilon)$ as will be the case for the persistence map in \cref{sec:persistence_optim}. However our algorithm and its convergence properties are oblivious to these degrees of freedom, as by \cref{sec:approx_goldstein_descent_direction} any sampling allows us to approximate the Goldstein subgradient~$\Goldstein(\elt_k)$ using finitely many neighbouring points to compute~$\AppGoldstein(\elt_k)$. In turn we have an approximate descent direction~$g_k$ which can be used to produce the subsequent iterate~$\elt_{k+1}\defeq\elt_k-t_kg_k$ as in the classical smooth gradient descent.
\subsection{The algorithm}
\label{sec:algorithm}
The details of the main algorithm~$\AlgoGradientDescent$ are given in \cref{alg:minimization}.

The algorithm~$\AlgoGradientDescent$ calls the method~$\AlgoUpdateStep$ of~\cref{alg:update step} as a subroutine to compute the right descent direction~$g_k$ and the right step size~$t_k$. Essentially, this method progressively reduces the exploration radius~$\epsilon_k$ of the ball where we compute the descent direction~$g_k\defeq\tilde{g}(\elt_k,\epsilon_k)$ until the criteria of \cref{prop:approx_descent} ensuring that the loss sufficiently decreases along~$g_k$ are met.

Given the iterate~$\elt_k$ and the radius~$\epsilon_k$, the calculation of~$g_k\defeq\tilde{g}(\elt_k,\epsilon_k)$ is done by the subroutine $\AlgoGeneralizedGradient$ in \cref{alg:generalized gradient}: points~$\elt'$ in neighboring strata that intersect the ball~$B(\elt_k,\epsilon_k)$ are sampled using~$\SampleOracle(\elt_k,\epsilon_k)$ to compute the approximate Goldstein gradient and in turn the descent direction~$g_k$. 

Much like the classical GS algorithm, our method 
behaves like the well-known smooth gradient descent where the gradient is replaced with a descent direction computed from gradients in neighboring strata. A key difference however is that, in order to find the right exploration radius~$\epsilon_k$ and step size~$t_k$, the $\AlgoUpdateStep$ needs to maintain a constant~$\ControlConst_k$ to approximate the ratio~$\frac{1-\beta}{2\lipconst}$ of \cref{prop:approx_descent}, as no Lipschitz constant~$\lipconst$ may be explicitly available.

To this effect, $\AlgoUpdateStep$ maintains a relative balance between the exploration radius~$\epsilon_k$ and the norm of the descent direction~$g_k$, controlled by~$\ControlConst_k$, i.e.,~$\epsilon_k \simeq \ControlConst_k \|g_k\|$. As we further maintain~$\ControlConst_k \simeq \frac{1-\beta}{2\lipconst}$, we know that the convergence properties of~$\epsilon_k$ and~$g_k$ are closely related. Thus, the utility of this controlling constant is mainly theoretical, to ensure convergence of the iterates~$\elt_k$ towards stationary points in \cref{thm:convergence_algo}. In practice, we start with a large initial constant~$\ControlConst_0$, and decrease it on line 11 of \cref{alg:update step} whenever it violates a property of the target constant~$\frac{1-\beta}{2\lipconst}$ given by \cref{prop:approx_descent}. 
\begin{algorithm}
\caption{$\AlgoGradientDescent(\Loss,\elt_0,\epsilon,\eta,\ControlConst_0,\beta,\gamma)$}
\label{alg:minimization}
\begin{algorithmic}[1]
\REQUIRE{Loss function~$\Loss$, initial iterate~$\elt_0\in \DiffSet$, exploration radius $\epsilon>0$, initial constant~$\ControlConst_0>0$ controlling exploration radius, critical distance to origin $\eta\geqslant 0$, descent rate $0<\beta<1$, step size decay rate~$0<\gamma<1$}
\STATE {$k\gets 0$}
\REPEAT
\STATE{$(t_k,g_k,\ControlConst_{k+1})\gets \AlgoUpdateStep(\Loss,\elt_k,\epsilon,\eta,\ControlConst_k, \beta, \gamma)$ via \cref{alg:update step}}
\STATE{$\elt_{k+1}\gets \elt_k-t_k g_k$}
\STATE{$\elt_{k+1}\gets \AlgoMakeDiff(\elt_{k+1},\elt_{k},t_k,g_k)$}
\STATE{$k \gets k+1$}
\UNTIL{$\|g_k\| \leq \eta$}
\RETURN{$\elt_k$}
\end{algorithmic}
\end{algorithm}

\begin{algorithm}
\caption{$\AlgoUpdateStep(\Loss,\elt_k,\epsilon,\eta,\ControlConst_k, \beta, \gamma)$}
\label{alg:update step}
\begin{algorithmic}[1]
%
%
\STATE {$\epsilon_k \gets \epsilon$ and~$\ControlConst_{k+1}\gets \ControlConst_{k}$}
\REPEAT
\STATE{$g_k\gets \AlgoGeneralizedGradient(\elt_k,\epsilon_k)$ via \cref{alg:generalized gradient} }
\IF{$\|g_k\|\leqslant \eta$ } 
\STATE{Break, \textbf{return} $t_k=0$, $g_k$ and~$\ControlConst_{k+1}$  {\em \hspace{1.8cm}  \small{\# Set $\eta=0$ to reach an~$\epsilon$-stationary point}}}
\ENDIF
\STATE{$t_k \gets \frac{\epsilon_k}{\|g_k\|}$}     {\em \hspace{8.5cm} \small{\# Candidate of update step}}
\WHILE{$\Loss(\elt_k-t_k g_k) > \Loss(\elt_k)-\beta t_k\|g_k\|^2$ and $\epsilon_k\leqslant \ControlConst_{k+1}\|g_k\|$} 
\STATE{$\ControlConst_{k+1} \gets \gamma \ControlConst_{k+1}$} {\em \hspace{1cm} \small{\# Once $\ControlConst_{k+1}\leq \frac{1-\beta}{2\lipconst}$, this loop never occurs by (ii) of~\cref{prop:approx_descent}}}
\ENDWHILE
\IF{$\Loss(\elt_k-t_kg_k)> \Loss(\elt_k)-\beta t_k\|g_k\|^2$ or $\epsilon_k> \ControlConst_{k+1} \|g_k\|$}
\STATE{$\epsilon_k \gets \gamma \epsilon_k$} {\em \hspace{4.3cm} \small{\# Reduce~$\epsilon_k$ to satisfy criterion (i) of \cref{prop:approx_descent}}}
\ENDIF
\UNTIL{$\Loss(\elt_k-t_kg_k)< \Loss(\elt_k)-\beta t_k\|g_k\|^2$ and $\epsilon_k< \ControlConst_{k+1} \|g_k\|$}
\RETURN{$t_k$, $g_k$ and~$\ControlConst_{k+1}$ }

\end{algorithmic}
\end{algorithm}

\begin{algorithm}
\caption{$\AlgoGeneralizedGradient(\elt_k,\epsilon_k)$}
\label{alg:generalized gradient}
\begin{algorithmic}[1]
\STATE {$G_k\gets \{\nabla \Loss(\elt_k)\}$} {\em \hspace{1cm} \small{\# Eventually~$G_k$ will be some approximate Goldstein subgradient $\tilde{\partial} f_{\epsilon_k}(\elt_k)$}}

\STATE{$\{\elt_k^1,\cdots,\elt_k^{m}\}\gets \SampleOracle(\elt_k,\epsilon_k)$}  {\em  \hspace{1cm}\small \# $\epsilon$-close samples from $\epsilon$-close top dim strata}
\FOR{$1\leq l \leq m$} 
%
%
\STATE{$G_k\gets G_k \cup \{\nabla \Loss(\elt_k^l)\}$ } {\em \hspace{5cm} \small{\# Add gradients from remote strata }}
\ENDFOR
\STATE{Solve the quadratic minimization problem~$g_k=\argmin \{\|g\|^2, \, \, g\in \co(G_k)\}$}
\RETURN{$g_k$} {\em \hspace{3cm} \small{\# $g_k=\tilde{g}(\elt_k,\epsilon_k)$ is the approximate steepest descent direction}}
\end{algorithmic}
\end{algorithm}
\begin{algorithm}
\caption{$\AlgoMakeDiff(\elt_{k+1},\elt_{k},t_k,g_k)$}
\label{alg:make_differentiable}
\begin{algorithmic}[1]
\STATE{$r\gets t_k \|g_k\|$}
\WHILE{$\elt_{k+1}\notin \DiffSet$ or $\Loss(\elt_{k+1})> \Loss(\elt_k)-\beta t_k\|g_k\|^2$}
\STATE{Replace $\elt_{k+1}$ with a sample in $B(\elt_k-t_kg_k,r)$}
\STATE{$r \gets \frac{r}{2}$}
\ENDWHILE
\RETURN{$\elt_{k+1}$}
\end{algorithmic}
\end{algorithm}
\begin{remark}
\label{rke:update_lipschitz_known}
Assume that we dispose of a common Lipschitz constant~$\lipconst$ for all gradients~$\nabla \Loss_i$ in the~$\epsilon$-neighborhood of the current iterate~$\elt_k$, recall that~$\Loss_i$ is any~$C^2$ extension of the restriction~$f_{|\Stratum_i}$ to the neighboring top-dimensional stratum~$\Stratum_i\in \Stratif_{\elt,\epsilon}$. Then we can simplify \cref{alg:update step} by decreasing the exploration radius~$\epsilon_k$ progressively until~$\epsilon_k\leq \frac{(1-\beta)}{2\lipconst} \|\tilde{g}(\elt_k,\epsilon_k)\|$ as done in \cref{alg:update step with lipschitz}: This ensures by \cref{prop:approx_descent} that the resulting update step satisfies the descent criterion~$\Loss(\elt_k-t_k\tilde{g}(\elt_k,\epsilon_k)) < \Loss(\elt_k)-\beta t_k\|\tilde{g}(\elt_k,\epsilon_k)\|^2$. In particular the parameter~$\ControlConst_k$ is no longer needed, and the theoretical guarantees of the simplified algorithm are unchanged. Note that for objective functions from TDA (see \cref{sec:persistence_optim}), the stability theorems (e.g. from~\cite{cohen2007stability}) often provide global Lipschitz constants, enabling the use of the simplified update step described in \cref{alg:update step with lipschitz}.  
\begin{algorithm}
\caption{$\AlgoUpdateStepSimple(\Loss,\elt_k,\epsilon,\eta, \beta,\gamma)$}
\label{alg:update step with lipschitz}
\begin{algorithmic}[1]
\STATE {$\epsilon_k \gets \epsilon$ and $g_k\gets \AlgoGeneralizedGradient(\elt_k,\epsilon_k)$ via \cref{alg:generalized gradient}}
\REPEAT
\IF{$\|g_k\|\leqslant \eta$ } 
\STATE{Break, \textbf{return} $t_k=0$ and $g_k$  {\em \hspace{1.8cm}  \small{\# Set $\eta=0$ to reach an~$\epsilon$-stationary point}}}
\ENDIF
\STATE {$\epsilon_k \gets \gamma \epsilon_k$ and $g_k\gets \AlgoGeneralizedGradient(\elt_k,\epsilon_k)$ via \cref{alg:generalized gradient}}
\UNTIL{$\epsilon_k\leq \frac{1-\beta}{2\lipconst} \|g_k\|$}
\RETURN{$t_k\defeq \frac{\epsilon_k}{\|g_k\|}$ and $g_k$ }
\end{algorithmic}
\end{algorithm}
\end{remark}

\begin{remark}
\label{rke:quick_update}
In the situation of \cref{rke:update_lipschitz_known}, let us further assume that~$\epsilon \in \R_+ \mapsto \|\tilde{g}(\elt,\epsilon)\|\in \R_+$ is non-increasing. 
This monotonicity property mimics the fact that~$\epsilon \in \R_+ \mapsto \|g(\elt,\epsilon)\| \in \R_+$ is non-increasing, since increasing~$\epsilon$ grows the Goldstein generalized gradient~$\Goldstein(\elt)$, of which~$g(\elt,\epsilon)$ is the element with minimal norm. 
If the initial exploration radius~$\epsilon$ does not satisfy the termination criterion (Line~8),~$\epsilon \leq \frac{1-\beta}{2\lipconst} \|\tilde{g}(x_k, \epsilon)\|$, then one can set~$\epsilon_k \defeq \frac{1-\beta}{2\lipconst} \|\tilde{g}(\elt_k,\epsilon)\| \leq \epsilon$, yielding $\epsilon_k \leq \frac{1-\beta}{2\lipconst} \|\tilde{g}(\elt_k,\epsilon_k)\|$. 
This way \cref{alg:update step with lipschitz} is further simplified: in constant time we find a~$\epsilon_k$ that yields the descent criterion~$\Loss(\elt_k-t_k \tilde{g}(\elt_k,\epsilon_k)) < \Loss(\elt_k)-\beta t_k\|\tilde{g}(\elt_k,\epsilon_k)\|^2$, and the parameter~$\gamma$ is no longer needed. 
A careful reading of the proofs provided in the following section yields that the convergence rate (\cref{thm:rate_convergence_algo}) of the resulting algorithm is unchanged, however the asymptotic convergence (\cref{thm:convergence_algo}), case $\bf{(b)}$, needs to be weakened: converging subsequences converge to $\epsilon$-stationary points instead of stationary points. We omit details for the sake of concision. 
\end{remark}
\subsection{Convergence}
\label{sec:convergence}
We show convergence of \cref{alg:minimization} towards stationary points in \cref{thm:convergence_algo}. Finally, \cref{thm:rate_convergence_algo} provides a non-asymptotic sub linear convergence rate, which is by no mean tight yet gives a first estimate of the number of iterations required in order to reach an approximate stationary point. 
\begin{theorem}
\label{thm:convergence_algo}
If~$\eta=0$, then almost surely the algorithm either $\bf{(a)}$ converges in finitely many iterations to an $\epsilon$-stationary point, or $\bf{(b)}$ produces a bounded sequence of iterates~$(\elt_k)_k$ whose converging subsequences all converge to stationary points. 
\end{theorem}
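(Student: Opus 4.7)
I would split according to the two announced outcomes. For case $\bf{(a)}$, the repeat-until in \cref{alg:minimization} exits at some step $k$ with $\|g_k\|\leqslant \eta = 0$, and by construction $g_k = \tilde g(\elt_k,\epsilon_k)\in \AppGoldstein(\elt_k)\subseteq \partial_{\epsilon_k}\Loss(\elt_k)$ by \cref{prop:approximation_goldstein_generalized_gradient}; monotonicity of the Goldstein subgradient in $\epsilon$ (and $\epsilon_k\leqslant \epsilon$) then yields $0\in \Goldstein(\elt_k)$, so $\elt_k$ is $\epsilon$-stationary. The ``almost surely'' qualifier is needed only for $\AlgoMakeDiff$: the complement of $\DiffSet$ is contained in the finite union of strata of codimension $\geqslant 1$, hence has Lebesgue measure zero, so a random sample in $B(\elt_k-t_kg_k,r)$ almost surely lies in $\DiffSet$, and halving $r$ eventually preserves the descent inequality.

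For case $\bf{(b)}$, assuming the algorithm runs forever, I would first gather three ingredients from the algorithm's bookkeeping. (1)~Line~14 of \cref{alg:update step} combined with $\AlgoMakeDiff$ gives the descent $\Loss(\elt_{k+1})\leqslant \Loss(\elt_k)-\beta t_k\|g_k\|^2$, which traps $(\elt_k)$ in the compact sublevel set $\{\Loss\leqslant\Loss(\elt_0)\}$, yielding boundedness and a uniform Lipschitz constant $\lipconst$ for the gradients $\nabla \Loss_i$ on the $\epsilon$-thickening of this compact. (2)~Telescoping this inequality provides the summability $\sum_k t_k\|g_k\|^2<\infty$. (3)~The control constant stabilizes: the inner while loop (Line~8) only divides $\ControlConst_k$ by $\gamma$ while descent fails \emph{and} $\epsilon_k\leqslant \ControlConst_k\|g_k\|$, but by \cref{prop:approx_descent}~(ii) descent must hold once $\ControlConst_k\leqslant \tfrac{1-\beta}{2\lipconst}$; hence $\ControlConst_k$ is reduced only finitely many times and settles at some $\ControlConst_\infty\geqslant \gamma \tfrac{1-\beta}{2\lipconst}>0$.

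Given a cluster point $\bar\elt$ of $(\elt_k)$ along a subsequence $\elt_{k_j}\to\bar\elt$, I would argue by contradiction that $0\in \partial \Loss(\bar\elt)$. If not, upper semicontinuity of the Goldstein subgradient as $(\elt,\epsilon')\to(\bar\elt,0^+)$ produces constants $\delta,\rho,\epsilon^*>0$ with $\partial_{\epsilon'}\Loss(\elt)\cap B(0,\delta)=\emptyset$ for all $\elt\in B(\bar\elt,\rho)$ and $\epsilon'\leqslant \epsilon^*$. Since $g_{k_j}=\tilde g(\elt_{k_j},\epsilon_{k_j})\in \partial_{\epsilon_{k_j}}\Loss(\elt_{k_j})$ by \cref{prop:approximation_goldstein_generalized_gradient}, this forces $\|g_{k_j}\|\geqslant \delta$ for large $j$. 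Summability from~(2) then gives $t_{k_j}\to 0$, hence $\epsilon_{k_j}=t_{k_j}\|g_{k_j}\|\to 0$.

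The contradiction will come from tracing back how $\AlgoUpdateStep$ could have produced such a small exploration radius. For $j$ large enough that $\ControlConst_{k_j+1}=\ControlConst_\infty$ and $\epsilon_{k_j}/\gamma\leqslant \epsilon^*$, at least one reduction on Line~12 must have occurred, and just before it the value $\epsilon_{k_j}/\gamma$ violated the termination test of Line~14. Either the norm test failed, giving directly $\epsilon_{k_j} > \gamma \ControlConst_\infty \|\tilde g(\elt_{k_j},\epsilon_{k_j}/\gamma)\|$; or the descent test failed, in which case \cref{prop:approx_descent}~(ii) forces $\epsilon_{k_j}/\gamma > \tfrac{1-\beta}{2\lipconst}\|\tilde g(\elt_{k_j},\epsilon_{k_j}/\gamma)\|$. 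The same lower bound derived above applies at radius $\epsilon_{k_j}/\gamma\leqslant\epsilon^*$, namely $\|\tilde g(\elt_{k_j},\epsilon_{k_j}/\gamma)\|\geqslant \delta$, so $\epsilon_{k_j}\geqslant \gamma\min(\ControlConst_\infty,\tfrac{1-\beta}{2\lipconst})\delta>0$, contradicting $\epsilon_{k_j}\to 0$. The main obstacle I anticipate is precisely this last bookkeeping step, threading together the stabilized constant $\ControlConst_\infty$, the uniform lower bound $\delta$ on approximate gradients near a non-stationary cluster point, and the successive halvings of $\epsilon_k$ inside $\AlgoUpdateStep$; once these three threads are in hand, stationarity of every cluster point in case $\bf{(b)}$ follows.
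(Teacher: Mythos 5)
Your case $\bf{(a)}$ and your treatment of the almost-sure termination of $\AlgoMakeDiff$ match the paper. For case $\bf{(b)}$ your three bookkeeping ingredients (descent and boundedness, summability of $t_k\|g_k\|^2$, stabilization of $\ControlConst_k$ above $\gamma\tfrac{1-\beta}{2\lipconst}$) are all correct and are exactly the content of \cref{lemma:update_step_magnitude_epsilon}. However, there is a genuine circularity in your contradiction step. You assert $\|g_{k_j}\|\geqslant\delta$ for large $j$ on the grounds that $g_{k_j}\in\partial_{\epsilon_{k_j}}\Loss(\elt_{k_j})$, and only \emph{afterwards} deduce $\epsilon_{k_j}\to 0$. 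But your separation constants only give $\partial_{\epsilon'}\Loss(\elt)\cap B(0,\delta)=\emptyset$ for $\epsilon'\leqslant\epsilon^*$, and at that stage you only know $\epsilon_{k_j}\leqslant\epsilon$; nothing forces $\epsilon\leqslant\epsilon^*$, since $\epsilon$ is a fixed hyperparameter while $\epsilon^*$ depends on the cluster point, and a non-stationary cluster point may perfectly well be $\epsilon$-stationary, in which case $\partial_{\epsilon}\Loss(\elt_{k_j})$ comes arbitrarily close to the origin and no uniform lower bound $\delta$ on $\|g_{k_j}\|$ is available. So the inequality $\|g_{k_j}\|\geqslant\delta$ is not justified until you know $\epsilon_{k_j}\leqslant\epsilon^*$, which you derive from it. The repair is short and uses material you already cite: the termination test of Line~14 of \cref{alg:update step} gives $\epsilon_k<\ControlConst_{k+1}\|g_k\|\leqslant\ControlConst_0\|g_k\|$, hence $\epsilon_k^2\leqslant\ControlConst_0\,\epsilon_k\|g_k\|\to 0$ by your summability, so $\epsilon_k\to 0$ along the whole sequence \emph{before} any stationarity discussion. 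This is precisely how the paper proceeds, via \cref{eq:theorem_convergence_2} and the right-hand inequality of \cref{lemma:update_step_magnitude_epsilon}. With $\epsilon_{k_j}/\gamma\leqslant\epsilon^*$ secured for large $j$, your backtracking through $\AlgoUpdateStep$ (penultimate radius $\epsilon_{k_j}/\gamma$, failure of the norm test or of the descent test, \cref{prop:approx_descent}~{\bf (ii)}, stabilized constant $\ControlConst_\infty$) is correct and yields the contradiction.

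Once patched, your argument differs from the paper's only in organization: the paper argues directly, showing $\epsilon_k\to 0$ and then $\|\tilde g(\elt_k,\tfrac{1}{\gamma}\epsilon_k)\|\to 0$ from the two-sided bound of \cref{lemma:update_step_magnitude_epsilon}, and concludes at an arbitrary cluster point $\elt_*$ via the nesting $\tilde{\partial}_{\frac{1}{\gamma}\epsilon_k}\Loss(\elt_k)\subseteq\partial_{\frac{1}{\gamma}\epsilon_k}\Loss(\elt_k)\subseteq\partial_{\epsilon'}\Loss(\elt_*)$ and $\bigcap_{\epsilon'>0}\partial_{\epsilon'}\Loss(\elt_*)=\partial\Loss(\elt_*)$; you localize the same facts at a putative non-stationary cluster point and run a contradiction, which requires introducing the separation constants $\delta,\rho,\epsilon^*$ that the direct version avoids.
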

%

%
As an intermediate result, we first show that the update step computed in \cref{alg:update step} is obtained after finitely many iterations and estimate its magnitude relatively to the norm of the descent direction. 
\begin{lemma}
\label{lemma:update_step_magnitude_epsilon}
$\AlgoUpdateStep(\Loss,\elt_k,\epsilon,\eta,\ControlConst_k, \beta, \gamma)$ terminates in finitely many iterations. In addition, let~$\lipconst$ be a Lipschitz constant for the restricted gradients~$\nabla f_i$ (as in \cref{prop:approximation_goldstein_generalized_gradient}) in the~$\epsilon$-neighborhood of~$\elt_k$. Assume that~$\frac{1-\beta}{2\lipconst}\leqslant \ControlConst_{k}$. If~$x_k$ is not an~$(\epsilon,\eta)$-stationary point, then the returned exploration radius~$\epsilon_k$ satisfies:
\[
\min \left( \frac{\gamma^2(1-\beta)}{2\lipconst} \|\tilde{g}(\elt_k,\frac{1}{\gamma}\epsilon_k)\|, \epsilon\right) \leqslant \epsilon_k  \leqslant \min( C_{k} \|\tilde{g}(\elt_k,\epsilon_k)\| , \epsilon).\]
Moreover the returned controlling constant~$\ControlConst_{k+1}$ satisfies:
\[ \ControlConst_{k+1}\geq \frac{\gamma(1-\beta)}{2\lipconst}.\]
\end{lemma}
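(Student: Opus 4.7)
The plan is to verify the three claims of the lemma separately: termination of both the inner WHILE and outer REPEAT loops, the invariant lower bound on $C_{k+1}$, and finally the two-sided bound on the returned $\epsilon_k$. The central tool throughout is \cref{prop:approx_descent}, used mostly in contrapositive form: if the descent inequality $\Loss(\elt_k - t_k g_k) \le \Loss(\elt_k) - \beta t_k \|g_k\|^2$ fails at step size $t_k = \epsilon_k/\|g_k\|$, then necessarily $\epsilon_k > \frac{1-\beta}{2\lipconst}\|\tilde g(\elt_k,\epsilon_k)\|$, i.e.\ $\epsilon_k/\|g_k\| > \frac{1-\beta}{2\lipconst}$.

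For termination, I would first observe that the inner WHILE loop on lines 8--10 multiplies $C_{k+1}$ by $\gamma<1$ on each pass, so after finitely many iterations $C_{k+1} < \epsilon_k/\|g_k\|$ and the loop exits. For the outer REPEAT, every non-terminating iteration must trigger the IF on line~11 (otherwise the UNTIL condition holds and we are done), hence $\epsilon_k$ is geometrically shrunk to zero. The assumption that $\elt_k$ is not $(\epsilon,\eta)$-stationary gives $\|g(\elt_k,\epsilon)\| > \eta$; combining with the inclusion $\AppGoldstein \subseteq \Goldstein$ from \cref{prop:approximation_goldstein_generalized_gradient} and the monotonicity of $\|g(\elt_k,\cdot)\|$ yields $\|g_k\| = \|\tilde g(\elt_k,\epsilon_k)\| \ge \|g(\elt_k,\epsilon)\| > \eta$ uniformly, so the break at line~5 never fires. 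On the other hand, once $\epsilon_k \le \frac{1-\beta}{2\lipconst}\eta \le \frac{1-\beta}{2\lipconst}\|g_k\|$, \cref{prop:approx_descent}(ii) ensures descent, while $\epsilon_k < C_{k+1}\|g_k\|$ follows from $\|g_k\| > \eta$ and the invariant $C_{k+1} \ge \frac{\gamma(1-\beta)}{2\lipconst}$ established below. Hence the UNTIL condition is met, contradicting non-termination.

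For the invariant $C_{k+1} \ge \frac{\gamma(1-\beta)}{2\lipconst}$, I would argue by induction on the modifications of $C_{k+1}$. Initially $C_{k+1} = C_k \ge \frac{1-\beta}{2\lipconst}$ by hypothesis, which trivially exceeds $\frac{\gamma(1-\beta)}{2\lipconst}$. The only place $C_{k+1}$ is altered is line~9, inside the WHILE. Inside this loop, descent fails, so by the contrapositive of \cref{prop:approx_descent}(ii), $\epsilon_k/\|g_k\| > \frac{1-\beta}{2\lipconst}$. The loop exits exactly when $C_{k+1} < \epsilon_k/\|g_k\|$: denoting by $C_{k+1}^{\rm prev} \ge \epsilon_k/\|g_k\|$ the value just before the last decrement, the post-exit value is $C_{k+1} = \gamma\,C_{k+1}^{\rm prev} \ge \gamma\,\epsilon_k/\|g_k\| > \frac{\gamma(1-\beta)}{2\lipconst}$, preserving the invariant.

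Finally, for the bounds on $\epsilon_k$: the upper bound is immediate, since at termination the UNTIL yields $\epsilon_k < C_{k+1}\|g_k\|$ and $C_{k+1}\le C_k$ gives $\epsilon_k \le C_k \|\tilde g(\elt_k,\epsilon_k)\|$; together with the obvious $\epsilon_k \le \epsilon$ this provides the stated minimum. For the lower bound, if $\epsilon_k$ was never reduced then $\epsilon_k = \epsilon$ and the $\min$ is realized by its $\epsilon$ term. Otherwise I inspect the last reduction of $\epsilon_k$, which occurred at some $\epsilon_k^{\rm prev} = \epsilon_k/\gamma$ with $g_k^{\rm prev} = \tilde g(\elt_k,\epsilon_k/\gamma)$; the IF on line~11 triggered, so either (a) descent failed, yielding $\epsilon_k^{\rm prev} > \frac{1-\beta}{2\lipconst}\|g_k^{\rm prev}\|$ via \cref{prop:approx_descent}(ii), or (b) $\epsilon_k^{\rm prev} > C_{k+1}^{\rm old}\|g_k^{\rm prev}\|$, which combined with the invariant $C_{k+1}^{\rm old} \ge \frac{\gamma(1-\beta)}{2\lipconst}$ proved above gives $\epsilon_k^{\rm prev} > \frac{\gamma(1-\beta)}{2\lipconst}\|g_k^{\rm prev}\|$. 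Multiplying by $\gamma$ in both cases yields $\epsilon_k \ge \frac{\gamma^2(1-\beta)}{2\lipconst}\|\tilde g(\elt_k,\epsilon_k/\gamma)\|$, as claimed. The main subtlety I anticipate lies precisely in this case analysis of the last reduction: the looser factor $\gamma^2$ (rather than $\gamma$) is forced by case~(b), where the invariant on $C_{k+1}$ itself carries an extra factor of $\gamma$ that compounds with the reduction factor.
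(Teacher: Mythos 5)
Your treatment of the invariant on $C_{k+1}$, of the upper bound $\epsilon_k\le\min(C_k\|\tilde g(\elt_k,\epsilon_k)\|,\epsilon)$, and of the lower bound via the case analysis at the penultimate radius $\epsilon_k/\gamma$ coincides with the paper's own proof, including the observation that the $\gamma^2$ factor is forced by the branch where the radius condition (rather than the descent condition) fails. The genuine gap is in your termination argument for the outer loop. You conclude via the chain ``once $\epsilon_k\le\frac{1-\beta}{2\lipconst}\eta\le\frac{1-\beta}{2\lipconst}\|g_k\|$, both exit conditions hold,'' which uses $\eta$ as a uniform positive lower bound on $\|g_k\|$. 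This is vacuous when $\eta=0$: the threshold $\frac{1-\beta}{2\lipconst}\eta$ is then $0$ and is never reached, while $\|g_k\|=\|\tilde g(\elt_k,\epsilon_k)\|$ may remain positive (so the early break of \cref{alg:update step} never fires) and yet your argument gives no reason why the UNTIL condition is eventually met. The lemma asserts termination unconditionally, and it is invoked with $\eta=0$ in the proof of \cref{thm:convergence_algo}, so this case cannot be dismissed.

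The paper closes exactly this case by other means: it first disposes of stationary iterates (for which $g_k=0$ and the loop breaks immediately), and for non-stationary $\elt_k$ it invokes part {\bf (i)} of \cref{prop:approx_descent}, which states precisely that $\epsilon=o(\|\tilde{g}(\elt_k,\epsilon)\|)$ as $\epsilon\to 0^+$ --- the key point being $\|\tilde{g}(\elt_k,\epsilon)\|\geq\|g(\elt_k,\epsilon)\|\to\|g(\elt_k,0)\|>0$. Hence the geometric shrinking $\epsilon_k\gets\gamma\epsilon_k$ reaches the radius condition after finitely many steps, at which point (once $C_{k+1}$ has stabilized) the descent condition follows from part {\bf (ii)}. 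Your proof already contains every other ingredient, so the repair is local: replace the $\eta$-based threshold by an appeal to \cref{prop:approx_descent}{\bf (i)} and treat stationary iterates separately.
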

\begin{proof}
If~$\elt_k$ is a stationary point, then~$\AlgoGeneralizedGradient$ returns a trivial descent direction~$g_k=0$ because the approximate gradient~$G_k$ contains~$\nabla \Loss(\elt_k)$ (Line~1). In turn, $\AlgoUpdateStep$ terminates at Line 4. 

Henceforth we assume that~$\elt_k$ is not a stationary point and that the breaking condition of Line~4 in \cref{alg:update step} is never reached (otherwise the algorithm terminates). Therefore, at each iteration of the main loop, either~$\ControlConst_{k+1}$ is replaced by~$\gamma \ControlConst_{k+1} $ (line 9), or~$\epsilon_k$ is replaced by~$\gamma \epsilon_k$ (line 12), until both the following inequalities hold (line 14):
\[\text{ {\bf (A)} } \, \,  \epsilon_k< \ControlConst_{k+1}  \|\tilde{g}(\elt_k,\epsilon_k)|\; \, \,  \text{  and }   \]
\[\text{    {\bf (B)}  } \, \,  \Loss(\elt_k-t_k\tilde{g}(\elt_k,\epsilon_k))< \Loss(\elt_k)-\beta t_k\|\tilde{g}(\elt_k,\epsilon_k)\|^2. \]
Once~$\ControlConst_{k+1}$ becomes lower than~$\frac{1-\beta}{2\lipconst}$, inequality {\bf (A)} implies inequality {\bf (B)} by \cref{prop:approx_descent} ${\bf (ii)}$. It then takes finitely many replacements~$\epsilon_k \gets \gamma \epsilon_k$ to reach inequality {\bf (A)}, by \cref{prop:approx_descent} {\bf (i)}. At that point (or sooner), \cref{alg:update step} terminates. This concludes the first part of the statement, namely~$\AlgoUpdateStep$ terminates in finitely many iterations. 

Next we assume that~$\elt_k$ is not an~$(\epsilon,\eta)$-stationary point, which ensures that the main loop of $\AlgoUpdateStep$ cannot break at Line~5. We have the invariant~$\ControlConst_{k+1}\geqslant \gamma \frac{1-\beta}{2\lipconst}$: this is true at initialization ($\ControlConst_{k+1}=\ControlConst_{k}$) by assumption, and in later iterations~$\ControlConst_{k+1}$ is only replaced by~$\gamma \ControlConst_{k+1}$ whenever~{\bf (A)} holds but not~{\bf (B)}, which forces~$\ControlConst_{k+1}\geqslant \frac{1-\beta}{2\lipconst}$ by \cref{prop:approx_descent}~{\bf (ii)}.

At the end of the algorithm, $\epsilon_k \leqslant \ControlConst_{k+1} \|\tilde{g}(\elt_k,\epsilon_k)\| $ by inequality {\bf (A)}, and so we deduce the right inequality~$\epsilon_k \leqslant \min(C_{k} \|\tilde{g}(\elt_k,\epsilon_k)\|,\epsilon)$. 

Besides, if both {\bf (A)} and {\bf (B)} hold when entering the main loop (line 11) for the first time, then~$\epsilon_k=\epsilon$. Otherwise, let us consider the penultimate iteration of the main loop for which the update step is~$\frac{1}{\gamma} \epsilon_k$. Then, either condition {\bf (A)} does not hold, namely $\frac{1}{\gamma}\epsilon_k> \ControlConst_{k+1} \|\tilde{g}(\elt_k,\frac{1}{\gamma}\epsilon_k)\| \geqslant \gamma \frac{1-\beta}{2\lipconst} \|\tilde{g}(\elt_k,\frac{1}{\gamma}\epsilon_k)\|$, or condition {\bf (B)} does not hold, which by the assertion~{\bf (ii)} of \cref{prop:approx_descent} implies~$\frac{1}{\gamma}\epsilon_k\geqslant \frac{1-\beta}{2\lipconst} \|\tilde{g}(\elt_k,\frac{1}{\gamma}\epsilon_k)\|$. In any case, we deduce that
\[
 \epsilon_k \geqslant \min\left( \frac{\gamma^2(1-\beta)}{2\lipconst} \left\|\tilde{g}\left(\elt_k,\frac{1}{\gamma}\epsilon_k\right)\right\|, \epsilon\right).\]
\end{proof}

\begin{proof}[Proof of \cref{thm:convergence_algo}]
We assume that alternative~$\bf{(a)}$ does not happen. By \cref{lemma:update_step_magnitude_epsilon}, \cref{alg:update step} terminates in finitely many iteration and by Line~14 we have the guarantee:
\begin{equation}
\label{eq:theorem_convergence_0}
\forall k\geqslant 0, \, \Loss(\elt_k-t_k\tilde{g}(\elt_k,\epsilon_k))< \Loss(\elt_k)-\beta t_k\|\tilde{g}(\elt_k,\epsilon_k)\|^2.
\end{equation}

The subsequent iterate~$\elt_{k+1}$ is initialized at~$\elt_k-t_k\tilde{g}(\elt_k,\epsilon_k)$ by $\AlgoMakeDiff$ (see \cref{alg:make_differentiable}) and replaced by a sample in a progressively shrinking ball~$B(\elt_k-t_k\tilde{g}(\elt_k,\epsilon_k), r)$ until two conditions are reached. The first condition is that~$\Loss$ is differentiable at~$\elt_{k+1}$, and since~$\DiffSet$ has full measure by Rademacher's Theorem, this requirement is almost surely satisfied in finitely many iterations. The second condition is that
\begin{equation}
\label{eq:theorem_convergence_1}
\forall k\geqslant 0, \, \Loss(\elt_{k+1})< \Loss(\elt_k)-\beta t_k\|\tilde{g}(\elt_k,\epsilon_k)\|^2,
\end{equation}
which by \cref{eq:theorem_convergence_0} and continuity of~$\Loss$ is satisfied in finitely many iterations as well. Therefore $\AlgoMakeDiff$ terminates in finitely many iterations almost surely. In particular, the sequence of iterates~$(\elt_k)_k$ is infinite.

By \cref{eq:theorem_convergence_1} the sequence of iterates' values~$(\Loss(\elt_k))_k$ is decreasing and it must converge by compactness of the sublevel sets below~$\Loss$. Using \cref{eq:theorem_convergence_1}, we obtain:
\begin{equation}
\label{eq:theorem_convergence_2}
\epsilon_k \|\tilde{g}(\elt_k,\epsilon_k)\|=t_k\|\tilde{g}(\elt_k,\epsilon_k)\|^2 \leq \frac{1}{\beta}(\Loss(\elt_k)-\Loss(\elt_{k+1})) \longrightarrow 0^+,
\end{equation}
so that in particular, either $\epsilon_k \to 0$ or $\|\tilde{g}(\elt_k,\epsilon_k)\| \to 0$.
Let also~$\lipconst$ be Lipschitz constant for the restricted gradients $\nabla f_i$ (as in \cref{prop:approximation_goldstein_generalized_gradient}) in the $\epsilon$-offset of the sublevel set $\{\elt, \Loss(\elt)\leqslant  \Loss(\elt_0) \}$. Up to taking~$\lipconst$ large enough, there is another Lipschitz constant~$\lipconst'<\lipconst$ ensuring that
\[\frac{1}{\gamma} \frac{1-\beta}{2\lipconst}\leq   \frac{1-\beta}{2\lipconst'}\leqslant \ControlConst_0.\]

By~\cref{lemma:update_step_magnitude_epsilon}, upon termination of \cref{alg:update step},~$\ControlConst_1\geq \gamma \frac{1-\beta}{2\lipconst'}\geq \frac{1-\beta}{2\lipconst}$. If~$\ControlConst_1\leq \frac{1-\beta}{2\lipconst'}$, the~{\bf (ii)} of \cref{prop:approx_descent} prevents Line 9 in \cref{alg:update step} from ever occurring again, i.e.,~$\ControlConst_k=\ControlConst_1$ is constant in later iterations. Otherwise,~$\ControlConst_1$ satisfies~$\ControlConst_1\geq \frac{1-\beta}{2\lipconst'}$ just like~$\ControlConst_0$. A quick induction then yields:
\[\forall k \geq 0, \, \, \ControlConst_k \geq \frac{1-\beta}{2\lipconst}.\]
Therefore, by~\cref{lemma:update_step_magnitude_epsilon}:
\begin{equation}
\label{eq:epsilon_lower_bound}
\forall k \geqslant 0, \,  \min\left( \frac{\gamma^2(1-\beta)}{2\lipconst} \left\|\tilde{g}\left(\elt_k,\frac{1}{\gamma}\epsilon_k\right)\right\|, \epsilon\right) \leqslant \epsilon_k  \leqslant \min( C_{0} \|\tilde{g}(\elt_k,\epsilon_k)\| , \epsilon).
\end{equation}
In particular, using the rightmost inequality and \cref{eq:theorem_convergence_2}, we get~$\epsilon_k\rightarrow 0^+$. In turn, using the leftmost inequality, we get that
\begin{equation}
    \label{eq:min_gradient_goes_0}
   \left\|\tilde{g}\left(\elt_k,\frac{1}{\gamma}\epsilon_k\right)\right\|\rightarrow 0^+.
\end{equation}
The sequence of iterates~$(\elt_k)_k$ is bounded; up to extracting a converging subsequence, we assume that it converges to some $\elt_*$. Let $\epsilon'>0$. We claim that $0\in \partial_{\epsilon'} \Loss(\elt_*)$, that is $\elt_*$ is $\epsilon'$-stationary. As $\elt_k \to \elt_*$ and $\epsilon_k \to 0$, we have that for~$k$ large enough $B(\elt_k,\frac{1}{\gamma}\epsilon_k)\subseteq B(\elt_*,\epsilon')$, which implies that:
\[\partial_{\frac{1}{\gamma}\epsilon_k} \Loss(\elt_k)\subseteq \partial_{\epsilon'}\Loss(\elt_*).\]
Besides, from \cref{prop:approximation_goldstein_generalized_gradient}, we have $\tilde{\partial}_{\frac{1}{\gamma}\epsilon_k}\Loss(\elt_k)\subseteq \partial_{\frac{1}{\gamma}\epsilon_k} \Loss(\elt_k)$, so that $\tilde{g}(\elt_k,\frac{1}{\gamma}\epsilon_k)\in \partial_{\frac{1}{\gamma}\epsilon_k} \Loss(\elt_k)$. Hence $\tilde{g}(\elt_k,\frac{1}{\gamma}\epsilon_k)\in \partial_{\epsilon'}\Loss(\elt_*)$. In the limit, \cref{eq:min_gradient_goes_0} implies $0\in \partial_{\epsilon'}\Loss(\elt_*)$, as desired. 

Following \cite{burke2005robust}, the intersection of the Goldstein subgradients $\partial_{\epsilon'} \Loss(\elt_*)$, over $\epsilon'>0$, yields $\partial \Loss(\elt_*)$. Hence, $0\in \partial \Loss(\elt_*)$ and~$\elt_*$ is a stationary point.
\end{proof}
\begin{theorem}
\label{thm:rate_convergence_algo}
If~$\eta>0$, then \cref{alg:minimization} produces an $(\epsilon,\eta)$-stationary point using at most $O\left(\frac{1}{\eta \min(\eta,\epsilon)}\right)$ iterations. 
\end{theorem}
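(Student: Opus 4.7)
The plan is to adapt the classical potential-decrease argument: I show that for each iteration at which $\elt_k$ is not yet $(\epsilon,\eta)$-stationary, $\Loss$ drops by an amount bounded below in terms of $\epsilon$ and $\eta$. Let $K \in \N \cup \{\infty\}$ be the smallest index such that $\elt_K$ is $(\epsilon,\eta)$-stationary; the goal reduces to proving $K = O\bigl(\frac{1}{\eta \min(\eta, \epsilon)}\bigr)$. Exactly as in the proof of \cref{thm:convergence_algo}, the iterates stay in the compact sublevel set $\{\Loss \leq \Loss(\elt_0)\}$, so I fix a common Lipschitz constant $\lipconst$ for the restricted gradients $\nabla \Loss_i$ on its $\epsilon$-thickening, chosen large enough that $\frac{1-\beta}{2\lipconst} \leq \ControlConst_0$ and that the same induction as in the proof of \cref{thm:convergence_algo} ensures $\ControlConst_k \geq \frac{1-\beta}{2\lipconst}$ for every $k$; this makes the hypothesis of \cref{lemma:update_step_magnitude_epsilon} available at every iteration.

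For $k < K$, $\elt_k$ is not $(\epsilon,\eta)$-stationary, i.e., $d(0, \partial_\epsilon\Loss(\elt_k)) > \eta$. Since $\epsilon_k \leq \epsilon$ gives $\partial_{\epsilon_k}\Loss(\elt_k) \subseteq \partial_\epsilon\Loss(\elt_k)$, we have $d(0, \partial_{\epsilon_k}\Loss(\elt_k)) > \eta$, and combining this with the inclusion $\tilde{\partial}_{\epsilon_k}\Loss(\elt_k) \subseteq \partial_{\epsilon_k}\Loss(\elt_k)$ from \cref{prop:approximation_goldstein_generalized_gradient}, I deduce $\|g_k\| = \|\tilde{g}(\elt_k, \epsilon_k)\| > \eta$. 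Hence the break at Line~5 of \cref{alg:update step} is never triggered for $k < K$, and combining the descent criterion of Line~14 with the guarantee $\Loss(\elt_{k+1}) < \Loss(\elt_k)-\beta t_k \|g_k\|^2$ enforced by $\AlgoMakeDiff$ yields
\[\Loss(\elt_k) - \Loss(\elt_{k+1}) > \beta t_k \|g_k\|^2 = \beta \epsilon_k \|g_k\| > \beta \epsilon_k \eta.\]

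The delicate step is to bound $\epsilon_k$ from below: I claim $\epsilon_k \geq \min\bigl(\epsilon, \frac{\gamma^2(1-\beta)}{2\lipconst}\eta\bigr)$. If $\epsilon_k = \epsilon$ this is trivial; otherwise $\epsilon_k$ has been multiplied by $\gamma$ at least once at Line~12 of \cref{alg:update step}, so $\epsilon_k/\gamma \leq \epsilon$. Reusing the chain of inclusions of the previous paragraph but with radius $\epsilon_k/\gamma$ in place of $\epsilon_k$ gives $\|\tilde{g}(\elt_k, \epsilon_k/\gamma)\| \geq d(0, \partial_{\epsilon_k/\gamma}\Loss(\elt_k)) \geq d(0, \partial_\epsilon\Loss(\elt_k)) > \eta$, whence \cref{lemma:update_step_magnitude_epsilon} concludes $\epsilon_k \geq \frac{\gamma^2(1-\beta)}{2\lipconst}\|\tilde{g}(\elt_k, \epsilon_k/\gamma)\| > \frac{\gamma^2(1-\beta)}{2\lipconst}\eta$. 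Telescoping the descent inequality from $k=0$ to $K-1$ and bounding the left-hand side by $\Loss(\elt_0) - \min \Loss$ (finite because $\Loss$ is proper) then yields the announced bound $K = O\bigl(\frac{1}{\eta \min(\eta, \epsilon)}\bigr)$.

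The main subtlety is the implication that $\elt_k$ not being $(\epsilon,\eta)$-stationary forces $\|\tilde{g}(\elt_k, \epsilon_k/\gamma)\| > \eta$: a priori this norm could be much smaller than $\|g_k\| = \|\tilde{g}(\elt_k, \epsilon_k)\|$, since enlarging the sampling radius from $\epsilon_k$ to $\epsilon_k/\gamma$ only grows the approximate Goldstein subgradient and thus can only decrease the norm of its projection onto the origin. The key observation is that, independently of this monotonicity, the norm still dominates $d(0, \partial_{\epsilon_k/\gamma}\Loss(\elt_k)) \geq d(0, \partial_\epsilon\Loss(\elt_k)) > \eta$, as long as $\epsilon_k/\gamma \leq \epsilon$---a condition automatic as soon as $\epsilon_k$ has undergone a single reduction.
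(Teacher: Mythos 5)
Your proof is correct and follows essentially the same route as the paper's: telescoping the per-iteration decrease $\Loss(\elt_k)-\Loss(\elt_{k+1}) > \beta\epsilon_k\|g_k\|$ against the bound $\Loss(\elt_0)-\Loss^*$, with $\|g_k\|>\eta$ from non-stationarity and the lower bound on $\epsilon_k$ from \cref{lemma:update_step_magnitude_epsilon}. Your only addition is to spell out why $\|\tilde{g}(\elt_k,\epsilon_k/\gamma)\|>\eta$ (via $\epsilon_k/\gamma\leq\epsilon$ once a reduction has occurred), a step the paper leaves implicit.
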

\begin{proof}
Assume that~\cref{alg:minimization} has run over~$k$ iterations without producing an~$(\epsilon,\eta)$-stationary point. From \cref{alg:update step} (line 14), \cref{alg:make_differentiable} (Line 2) and the choice $t_j=\frac{\epsilon_j}{\|\tilde{g}(\elt_j,\epsilon_j)\|}$ of update step for~$j\leqslant k$, it holds that $\beta \epsilon_j \|\tilde{g}(\elt_j,\epsilon_j)\| \leq \Loss(\elt_j) - \Loss(\elt_{j+1}) $, and in turn
\[\sum_{j=0}^{k-1} \epsilon_j \|\tilde{g}(\elt_j,\epsilon_j)\| \leqslant \frac{f_0-f^*}{\beta},\]
where~$\Loss_0\defeq\Loss(\elt_0)$ and~$\Loss^*$ is a minimal value of $\Loss$. Besides, using~\cref{lemma:update_step_magnitude_epsilon},~$\epsilon_j$ is either bigger than~$\frac{\gamma^2(1-\beta)}{2\lipconst} \left\|\tilde{g}\left(\elt_j,\frac{1}{\gamma}\epsilon_j\right)\right\|$ or than~$\epsilon$, hence
\[\sum_{j=0}^{k-1} \epsilon_j \|\tilde{g}(\elt_j,\epsilon_j)\| \geqslant k \min_{j< k} \|\tilde{g}(\elt_j,\epsilon_j)\|\times \min_{j< k} \epsilon_j  > k\times \eta \times \min\left(\epsilon, \frac{\gamma^2(1-\beta)}{2\lipconst} \eta\right). \]
The two equations cannot simultaneously hold whenever 
\[k\geqslant \frac{f_0-f^*}{\beta}\times \frac{1}{\eta\min\left(\epsilon, \frac{\gamma^2(1-\beta)}{2\lipconst} \eta\right)},\]
which allows us to conclude.
\end{proof}

\subsection{Approximate distance to strata}
\label{section_approx_distance_strata}
The algorithm and its convergence assume that strata~$\Stratum$ that are~$\epsilon$-close to an iterate~$\elt$ can be detected by the oracle~$\SampleOracle(\elt,\epsilon)$. However in practice computing distances~$d(\elt,\Stratum)$ to sub manifolds may be expansive or even impossible. Instead we can hope for approximate distances~$\hat{d}(\elt,\Stratum)$. Typically when we have an assignment 
\[(\elt,\Stratum)\in \R^n\times  \Stratif \longmapsto \proxypoint \in \Stratum \subseteq \R^n, \, \, \,\text{at our disposal, we can define }   \,  \adist(\elt,\Stratum)\defeq d(\elt,\proxypoint),\]
and replace the accurate oracle~$\SampleOracle(\elt,\epsilon)$ with the following approximate oracle:%
\[\ApproxSampleOracle(\elt,\epsilon):=\big\{ \proxypoint\, |  \,  \Stratum\in \Stratif, d(\elt,\proxypoint)\leq \epsilon \big\}=\big\{ \proxypoint\, |  \,  \Stratum\in \Stratif, \hat{d}(\elt,\Stratum)\leq \epsilon \big\}.\]
Therefore for the purpose of this section we make the following assumption: \emph{To every iterate~$\elt\in \R^n$ and stratum~$\Stratum$ we can associate an element~$\proxypoint$ that belongs to~$\Stratum$, in particular we have the corresponding oracle~$\ApproxSampleOracle$. Moreover there exists a constant $\consta \geqslant 1$ such that the resulting approximate distance to strata~$\adist(\elt,\Stratum)\defeq d(\elt,\proxypoint)$ satisfies:
   \[\forall \elt\in \R^n, \forall \Stratum \in \Stratif, \,  \adist(\elt,\Stratum) \leqslant \consta  d(\elt,\Stratum). \] 
}

Note that we always have a reverse inequality~$\adist(\elt,\Stratum)\geqslant d(\elt,\Stratum)$ since~$\proxypoint\in \Stratum$. In the case of the persistence map this will specialize to~$d(\elt,\Stratum) \leqslant \adist(\elt,\Stratum)\leqslant 2d(\elt,\Stratum)$, that is~$\consta=2$, see \cref{prop:x_pi_good_approx_distance_to_strata}. 
 
We then replace the approximate Goldstein subgradient~$\AppGoldstein(\elt)$ with~$\AppAppGoldstein(\elt)$, defined in the exact same way except that it is computed from strata satisfying~$\adist(\elt,\Stratum)\leqslant \epsilon$, that is,~$\AppAppGoldstein(\elt)$ contains~$\nabla \Loss(\proxypoint)$ for each such stratum. The proof of \cref{prop:approximation_goldstein_generalized_gradient} adapts straightforwardly to the following statement: 
\begin{proposition}
\label{prop:approximation_goldstein_generalized_gradient_adapted}
Let $\elt\in \R^n$ and $\epsilon>0$. Assume that~$\Loss$ is stratifiably smooth. Let~$\lipconst$ be a Lipschitz constant of the gradients~$\nabla f_{i}$ restricted to $\overbar{B}(\elt,\consta\epsilon)\cap \Stratum_i$, where~$\Loss_i$ is some local $C^2$ extension of~$\Loss_{|\Stratum_i}$, and~$ \Stratum_i\in \Stratif_{\elt,\epsilon} $ is top dimensional. Then we have:
\[\AppAppGoldstein(\elt) \subseteq \Goldstein(\elt) \,\text{ and } \, \, \Goldstein(\elt)\subseteq \hat{\partial}_{\consta\epsilon}\Loss(\elt) + \overbar{B}(0,(\consta+1)\lipconst \epsilon).\]
\end{proposition}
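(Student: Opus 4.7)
The plan is to mirror the proof of \cref{prop:approximation_goldstein_generalized_gradient}, keeping track of two running distances: the true distance $d$ which controls membership in $\Goldstein(\elt)$, and the approximate distance $\adist$ which controls membership in $\hat{\partial}_{\consta\epsilon}\Loss(\elt)$. The distortion hypothesis $\adist \leqslant \consta d$ forces the enlargement of the radius on the right-hand side from $\epsilon$ to $\consta\epsilon$, while the triangle inequality through $\elt$ is responsible for the $(\consta+1)\lipconst\epsilon$ error term.

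For the inclusion $\AppAppGoldstein(\elt) \subseteq \Goldstein(\elt)$, I would observe that each generator $\nabla_{\Stratum}\Loss(\proxypoint)$ of $\AppAppGoldstein(\elt)$ comes from a proxy point $\proxypoint \in \Stratum$ at \emph{true} distance $d(\elt,\proxypoint) = \adist(\elt,\Stratum) \leqslant \epsilon$. Such a generator already appears in the description of $\Goldstein(\elt)$ supplied by the second formula of \cref{prop:generalized_gradient_along_strata}, and taking closed convex hulls preserves the inclusion.

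For the reverse inclusion, I would pick an arbitrary generator $\nabla_{\Stratum}\Loss(\elt')$ of $\Goldstein(\elt)$, with $|\elt'-\elt| \leqslant \epsilon$ and $\Stratum \in \Stratif_{\elt'}$ of dimension $n$. Then $d(\elt,\Stratum) \leqslant |\elt'-\elt| \leqslant \epsilon$, so the distortion hypothesis gives $\adist(\elt,\Stratum) \leqslant \consta \epsilon$, and the corresponding proxy $\proxypoint \in \Stratum \cap \overbar{B}(\elt,\consta\epsilon)$ contributes $\nabla_{\Stratum}\Loss(\proxypoint)$ to $\hat{\partial}_{\consta\epsilon}\Loss(\elt)$. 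Since $\elt'$ also lies in $\overbar{\Stratum} \cap \overbar{B}(\elt,\consta\epsilon)$, the Lipschitz bound on the gradient of the $C^2$ extension $\Loss_i$ on the enlarged ball, combined with a limiting argument to handle the case $\elt' \in \overbar{\Stratum}\setminus \Stratum$, yields
\[
|\nabla_{\Stratum}\Loss(\elt')-\nabla_{\Stratum}\Loss(\proxypoint)| \leqslant \lipconst \bigl(|\elt'-\elt|+|\elt-\proxypoint|\bigr) \leqslant (\consta+1)\lipconst \epsilon,
\]
placing $\nabla_{\Stratum}\Loss(\elt')$ in $\hat{\partial}_{\consta\epsilon}\Loss(\elt) + \overbar{B}(0,(\consta+1)\lipconst \epsilon)$. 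Convexity and closedness of this neighborhood then transfer the inclusion from generators of $\Goldstein(\elt)$ to its entire closed convex hull.

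The proof is a routine adaptation of its predecessor rather than a new difficulty; the only real care needed is to keep the factor $\consta$ consistent in both the radius of the approximate Goldstein object and the Lipschitz error term, and to apply the Lipschitz hypothesis on the correct enlarged ball $\overbar{B}(\elt,\consta\epsilon)$ specified in the proposition's assumptions, rather than on the original $\epsilon$-ball of \cref{prop:approximation_goldstein_generalized_gradient}.
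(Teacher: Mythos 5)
Your proposal is correct and follows essentially the same route as the paper's (terse) proof: the first inclusion holds because proxy points lie at true distance $\adist(\elt,\Stratum)\leqslant\epsilon$ from $\elt$, and the second follows from bounding $|\elt'-\proxypoint|$ by $(\consta+1)\epsilon$ via the triangle inequality through $\elt$ and applying the Lipschitz bound on $\overbar{B}(\elt,\consta\epsilon)$, then passing to closed convex hulls. Your write-up merely makes explicit the steps the paper delegates to ``as in \cref{prop:approximation_goldstein_generalized_gradient}.''
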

\begin{proof}
The inclusion~$\AppAppGoldstein(\elt) \subseteq \Goldstein(\elt)$ is clear. Conversely, let $\nabla_{\Stratum}\Loss(\elt')\in \Goldstein(\elt)$, where~$\Stratum$ is a top-dimensional stratum incident to~$\elt'$ and~$|\elt'-\elt|\leqslant \epsilon$. We then have~$\adist(\elt,\Stratum)\leqslant \consta \epsilon$ and hence~$\proxypoint$ is a point in~$\hat{\partial}_{\consta\epsilon}\Loss(\elt)$ which is~$(\consta+1)\epsilon$-close to~$\elt'$. Therefore~$\nabla_{\Stratum}\Loss(\elt')\in \hat{\partial}_{\consta\epsilon}\Loss(\elt) + \overbar{B}(0,(\consta+1)\lipconst \epsilon)$. The rest of the proof is then conducted as in \cref{prop:approximation_goldstein_generalized_gradient}.
\end{proof}
The vector~$\hat{g}(\elt,\epsilon)$ with minimal norm in~$\AppAppGoldstein(\elt)$ can then serve as the new descent direction in place of~$\tilde{g}(\elt,\epsilon)$: 
\begin{proposition}
\label{prop:approx_descent_adapted}
Let~$\Loss$ be stratifiably smooth, and $\elt$ be a non strationnary point. Let $0<\beta<1$, and $\epsilon_0>0$. Denote by~$\lipconst$ a Lipschitz constant for all gradients of the restriction $\Loss_i$ in the ball $\overbar{B}(\elt,\consta\epsilon_0)$ (as in \cref{prop:approximation_goldstein_generalized_gradient_adapted}).
\begin{itemize}
\item[{\bf (i)}] For $0<\epsilon\leqslant \epsilon_0$ small enough we have~$\epsilon\leqslant \frac{1-\beta}{2\lipconst} \|\hat{g}(\elt,\epsilon)\|$; and 
\item[{\bf (ii)}] For such~$\epsilon$, we have~$\forall t \leqslant \frac{\epsilon}{\consta \|\hat{g}(\elt,\epsilon)\|}, \, \Loss(\elt-t\hat{g}(\elt,\epsilon))\leqslant \Loss(\elt)-\beta t\|\hat{g}(\elt,\epsilon)\|^2$.
\end{itemize} 
\end{proposition}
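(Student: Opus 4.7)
The plan is to follow the same skeleton as the proof of \cref{prop:approx_descent}, substituting the adapted approximation bound of \cref{prop:approximation_goldstein_generalized_gradient_adapted} at the appropriate scale. For item {\bf (i)}, I would observe that the continuity argument for $\epsilon \mapsto \|g(\elt,\epsilon)\|$ at $0^+$ is unchanged, because it only uses \cref{prop:generalized_gradient_along_strata} and the fact that in a sufficiently small neighborhood one has $\Stratif_{\elt,\epsilon} = \Stratif_\elt$, which together give $\partial f(\elt)\subseteq \Goldstein(\elt)\subseteq \partial f(\elt)+\overbar{B}(0,2\lipconst\epsilon)$. Non-stationarity of $\elt$ then yields $\|g(\elt,\epsilon)\| \to \|g(\elt,0)\| > 0$, so $\epsilon = o(\|g(\elt,\epsilon)\|)$. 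The inclusion $\AppAppGoldstein(\elt)\subseteq \Goldstein(\elt)$ from \cref{prop:approximation_goldstein_generalized_gradient_adapted} forces $\|\hat{g}(\elt,\epsilon)\|\geq \|g(\elt,\epsilon)\|$, and therefore $\epsilon=o(\|\hat{g}(\elt,\epsilon)\|)$, which delivers the required inequality for $\epsilon$ small enough.

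For item {\bf (ii)}, I would apply the Lebourg mean value theorem (\cref{prop:lebourg}) to the segment $[\elt,\elt-t\hat{g}(\elt,\epsilon)]$, obtaining some $y$ on this segment and $w \in \partial \Loss(y)$ with $\Loss(\elt-t\hat{g}(\elt,\epsilon))-\Loss(\elt)= -t\braket{w,\hat{g}(\elt,\epsilon)}$. The hypothesis $t\leqslant \frac{\epsilon}{\consta\|\hat{g}(\elt,\epsilon)\|}$ ensures $|y-\elt|\leqslant \epsilon/\consta$, so $w\in \partial_{\epsilon/\consta}\Loss(\elt)$. This is the crucial point where the factor $\consta$ enters the statement: I would then invoke \cref{prop:approximation_goldstein_generalized_gradient_adapted} not at scale $\epsilon$ but at scale $\epsilon/\consta$ (this is allowed, since the ball $\overbar{B}(\elt,\consta\cdot \epsilon/\consta)=\overbar{B}(\elt,\epsilon)\subseteq \overbar{B}(\elt,\consta\epsilon_0)$ lies inside the region where $\lipconst$ is valid), so that
\[
\partial_{\epsilon/\consta}\Loss(\elt)\subseteq \hat{\partial}_\epsilon\Loss(\elt) + \overbar{B}\!\left(0,\tfrac{(\consta+1)\lipconst\epsilon}{\consta}\right) = \AppAppGoldstein(\elt) + \overbar{B}\!\left(0,\tfrac{(\consta+1)\lipconst\epsilon}{\consta}\right).
\]
Hence there exists $\tilde{w}\in \AppAppGoldstein(\elt)$ with $|w-\tilde{w}|\leqslant \tfrac{(\consta+1)\lipconst\epsilon}{\consta} \leqslant 2\lipconst\epsilon$.

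I would then decompose as in the proof of \cref{prop:approx_descent}:
\[
\Loss(\elt-t\hat{g}(\elt,\epsilon))-\Loss(\elt) \;=\; t\braket{w-\tilde{w},-\hat{g}(\elt,\epsilon)} + t\braket{\tilde{w},-\hat{g}(\elt,\epsilon)}.
\]
Cauchy--Schwarz together with $|w-\tilde{w}|\leqslant 2\lipconst\epsilon$ and the hypothesis $\epsilon\leqslant \tfrac{1-\beta}{2\lipconst}\|\hat{g}(\elt,\epsilon)\|$ from {\bf (i)} gives $\braket{w-\tilde{w},-\hat{g}(\elt,\epsilon)}\leqslant (1-\beta)\|\hat{g}(\elt,\epsilon)\|^2$. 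The projection inequality characterizing $\hat{g}(\elt,\epsilon)$ as the point of minimal norm in $\AppAppGoldstein(\elt)$ (an exact transcription of \cref{eq:proposition_approx_descent_3}) gives $\braket{\tilde{w},-\hat{g}(\elt,\epsilon)}\leqslant -\|\hat{g}(\elt,\epsilon)\|^2$. Adding the two yields $\Loss(\elt-t\hat{g}(\elt,\epsilon))-\Loss(\elt)\leqslant -\beta t\|\hat{g}(\elt,\epsilon)\|^2$, as desired.

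The only non-routine step, and the one I would expect to need the most care, is the scale at which \cref{prop:approximation_goldstein_generalized_gradient_adapted} must be applied: the intermediate point $y$ produced by Lebourg only belongs to $\partial_{\epsilon/\consta}\Loss(\elt)$ rather than to $\Goldstein(\elt)$, and it is precisely this reduction by a factor $\consta$ (dictated by the distortion constant $\consta$ of $\adist$) that explains both the stricter step size bound $t\leqslant \epsilon/(\consta\|\hat{g}(\elt,\epsilon)\|)$ and the appearance of the assumption that $\lipconst$ be Lipschitz on the enlarged ball $\overbar{B}(\elt,\consta\epsilon_0)$. Everything else is a direct transcription of the proof of \cref{prop:approx_descent}.
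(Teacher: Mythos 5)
Your proof is correct and follows exactly the route the paper takes: item {\bf (i)} is unchanged from \cref{prop:approx_descent} thanks to the inclusion $\AppAppGoldstein(\elt)\subseteq\Goldstein(\elt)$, and item {\bf (ii)} is the proof of \cref{prop:approx_descent} with $\epsilon$ replaced by $\epsilon/\consta$ in the Lebourg step and \cref{prop:approximation_goldstein_generalized_gradient_adapted} invoked at that scale to produce $\tilde{w}\in\AppAppGoldstein(\elt)$ with $|w-\tilde{w}|\leqslant\frac{(\consta+1)\lipconst\epsilon}{\consta}\leqslant 2\lipconst\epsilon$. Your identification of the scale $\epsilon/\consta$ as the one non-routine step is precisely the point the paper's one-line proof is gesturing at.
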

\begin{proof}
The proof of \cref{prop:approx_descent} can be replicated by replacing~$\epsilon$ with~$\frac{\epsilon}{\consta}$ and using \cref{prop:approximation_goldstein_generalized_gradient_adapted} instead of \cref{prop:approximation_goldstein_generalized_gradient}.
\end{proof}
In light of this result, we can use~$g_k=\hat{g}(\elt_k,\epsilon_k)$ as a descent direction, which in practice simply amounts to replace the accurate oracle~$\SampleOracle(\elt_k,\epsilon_k)$ in \cref{alg:generalized gradient} with the approximate oracle~$\ApproxSampleOracle(\elt_k,\epsilon_k)$. 
The only difference is that the assignment of update step in \cref{alg:update step} (Line~7) should take the constant~$\consta$ into account, namely:
\[\text{(Line~7')   } \, \, \, \, \, \, \, \,  t_k\gets \frac{\epsilon}{\consta \|g_k\|}.\]
The convergence analysis of \cref{sec:convergence} holds as well for this algorithm, and the proofs of \cref{thm:convergence_algo} and \cref{thm:rate_convergence_algo} are unchanged.

\section{Application to Topological Data Analysis}
\label{sec:persistence_optim}

\begin{figure}
    \centering
    \includegraphics[width=\textwidth]{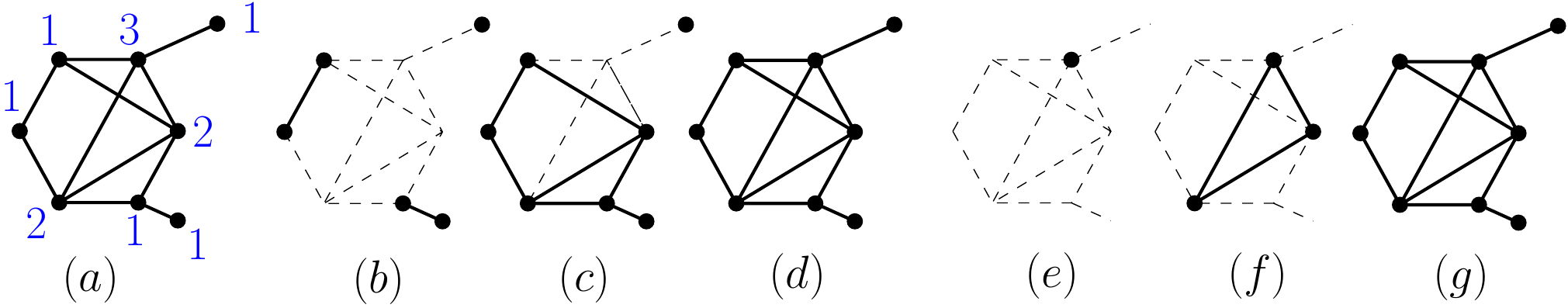}
    \caption{Sublevel sets and superlevel sets filtrations illustrated on graphs. $(a)$ Input graph (V,E) along with the values of a function
$\filt : V \to \R$ (blue). $(b,c,d)$ Sublevel sets for t = 1,2,3 respectively. $(e,f,g)$ Superlevel sets for t = 3,2,1 respectively.}
    \label{fig:illu_filtration}
\end{figure}

In this section, we define the persistence map~$\persmap:\R^n \rightarrow \Barc$ which is a central descriptor in TDA that gives rise to prototypical stratifiably smooth objective functions~$f$ in this work. 
We refer the reader to~\cite{edelsbrunner2008persistent,Oudot2015persistence,zomorodian2005computing} for full treatments of the theory of Persistence. 
We then introduce the stratification that makes~$\persmap$ a stratifiably smooth map, by means of the permutation group. 
Using the associated Cayley graph we give a way to implement the oracle~$\SampleOracle(\elt,\epsilon)$ that samples points in nearby top dimensional strata, which is the key ingredient of \cref{alg:generalized gradient} for computing descent directions.
\subsection{The Persistence Map}
\label{sec:prelim_persistence}
\paragraph{Persistent Homology and Barcodes} Let $n \in \N$, and let~$\{v_1,\dots,v_n\}$ be a (finite) set of vertices. 
A \emph{simplicial complex}~$\SComplex$ is a subset of the power set~$\mathcal{P}(\{v_1,\dots,v_n\})$ whose elements are called \emph{simplices}, and which is closed under inclusion: if~$\simplex\in \SComplex$ is a simplex and~$\simplex'\subseteq \simplex$, then $\simplex'\in \SComplex$. 
The {\em dimension} of the complex is the maximal cardinality of its simplices minus one. In particular a~$1$-dimensional simplicial complex is simply an undirected graph.

A {\em filter function} is a function on the vertices of~$\SComplex$, which we equivalently view as a vector~$\filt\in \R^n$. 
Given~$t\in \R$, we have the sub complexes~$\SComplex_{\leq t} = \{ \simplex \in \SComplex,\ \forall v \in \simplex, \ \filt(v) \leq t\}$.
This yields a nested sequence of sub complexes called the \emph{sublevel sets} filtration of $\SComplex$ by $\filt$: 
\begin{equation}
\label{eq:ordinary_filtration}
\begin{tikzcd}[column sep= normal]
\emptyset \arrow[r] & \cdots \arrow[r] & \SComplex_{\leq s} \arrow[r, "s \leq t"] & \SComplex_{\leq t} \arrow[r] & \cdots \arrow[r]  & \SComplex,
\end{tikzcd}
\end{equation}
See \cref{fig:illu_filtration} for an illustration on graphs. The \emph{(Ordinary) Persistent Homology} of~$\filt$ in degree~$p\in \{0,\cdots,\dim \SComplex\}$ records topological changes in~\cref{eq:ordinary_filtration} by means of points~$(b,d)\in \R^2$, here~$b<d$, called \emph{intervals}. For instance, in degree~$p=0$, an interval~$(b,d)$ corresponds to a connected component that appears in~$\SComplex_{\leq b}$ and that is merged with an older component in~$\SComplex_{\leq d}$. In degree~$p=1$ and~$p=2$, intervals track loops and cavities respectively, and more generally an interval~$(b,d)$ in degree~$p$ tracks a $p$-dimensional sphere that appears in~$\SComplex_{\leq b}$ and persists up to~$\SComplex_{\leq d}$.  

Note that there are possibly infinite intervals~$(b,\infty)$ for $p$-dimensional cycles that persist forever in the filtration~\cref{eq:ordinary_filtration}. Such intervals are not easy to handle in applications, and it is common to consider the \emph{(Extended) Persistent Homology} of~$\filt$, for which they do not occur, i.e. we append the following sequence of pairs involving superlevel sets~$K_{\geq t} \defeq \{\sigma \in K |\ \forall v \in \sigma, \ \filt(v) \geq t \}$ to~\cref{eq:ordinary_filtration}:
\begin{equation}
\label{eq:ordinary_filtration_superlevel}
\begin{tikzcd}[column sep= normal]
\SComplex\cong (\SComplex,\emptyset) \arrow[r] & \cdots \arrow[r] & (\SComplex,\SComplex_{\geq s}) \arrow[r, "s \geq t"] & (\SComplex,\SComplex_{\geq t}) \arrow[r] & \cdots \arrow[r]  & (\SComplex,\SComplex).
\end{tikzcd}
\end{equation}
Together intervals~$(b,d)$ form the (extended)~\emph{barcode}~$\persmap_p(\filt)$ of~$\filt$ in degree~$p$, which we simply denote by~$\persmap(\filt)$ when the degree is clear from the context.   
\begin{definition}
\label{def:barcode}
A {\em barcode} is a finite multi-set of pairs~$(b,d)\in \R^2$ called {\em intervals}, with~$b\leqslant d$. Two barcodes differing by intervals of the form~$(b,b)$ are identified. 
We denote by~$\Barc$ the set of barcodes.
\end{definition}
The set~$\Barc$ of barcodes can be made into a metric space as follows. Given two barcodes~$D \defeq \{(b,d)\}$ and~$D'\defeq\{(b',d')\}$, a \emph{partial matching}~$\gamma: D\rightarrow D'$ is a bijective map from some subset $A \subseteq D$ to some $B \subseteq D'$. 
For~$q\geq 1$ the \emph{$q$-th diagram distance}~$W_q(D,D')$ is the following cost of transferring intervals~$(b,d)$ to intervals~$(b',d')$, minimized over partial matchings~$\gamma$ between $D$ and $D'$:
\begin{equation}\label{eq:def_W_q}
\begin{aligned}
    W_q(D,D')\defeq \inf_{\gamma}   \bigg( &\sum_{(b,d)\in A} \|\gamma(b,d)-(b,d)\|_2^q \\
    &+  \sum_{(b,d)\in D \backslash A} \left(\frac{d-b}{\sqrt{2}}\right)^q + \sum_{(b',d') \in D' \backslash B}  \left(\frac{d'-b'}{\sqrt{2}}\right)^q   \bigg)^{\frac{1}{q}}. 
\end{aligned}
\end{equation}
In particular the intervals that are not in the domain~$A$ and image~$B$ of~$\gamma$ contribute to the total cost relative to their distances to the diagonal~$\{b=d\} \subset \R^2$. 

The Stability Theorem~\cite{cohen2007stability,cohen2010lipschitz} implies that the map~$\persmap:\R^n\rightarrow \Barc$, which we refer to as the \emph{persistence map} in what follows, is Lipschitz continuous.

\paragraph{Differentiability of Persistent Homology} Next we recall from~\cite{leygonie2019framework} the notions of differentiability for maps in and out of~$\Barc$ and the differentiability properties of~$\persmap$. Note that the results of~\cite{leygonie2019framework} focus on ordinary persistence, yet they easily adapt to extended persistence, see e.g.~\cite{yim2021optimisation}. 

Given~$r\in \N$, we define a {\em local $\mathcal{C}^r$-coordinate system} as a collection of~$C^r$ real-valued maps coming in pairs~$b_i,d_i: U\rightarrow \R$ defined on some open Euclidean set~$U\subseteq \R^n$, indexed by a finite set~$I$, and satisfying~$b_i(\elt)\leq d_i(\elt)$ for all~$\elt\in  U$ and~$i\in I$. A local $\mathcal{C}^r$-coordinate system is thus equally represented as a map valued in barcodes
\[\tilde{B}:\elt \in U \mapsto \big\{b_i(x),d_i(x)\big\}_{i\in I}\in \Barc,\] 
where each interval~$(b_i(x),d_i(x))$ is identified and tracked in a~$\mathcal{C}^r$ manner.
%
\begin{definition}
\label{def:differentiability_in_barcode}
A map~$B: \R^n\rightarrow \Barc$ is {\em $r$-differentiable} at~$\elt\in \R^n$ if~$B_{|U}=\tilde{B}_{|U}$ for some local $\mathcal{C}^r$-coordinate system~$\tilde{B}$ defined in a neighborhood~$U$ of~$\elt$.
\end{definition}
Similarly,
\begin{definition}
\label{def:differentiability_out_barcode}
A map~$\V: \Barc \rightarrow \R^m$ is {\em $r$-differentiable} at~$D\in \Barc$ if ~$\V \circ \tilde{B}:\R^n\rightarrow \R^m$ is of class~$\mathcal{C}^r$ in a neighborhood of the origin for all~$n\in \N$ and local $\mathcal{C}^r$-coordinate system~$\tilde{B}$ defined around the origin such that~$\tilde{B}(0)=D$.
\end{definition}
These notions compose together via the chain rule, so for instance an objective function~$f=\V\circ B: \R^n \rightarrow \R^m$ is differentiable in the usual sense as soon as~$B$ and~$\V$ are so. 

We now define the stratification~$\Stratif$ of~$\R^n$ such that the persistence map~$B=\persmap$ is $r$-differentiable (for any~$r$) over each stratum. Denote by~$\Perm$ the group of permutations on~$\{1,\cdots,n\}$. Each permutation~$\pi\in \Perm$ gives rise to a closed polyhedron
\begin{equation}
\label{eq:stratum_permutation}
\StratumPi\defeq\bigg\{ x\in \R^{n} \, | \, \forall 1\leqslant i < n, \, x_{\pi(i)}\leqslant x_{\pi(i+1)} \bigg\},
\end{equation}
which is a \emph{cell} in the sense that its (relative) interior is a top-dimensional stratum of our stratification~$\Stratif$. The (relative) interiors of the various faces of the cells~$\StratumPi$ form the lower dimensional strata. In terms of filter functions, a stratum is simply a maximal subset whose functions induce the same pre-order on vertices of~$\SComplex$. We then have that any persistence based loss is stratifiably smooth w.r.t. this stratification.
\begin{proposition}
\label{prop:persistence_stratifiably_smooth}
Let~$\V: \Barc\rightarrow \R$ be a $2$-differentiable map. Then the objective function~$\Loss\defeq\V\circ \persmap$ is stratifiably smooth for the stratification~$\Stratif$ induced by the permutation group~$\Perm$. 
\end{proposition}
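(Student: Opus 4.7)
The plan is to verify the two requirements of \cref{def:stratified_map}: first, that the cells $\StratumPi$ and their faces induce a valid stratification of $\R^n$, and second, to exhibit for each stratum a $C^2$ extension of the restriction of $\Loss$. For the stratification itself, the polyhedra $\{\StratumPi\}_{\pi \in \Perm}$ are the maximal closed cells of the braid hyperplane arrangement $\{x_i = x_j\}_{i<j}$ in $\R^n$; its open faces form a finite partition of $\R^n$ by smooth (in fact affine) submanifolds, and the closure of each open face is the union of itself together with lower-dimensional open faces, which gives the frontier condition of \cref{def:stratification}. Two filter functions lie in the same stratum if and only if they induce the same pre-order on the vertices of $\SComplex$.

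The core of the argument is the construction, for each $\pi \in \Perm$, of a globally defined map $\tilde{B}_\pi: \R^n \to \Barc$ that is a $\mathcal{C}^r$-coordinate system in the sense of \cref{def:differentiability_in_barcode} and agrees with $\persmap$ on the interior $\StratumPi^\circ$. On this interior the vertex values $\elt_{\pi(1)} < \cdots < \elt_{\pi(n)}$ are strictly ordered, so in the (extended) lower-star filtration every simplex has a unique maximal vertex and the simplices are totally ordered by entry time, with any remaining ties broken consistently (e.g.\ lexicographically); this combinatorial order is constant throughout $\StratumPi^\circ$. Since the output of the persistence pairing algorithm depends only on the order of simplices, there is a fixed list of vertex pairs $\{(v_{b_i}^\pi, v_{d_i}^\pi)\}_{i \in I}$ such that
\[
\tilde{B}_\pi(\elt) \defeq \big\{(\elt_{v_{b_i}^\pi}, \elt_{v_{d_i}^\pi})\big\}_{i \in I}
\]
coincides with $\persmap(\elt)$ for every $\elt \in \StratumPi^\circ$. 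Each endpoint is a linear coordinate of $\elt$, hence of class $C^\infty$, so $\tilde{B}_\pi$ is globally a $\mathcal{C}^r$-coordinate system for every $r$. By \cref{def:differentiability_out_barcode}, the composition $\V \circ \tilde{B}_\pi: \R^n \to \R$ is of class $C^2$, and it agrees with $\Loss$ on $\StratumPi^\circ$, providing the required extension there.

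For a lower-dimensional stratum $\Stratum$, we select any $\pi \in \Perm$ with $\Stratum \subseteq \overline{\StratumPi^\circ}$ and reuse $\V \circ \tilde{B}_\pi$ as a $C^2$ extension of $\Loss_{|\Stratum}$; the fact that the two maps coincide on $\Stratum$ follows from continuity of $\persmap$ (via the Stability Theorem) and the density of $\StratumPi^\circ$ in $\overline{\StratumPi^\circ}$. The main delicate point is justifying that the persistence pairing is genuinely determined by the combinatorial simplex order, which is a standard consequence of the matrix reduction algorithm; for the superlevel portion of the extended filtration the same invariance applies by symmetry, and the tie-breaking rules between simplices sharing the same maximal vertex do not affect the resulting multiset of intervals. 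Together these facts yield the two items of \cref{def:stratified_map}.
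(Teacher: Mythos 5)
Your proof is correct and follows essentially the same route as the paper's: both reduce the statement to the fact that on each cell $\StratumPi$ the persistence map is given by a local coordinate system of linear (coordinate-projection) maps $b_i,d_i$, which therefore extends smoothly to a neighborhood, and then compose with $\V$ using its $2$-differentiability. The only difference is that the paper simply cites Proposition~4.23 and Corollary~4.24 of the differentiability framework for this key fact, whereas you re-derive it from the constancy of the simplex order (and hence of the persistence pairing) on the interior of each cell and handle the lower-dimensional strata and tie-breaking explicitly.
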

\begin{proof}
From Proposition~4.23 and Corollary~4.24 in~\cite{leygonie2019framework}, on each a cell~$\StratumPi$ we can define a local $C^2$ coordinate system that consists of linear maps~$b_i,d_i: \StratumPi\rightarrow \R $, in particular it admits a~$C^2$ extension on a neighborhood of~$\StratumPi$. Since~$\V$ is globally~$2$-differentiable, by the chain rule, we incidentally obtain a local $C^2$ extension~$f_i$ of~$f_{|\StratumPi}=(\V\circ \persmap)_{|\StratumPi}$.
\end{proof}
\begin{remark}
\label{rke:proper}
Note that the condition that $\Loss=\V\circ \persmap$ is a proper map, as required for the analysis of \cref{alg:minimization}, sometimes fails because~$\persmap$ may not have compact level-sets. The intuitive reason for this is that a filter function~$\filt$ can have an arbitrarily large value on two distinct entries---one simplex creates a homological cycle that the other destroys immediately---that may not be reflected in the barcode~$\persmap(\filt)$. Hence the fiber of~$\persmap$ is not bounded. However, when the simplicial complex~$K$ is (homeomorphic to) a compact oriented manifold, any filter function must reach its maximum at the simplex that generates the fundamental class of the manifold (or one of its components), hence~$\persmap$ has compact level-sets in this case.  
Finally, we note that it is always possible to turn a loss function~$\Loss$ based on~$\persmap$ into a proper map by adding a regularization term that controls the norm of the filter function~$\filt$. 
\end{remark}
\subsection{Exploring the space of strata}
\label{sec:exploration_strata_persistence}
In the setting of \cref{prop:persistence_stratifiably_smooth}, the objective function~$f=\V\circ \persmap:\R^n \rightarrow \R$ is a stratifiably smooth map, where the stratification involved is induced by the group~$\Perm$ of permutations on~$\{1,\cdots,n\}$, with cells~$\StratumPi$ as in \cref{eq:stratum_permutation}. In order to calculate the approximate subgradient~$\Goldstein (\elt)$, we need to compute the set~$\Stratif_{\elt,\epsilon}$ of cells~$\StratumPi$ that are at Euclidean distance no greater than~$\epsilon$ from~$\elt$:
\begin{equation}
\label{eq:stratum_notfar}
d_2^2(\elt,\StratumPi)\defeq\min_{p\in \StratumPi} \|\elt-p\|_2^2\leqslant \epsilon^2.
\end{equation}
\paragraph{Estimating distances to strata} In practice however, solving the quadratic problem of \cref{eq:stratum_notfar} to compute~$d_2(\elt,\StratumPi)$ can be done in $O(n \log n)$ time using solvers for isotonic regression~\cite{best1990active}. 
Since we want to approximate many such distances to neighboring cells, we rather propose the following estimate which boils down to $O(1)$ computations to estimate $d_2(\elt,\StratumPi)$. 
For any~$\pi\in \Perm$, we consider the \emph{mirror of~$\elt$ in~$\StratumPi$}, denoted by~$\elt^{\pi}\in \R^n$ and obtained by permuting the coordinates of~$\elt$ according to~$\pi$:
\begin{equation}
\label{eq:x_pi}
\forall 1\leqslant i \leqslant n, \qquad \elt^{\pi}_{\pi(i)}\defeq\elt_{i}.
\end{equation}
In the rest of this section, we assume that the point~$\elt$ is fixed and has increasing coordinates, $\elt_i\leqslant \elt_{i+1}$, which can always be achieved after a suitable re-ordering of these coordinates. The proxy $d_2(\elt,\elt^\pi)$ then yields a good estimate of~$d_2(\elt,\StratumPi)$, as expressed by the following result.
\begin{proposition}
\label{prop:x_pi_good_approx_distance_to_strata}
For any permutation~$\pi\in \Perm$, we have:
\[d_2(\elt,\StratumPi)\leqslant d_2(\elt,\elt^\pi)\leqslant 2d_2(\elt,\StratumPi). \]
\end{proposition}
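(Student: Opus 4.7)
The plan is to split the double inequality into two parts, with the lower bound essentially by inspection and the upper bound by a triangle inequality combined with the classical rearrangement inequality.

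For the lower bound $d_2(\elt,\StratumPi)\leq d_2(\elt,\elt^\pi)$, the key observation is that the mirror $\elt^\pi$ itself already lies in $\StratumPi$. Indeed, from \cref{eq:x_pi} its coordinates along the sequence $\pi(1),\ldots,\pi(n)$ are precisely $\elt_1,\ldots,\elt_n$, which is non-decreasing by the standing assumption that $\elt$ has increasing coordinates. Thus $\elt^\pi$ is a competitor in the infimum defining $d_2(\elt,\StratumPi)$, and the left inequality follows immediately.

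For the upper bound, I would fix a minimizer $p\in\StratumPi$ of $\|\elt-\cdot\|_2$ and apply the triangle inequality
\[d_2(\elt,\elt^\pi)\;\leq\;\|\elt-p\|_2+\|p-\elt^\pi\|_2\;=\;d_2(\elt,\StratumPi)+\|p-\elt^\pi\|_2.\]
The whole problem then reduces to showing $\|p-\elt^\pi\|_2\leq \|\elt-p\|_2$. My plan is to re-index both sides through $\pi$: setting $\hat{p}_i\defeq p_{\pi(i)}$, the first squared norm equals $\sum_i(\hat{p}_i-\elt_i)^2$ (using $\elt^\pi_{\pi(i)}=\elt_i$) while the second equals $\sum_i(\hat{p}_i-\elt_{\pi(i)})^2$. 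Now $\hat{p}$ is non-decreasing because $p\in \StratumPi$, $\elt$ is non-decreasing by assumption, and $(\elt_{\pi(i)})_i$ is a permutation of $(\elt_i)_i$ with the same $\ell_2$-norm. Expanding the squares and cancelling the common terms $\|\hat{p}\|_2^2$ and $\|\elt\|_2^2$, the inequality reduces to $\langle \hat{p},(\elt_{\pi(i)})_i\rangle\leq\langle \hat{p},\elt\rangle$, which is exactly the rearrangement inequality applied to the two sorted vectors $\hat{p}$ and $\elt$ against the arbitrary permutation $(\elt_{\pi(i)})_i$.

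The main obstacle is really just bookkeeping: keeping track of which permutation acts on which vector, and being careful that the re-indexing $i\mapsto \pi(i)$ is an isometry of $\R^n$. Once that change of variables is set up cleanly, the inequality collapses to a single invocation of rearrangement, and the triangle inequality then delivers the factor of $2$.
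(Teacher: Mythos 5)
Your proof is correct and follows essentially the same route as the paper's: the left inequality because the mirror $\elt^\pi$ lies in $\StratumPi$, and the right inequality by the triangle inequality through the projection $p=\hat{\elt}^\pi$ together with the comparison $\|p-\elt^\pi\|_2\leq\|\elt-p\|_2$. The only cosmetic difference is that you derive that comparison explicitly from the rearrangement inequality, whereas the paper invokes it as a known fact from discrete optimal transport (or as a consequence of its \cref{lemma:inversion_distance}).
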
 
\begin{proof}
The left inequality is clear from the fact that~$\elt^{\pi}$ belongs to the cell~$\StratumPi$. To derive the right inequality, let~$\hat{\elt}^\pi$ be the projection of~$\elt$ onto~$\StratumPi$. It is a well-known fact in the discrete optimal transport literature, or alternatively a consequence of \cref{lemma:inversion_distance} below, that
\[d_2(\elt^{\pi},\hat{\elt}^\pi)= \min_{\tau \in \Perm} d_2(\elt^{\tau},\hat{\elt}^\pi), \]
so that in particular $d_2(\elt^{\pi},\hat{\elt}^\pi)\leqslant d_2(\elt,\hat{\elt}^\pi)$. Consequently,
\[d_2(\elt,\elt^{\pi})\leqslant d_2(\elt,\hat{\elt}^\pi)+ d_2(\elt^{\pi},\hat{\elt}^\pi)\leqslant 2 d_2(\elt,\hat{\elt}^\pi)= 2 d_2(\elt,\StratumPi).\]
\end{proof}
Our approximate oracle for estimating the Goldstein subgradient, see \cref{section_approx_distance_strata}, computes the set of mirrors~$\elt^{\pi}$ that are at most~$\epsilon$-away from the current iterate~$\elt\defeq \elt_k$, that is: 
\[\ApproxSampleOracle(\elt,\epsilon)\defeq\big\{\elt^{\pi}\, | \, d_2(\elt,\elt^{\pi})\leq \epsilon, \pi\in \Perm  \big\}.\]
\begin{remark}
\label{remark_avoiding_recomputation}
Recall that the oracle is called several times in \cref{alg:update step} when updating the current iterate~$\elt_k$ with a decreasing exploration radius~$\epsilon_k$. However, for the oracle above we have
\[\ApproxSampleOracle(\elt,\epsilon')\subseteq \ApproxSampleOracle(\elt,\epsilon) \text{ whenever } \epsilon'\leqslant \epsilon,\] 
so that once we have computed~$\ApproxSampleOracle(\elt_k,\epsilon_k)$ for an initial value $\epsilon_k$ and the current $\elt_k$, we can retrieve $\ApproxSampleOracle(\elt_k,\epsilon')$ for any $\epsilon' < \epsilon_k$ in a straightforward way, avoiding re-sampling neighboring points around~$\elt_k$ and computing the corresponding gradient each time~$\epsilon_k$ decreases, saving an important amount of computational resources.  
\end{remark}
\paragraph{Sampling in nearby strata} In order to implement the oracle~$\ApproxSampleOracle(\elt,\epsilon)$, we consider the Cayley graph with permutations~$\Perm$ as vertices and edges between permutations that differ by elementary transpositions (those that swap consecutive elements). In other words, the Cayley graph is the dual of the stratification of filter functions: a node corresponds uniquely to a cell~$\StratumPi$ and an edge corresponds to a pair of adjacent cells. 

We explore this graph starting at the identity permutation using an arbitrary exploration procedure, for instance the Depth-First Search (DFS) algorithm. During the exploration, assume that the current node, permutation~$\pi$, has not yet been visited (otherwise we discard it). If~$d_2(\elt,\elt^{\pi})\leq \epsilon$, then we record the mirror point~$\elt^{\pi}$. Else,~$d_2(\elt,\elt^{\pi})> \epsilon$, and in this case we do not explore the children of~$\pi$. Note that given a child~$\pi'$ of~$\pi$, we retrieve~$\elt^{\pi'}$ and~$d_2(\elt,\elt^{\pi'})$ from~$\elt^{\pi}$ and~$d_2(\elt,\elt^{\pi})$ in~$O(1)$ time. The following result entails that this procedure indeed computes~$\ApproxSampleOracle(\elt,\epsilon)$. 
\begin{proposition}
\label{prop:cayley_exploration}
%
Let~$\pi'\in \Perm$ be a permutation differing from the identity. Then there must be at least one parent~$\pi$ of~$\pi'$ in the Cayley graph such that~$d_2(\elt,\elt_{\pi})\leq d_2(\elt,\elt_{\pi'})$.
\end{proposition}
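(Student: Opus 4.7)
The plan is to exploit the standard fact that for any non-identity permutation $\pi'$, its parents in the Cayley graph rooted at the identity correspond to elementary transpositions that remove a descent of $\pi'$. Since $\pi'$ differs from the identity, there exists an index $i \in \{1, \ldots, n-1\}$ with $\pi'(i) > \pi'(i+1)$. Setting $\pi \defeq \pi' \tau_i$, where $\tau_i$ denotes the elementary transposition swapping $i$ and $i+1$, a short inversion count shows that $\pi$ has exactly one fewer inversion than $\pi'$, hence $\pi$ is a parent of $\pi'$.

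Next, I would compare the mirrors $\elt^{\pi}$ and $\elt^{\pi'}$ coordinate by coordinate. By the definition in \cref{eq:x_pi}, these two points agree except at the two coordinates $a \defeq \pi'(i)$ and $b \defeq \pi'(i+1)$, where the entries $\elt_i$ and $\elt_{i+1}$ are swapped. Expanding the squared Euclidean distances and noting that all other terms cancel, one obtains
\[
d_2(\elt, \elt^{\pi'})^2 - d_2(\elt, \elt^{\pi})^2 = 2\,(\elt_a - \elt_b)\,(\elt_{i+1} - \elt_i).
\]
The right-hand side is non-negative: the factor $\elt_{i+1} - \elt_i$ is non-negative by the standing hypothesis that $\elt$ has increasing coordinates, and the factor $\elt_a - \elt_b$ is non-negative because $a = \pi'(i) > \pi'(i+1) = b$ combined with the monotonicity of $\elt$. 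This yields the desired inequality $d_2(\elt, \elt^{\pi}) \leq d_2(\elt, \elt^{\pi'})$.

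The only delicate point is the bookkeeping of indices: swapping \emph{positions} $i$ and $i+1$ in the permutation $\pi'$ translates to swapping the values indexed by $\pi'(i)$ and $\pi'(i+1)$ in the mirror point $\elt^{\pi'}$. Once this identification is set up correctly, the argument reduces to the classical two-term exchange underlying the rearrangement inequality, which is in fact the same mechanism at play in \cref{prop:x_pi_good_approx_distance_to_strata}.
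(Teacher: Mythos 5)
Your proof is correct and follows essentially the same route as the paper: identify a descent $i$ of $\pi'$, pass to the adjacent permutation with one fewer inversion, and observe that the distance can only decrease by a rearrangement-type exchange argument. The paper simply outsources the final computation to the stated Lemma~\ref{lemma:inversion_distance} on inversion-set inclusion (quoted without proof as well known), whereas you carry out the one-step two-term exchange explicitly, including the correct position-versus-value bookkeeping for the mirrors $\elt^{\pi}$ and $\elt^{\pi'}$.
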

\Cref{prop:cayley_exploration} is a straight consequence of the following well-known, elementary lemma.

%
\begin{lemma}
\label{lemma:inversion_distance}
Let $\elt,y\in \R^n$ be two vectors whose coordinates are sorted in the same order, namely $\elt_i\leqslant \elt_j \Leftrightarrow y_i\leqslant y_j$. Given~$\pi\in \Perm$ a permutation, let~$\inv(\pi)$ be the set of inversions, i.e. pairs~$(i,j)$ satisfying $(j-i)(\pi(j)-\pi(i))<0$. Then 
\[\inv(\pi)\subseteq \inv(\pi') \Rightarrow \sum (\elt_i-y_{\pi(i)})^2 \leqslant \sum (\elt_i-y_{\pi'(i)})^2.\]
\end{lemma}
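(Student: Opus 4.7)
My plan is to reduce the lemma to the case where $\pi'$ differs from $\pi$ by a single elementary left multiplication $\pi' = s_k \pi$, and then handle that base case by a short two-term calculation. I adopt the reading of the hypothesis consistent with the application in \Cref{prop:x_pi_good_approx_distance_to_strata}, namely that both $x$ and $y$ are sorted in increasing order, i.e., $x_1\leq\cdots\leq x_n$ and $y_1\leq\cdots\leq y_n$.

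The core combinatorial input is that $\inv(\pi)\subseteq \inv(\pi')$ characterizes the left weak Bruhat order on~$\Perm$: there exists a chain $\pi = \pi_0,\pi_1,\ldots,\pi_m=\pi'$ with each $\pi_{l+1} = s_{k_l}\pi_l$ a length-increasing elementary left multiplication. Each such step adds exactly one inversion, at the position pair $(a_l,b_l) := (\pi_l^{-1}(k_l),\pi_l^{-1}(k_l+1))$ with $a_l<b_l$, and leaves every other inversion unchanged. To make this self-contained I would induct on $|\inv(\pi')\setminus\inv(\pi)|$: when positive, pick $(a,b)\in \inv(\pi')\setminus \inv(\pi)$ with $a<b$ (so $\pi(a)<\pi(b)$) minimizing the value gap $\pi(b)-\pi(a)$, and examine the position $c := \pi^{-1}(\pi(a)+1)$. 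Each of the three cases $c\in (a,b)$, $c<a$, and $c>b$ produces a pair in $\inv(\pi')\setminus\inv(\pi)$ with strictly smaller value gap (using $\inv(\pi)\subseteq\inv(\pi')$ to promote the inversions $(c,a)$ in the case $c<a$, or $(b,c)$ in the case $c>b$, from $\pi$ into $\pi'$), contradicting minimality. Hence $c=b$, i.e., $\pi(a)+1 = \pi(b)$, and taking $k := \pi(a)$ yields a first step $\pi_1 := s_k\pi$ with $\inv(\pi_1) = \inv(\pi)\cup\{(a,b)\}\subseteq \inv(\pi')$, permitting the recursion.

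For one such step, the values $k_l$ and $k_l+1$ are swapped between positions $a_l$ and $b_l$ while every other position is left untouched, so the two sums differ only in those two terms, and a direct expansion gives
\begin{equation*}
\sum_i (x_i - y_{\pi_{l+1}(i)})^2 - \sum_i (x_i - y_{\pi_l(i)})^2 = 2(x_{b_l}-x_{a_l})(y_{k_l+1}-y_{k_l})\geq 0,
\end{equation*}
where non-negativity follows from $a_l<b_l$, $k_l<k_l+1$, and the monotonicity of both $x$ and $y$. Telescoping these non-negative increments along the chain yields the conclusion. The only delicate step is the combinatorial reduction above, where minimality of the value gap is used to peel off a single adjacent-valued inversion from $\inv(\pi')\setminus\inv(\pi)$; once the chain is in hand, the single-step inequality is a routine two-term computation.
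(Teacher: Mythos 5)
Your proof is correct, but there is nothing in the paper to compare it against: the authors state \cref{lemma:inversion_distance} as a ``well-known, elementary lemma'' and give no proof, so your write-up is filling a genuine gap rather than reproducing an argument. Your route --- characterizing $\inv(\pi)\subseteq\inv(\pi')$ as the (left) weak order, building a chain of length-increasing left multiplications by adjacent transpositions $s_k$, and checking the two-term inequality $2(\elt_{b}-\elt_{a})(y_{k+1}-y_{k})\geq 0$ at each step --- is exactly the machinery the authors gesture at in their (commented-out) discussion of \cref{prop:cayley_exploration} via the Bruhat order and inversion sets, so it is surely the intended argument; your self-contained induction on $|\inv(\pi')\setminus\inv(\pi)|$, peeling off an inversion $(a,b)$ with $\pi(b)=\pi(a)+1$ by minimizing the value gap, checks out in all three cases and correctly uses the hypothesis $\inv(\pi)\subseteq\inv(\pi')$ to promote the auxiliary inversions. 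One point worth making explicit, which you handled correctly but tersely: the reading of the hypothesis matters. As literally written (``$\elt_i\leq\elt_j\Leftrightarrow y_i\leq y_j$'' with no monotonicity), the lemma is false --- take $\elt=y=(0,2,1)$, $\pi=[2,3,1]$ and $\pi'=[3,2,1]$, so that $\inv(\pi)=\{(1,3),(2,3)\}\subseteq\inv(\pi')$ yet the left-hand sum is $6$ and the right-hand sum is $2$. Your choice to require both vectors nondecreasing is therefore not a convenience but a necessary correction, and it is the reading under which the lemma is actually applied in \cref{prop:x_pi_good_approx_distance_to_strata} (after conjugating by $\pi$ so that both $\elt^{\pi}$ and $\hat{\elt}^{\pi}$ become increasing in the standard order). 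The only cosmetic omission is the base case of your induction, namely that $\inv(\pi)=\inv(\pi')$ forces $\pi=\pi'$; this is standard (a permutation is determined by its inversion set) but deserves a half-sentence.
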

\begin{remark}
\label{rke:dijkstra}
For an arbitrary filter function~$\filt$, the computation of the barcode~$\persmap(\filt)$ has complexity~$O(\#\SComplex^3)$, here~$\#\SComplex$ is the number of vertices and edges in the graph~$\SComplex$ (or the number of simplices if~$\SComplex$ is a simplicial complex). In the SGS algorithm we need to compute~$\persmap(\filt_\pi)$ for each cell~$\StratumPi$ near the current iterate~$\filt_k$, which can quickly become too expansive. Below we describe two heuristics that we implemented in some of our experiments (see \cref{subsec:expe_topomean}) to reduce time complexity.

The first method bounds the number of strata that can be explored with a hyper-parameter~$\CardStrata\in \N$, enabling a precise control of the memory footprint of the algorithm. In this case exploring the Cayley graph of~$\Perm$ using Dijkstra's algorithm is desirable, since it allows to retrieve the~$N$ strata that are the closest to the current iterate~$\filt_k$. Note that in the original Dijkstra's algorithm for computing shortest-path distances to a source node, each node is reinserted in the priority queue each time one of its neighbors is visited. However in our case we dispose of the exact distances~$d(\elt,\elt_\pi)$ to the source each time we encounter a new node, permutation~$\pi$, so we can simplify Dijkstra's algorithm by treating each node of the graph at most once. 
The second approach is \emph{memoization}: inside a cell~$\StratumPi$, all the filter functions induce the same pre-order on the~$n$ vertices of~$\SComplex$, hence the knowledge of the barcode~$\persmap(\filt_{\pi})$ of one of its filter functions allows to compute~$\persmap(\filt_{\pi}')$ for any other~$\filt_{\pi}'\in \StratumPi$ in~$O(\#\SComplex)$ time. We can take advantage of this fact by recording the cells~$\StratumPi$ (and the barcode~$\persmap(\filt_{\pi})$ of one filter function~$\filt_{\pi}$ therein) that are met by the SGS (or GS) algorithm during the optimization, thereby avoiding redundant computations whenever the cell~$\StratumPi$ is met for a second time. 
\end{remark}

\section{Experiments}
\label{sec:experiments}
In this section we apply our approach to the optimisation of objective functions based on the persistence map~$\persmap$, and compare it with other methods.
There are two natural classes of objective functions that we can build on top of the barcode~$\persmap(\filt)$. One consists in turning~$\persmap(\filt)$ into a vector using one of the many existing vectorisation techniques for barcodes \cite{bubenik2015statistical,adams2017persistence,chung2019persistence,carriere2020perslay} and then to apply any standard objective function defined on Euclidean vector space. 
In this work we focus on the second type of objective functions which are based on direct comparisons of barcodes by means of metrics $W_q$ on~$\Barc$ as introduced in \cref{sec:persistence_optim}. 

We consider three experimental settings in increasing level of complexity. \Cref{subsec:total_pers} is dedicated to the optimization of an elementary objective function in TDA that allows for explicit comparisons of SGS with other optimization techniques. \Cref{subsec:expe_registration} and \cref{subsec:expe_topomean} introduce two novel topological optimization tasks: that of \emph{topological registration} for translating filter functions between two distinct simplicial complexes, and that of \emph{topological Fréchet mean} for smoothing the \emph{Mapper graphs} built on top of a data set. 

\paragraph{Implementation} Our implementation is done in \texttt{Python 3} and relies on \texttt{TensorFlow}~\cite{abadi2016tensorflow} for automatic-differentiation, \texttt{Gudhi} \cite{maria2014gudhi} for TDA-related computations (barcodes, distances~$W_q$, Mapper graphs), \texttt{cvxpy} \cite{diamond2016cvxpy} for solving the quadratic minimization problem involved in \cref{alg:generalized gradient}. Our implementation handles both ordinary and extended persistence, complexes of arbitrary dimension, and can easily be tuned to enable general objective functions (assuming those are provided in an automatic differentiation framework). Our code is publicly available at \url{\repolink}. 

\subsection{Proof-of-concept: Minimizing total extended persistence}
\label{subsec:total_pers} 

\begin{figure}
    \centering
    \includegraphics[width=\textwidth]{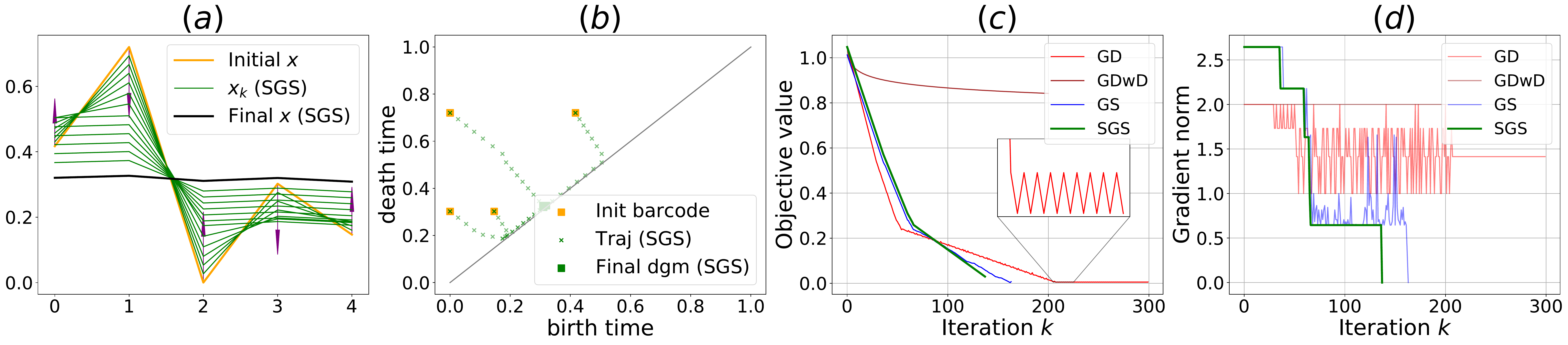}
    \caption{Comparison of vanilla Gradient Descent (GD), Gradient Descent with decay (GDwD), Gradient Sampling (GS) and our Stratified Gradient Sampling (SGS) on a toy example. (a) The evolution of filter functions $(\filt_k)_k$ as the total extended persistence~$\Pers$ is minimised with the SGS method. Purple arrows indicate descent direction at $k=0$. As expected, the minimization tends to make $(\filt_k)_k$ topologically as trivial as possible, that is flat in this context (b) The barcodes~$\persmap(\filt_k)$ represented as \emph{persistence diagrams} extended with the point $(\min(\filt),\max(\filt))$. (c) The value~$\Pers(\filt_k)$ of the objective function across iterations~$k$. (d) The corresponding gradient norms~$(\|g_k\|)_k$. Only GS and SGS reach the stopping criterion $\|g_k\| < \eta$.}
    \label{fig:poc_tda}
\end{figure}

The goal of this experiment is to provide a simple yet instructive framework where one can clearly compare different optimization methods. Here we consider the vanilla Gradient Descent (GD), its variant with learning-rate Decay (GDwD), the Gradient Sampling (GS) methodology and our Stratified Gradient Sampling (SGS) approach. Recall that GD is very well-suited to smooth optimization problems, while GS deals with objective functions that are merely locally Lipschitz with a dense subset of differentiability. To some extent, SGS is specifically tailored to functions with an intermediate degree of regularity since their restrictions to strata are assumed to be smooth, and this type of functions arise naturally in TDA. 

We consider the elementary example of filter functions~$\filt$ on the graph obtained from subdividing the unit interval with~$n$ vertices and the associated (extended) barcodes~$\persmap(\filt)=\persmap_0(\filt)$ in degree~$0$.\footnote{In this setting the extended barcode can be derived from the ordinary barcode by adding the interval $(\min(\filt),\max(\filt))$.} 
When the target diagram is empty, $D=\emptyset$, the objective~$\filt \mapsto W_1(\persmap(\filt),\emptyset)$ to minimize is also known in the TDA literature as the \emph{total extended persistence} of~$\persmap(\filt)$:
\[\Pers: \filt \in \R^n \longmapsto \sum_{(b,d)\in \persmap(\filt)} |d-b| \in \R. \]
Throughout the minimization, the sublevel sets of~$\filt$ are simplified until they become topologically trivial: $\Pers(\filt)$ is minimal if and only if $\filt$ is constant. This elementary optimization problem enables a clear comparison of the GD, GS and SGS methods. 

For each~$\mathrm{mode} \in \{\mathrm{GD},\mathrm{GDwD},\mathrm{GS},\mathrm{SGS}\}$ we get a gradient~$g_k^{\mathrm{mode}}$ and thus build a sequence of iterates 
\[ \filt_{k+1} \defeq \filt_k - \epsilon_k g^\mathrm{mode}_k,\ k\geq 0.\]
For GD, the update step $\epsilon_k = \epsilon$ is constant, for GDwD it is set to be $\epsilon_k = \epsilon/(1+k)$, and for $\mathrm{mode}\in \{\mathrm{GS},\mathrm{SGS}\}$ it is reduced until~$\Pers(x_k - \epsilon_k g_k^\mathrm{mode}) < \Pers(x_k) - \beta \epsilon_k \|g_k^\mathrm{mode}\|^2$ (and in addition $\epsilon_k < C_k \|g_k^\mathrm{SGS}\|$ for SGS). In each case the condition~$\|g_k^\mathrm{mode}\| \leq \eta$ is used as a stopping criterion. 

For the experiments, the graph consists of~$n=5$ vertices,~$x_0 = (0.4, 0.72,0,0.3,0.14)$,~$\epsilon=\eta=0.01$, and we also have the hyper-parameters~$\gamma = 0.5$ and~$\beta = 0.5$ for the GS and SGS algorithm. The results are illustrated in \cref{fig:poc_tda}. 

Whenever differentiable, the objective~$\Pers$ has gradient norm greater than~$1$, so in particular it is \emph{not} differentiable at its minima, which consists of constant functions. Therefore GD oscillates around its optimal value: the stopping criterion~$\|g_k^\mathrm{GD}\| \leq \eta$ is never met which prevents from detecting convergence. Setting $\epsilon_k$ to decay at each iteration in GDwD theoretically ensures the convergence of the sequence $(\filt_k)_k$, but comes at the expense of a dramatic decrease of the convergence rate. 

In contrast, the GS and SGS methods use a fixed step-size $\epsilon_k$ yet they converge since they compute a descent direction by minimizing $\|g\|$ over the convex hull of the surrounding gradients\\ $\{\nabla \Pers(\filt_k), \nabla \Pers(\filt^{(1)}),\dots, \nabla \Pers(\filt^{(m)})\}$, as described in \cref{alg:gradient_sampling} and \cref{alg:generalized gradient}. 
Here~$\filt^{(1)}, \dots, \filt^{(m)}$ are either sampled randomly around the current iterate~$\filt_k$ (with $m=n+1$) for GS or in the strata around~$\filt_k$ (if any) for SGS. We observe that it takes less iterations for SGS to converge:~$137$ iterations versus~$\sim 165$ iterations for GS (averaged over 10 runs). 
This is because in GS the convex hull of the random sampling~$\{\filt^{(1)},\dots,\filt^{(m)} \}$ may be far from the actual generalized gradient~$\Goldstein$, incidentally producing sub-optimal descent directions and missing local minima, while in SGS the sampling takes all nearby strata into account which guarantees a reliable direction (as in \cref{prop:approx_descent}), and in fact since the objective~$\Pers$ restricts to a linear map on each stratum the approximate gradient~$\AppGoldstein(\filt_k)$ equals~$\Goldstein(\filt_k)$.

Another difference is that GS samples~$n+1=6$ nearby points at each iteration~$k$, while SGS samples as many points as there are nearby strata, and for early iterations there is just one such stratum. 
In practice, this results in a total running time of $\sim 2.7$s for GS vs.~$2.4$s for SGS to reach convergence.\footnote{Experiment run on a \texttt{Intel(R) Core(TM) i5-8350U @ 1.70GHz} CPU}    

\subsection{Topological Registration}
\label{subsec:expe_registration}

\begin{figure}
    \centering
    \includegraphics[width=\textwidth]{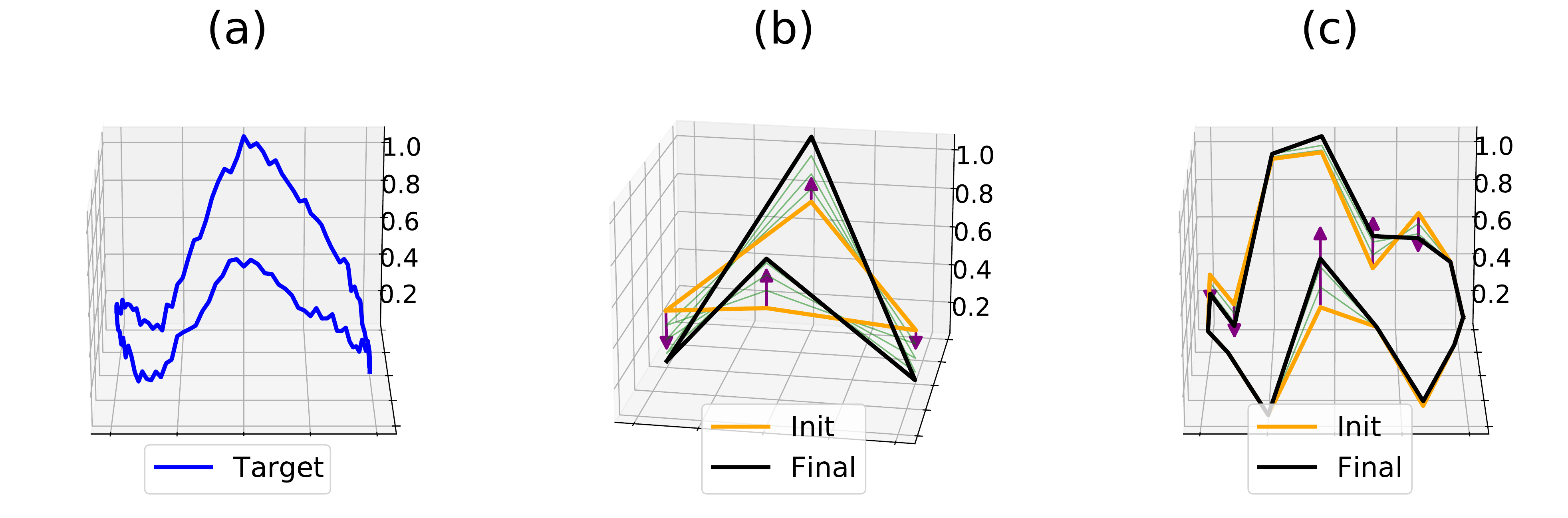}
    \includegraphics[width=0.95\textwidth]{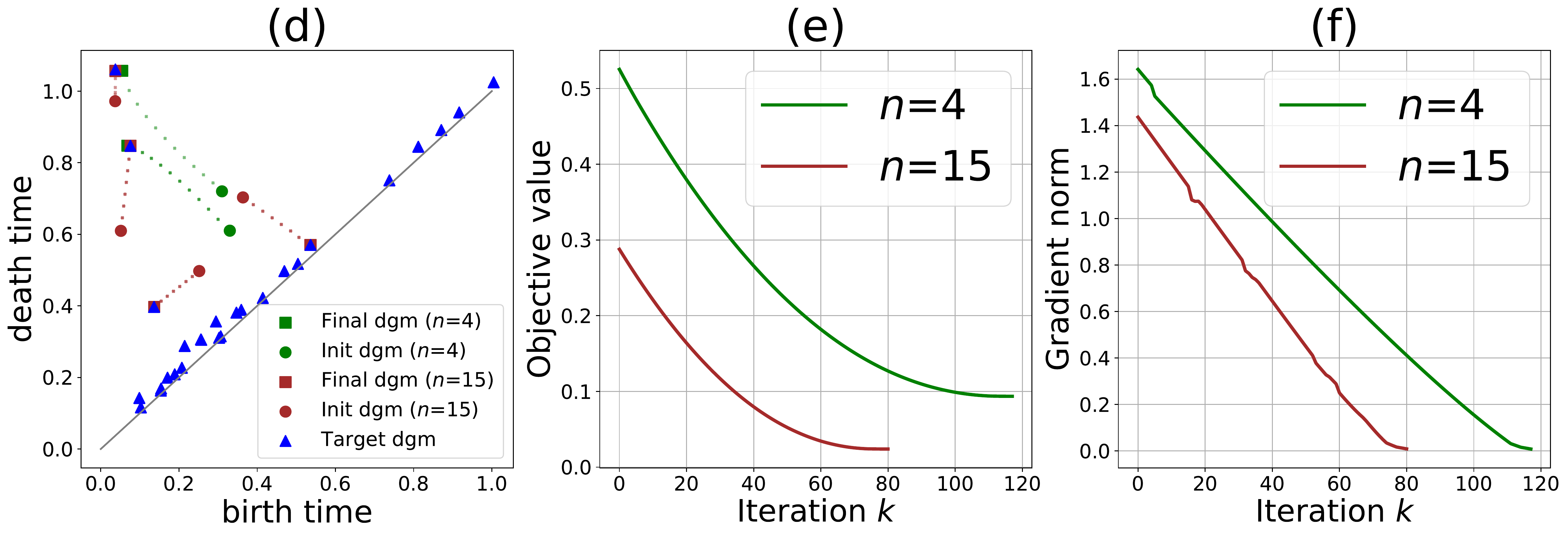}
    \caption{Illustration of topological registration (a) The target function defined on a (circular) simplicial complex of which we want to reproduce the topology. (b) The registration obtained when using a template with $n=4$ vertices. Purple arrows indicate descent direction at $k=0$. (c) The registration obtained when using a template with $n=15$ vertices. (d) The target persistence diagram (blue triangles) along with the diagram trajectories through iterations for both cases (green and brown, respectively). (e) The values of the objective function across iterations. Using a larger template allows to attain lower objective values. (f) The corresponding gradient norms, both reaching the stopping criterion $\|g_k\|\leq \eta$.}
    \label{fig:tR}
\end{figure}

We now present the more sophisticated optimization task of \emph{topological registration}. This problem takes inspiration from registration experiments in shapes and medical images analysis~\cite{joshi2000landmark,ding2001volume,durrleman2011registration}, where we want to translate noisy real-world data (e.g. MRI images of a brain) into a simpler and unified format (e.g. a given template of the brain). 

\paragraph{Problem formulation} In a topological analog of this problem the observation consists of a filter function~$F$ defined on a simplicial complex~$\SComplex$ which may have, for instance, a large number of vertices, and the template consists of a simplicial complex~$\SComplex'$ simpler than $K$ (e.g. with fewer vertices). The goal is then to find a filter function~$\filt$ on~$\SComplex'$ such that~$(K', \filt)$ faithfully recovers the topology of the observation~$(K, F)$.
Formally we minimise the $q$-th distance ($q \in [1,+\infty]$) between their barcodes
\begin{equation}\label{eq:tR}
    x \mapsto W_q(\persmap(x,K'), \persmap(F, K)),
\end{equation}
where we include the complexes in the notations~$\persmap(F, K)$ of the barcodes to make a clear distinction between filter functions defined on~$\SComplex$ and~$\SComplex'$.
\paragraph{Experiment} We minimise~\eqref{eq:tR} using our SGS approach. The observed simplicial complex~$\SComplex$ is taken to be the subdivision of the unit circle with~$120$ vertices, see \cref{fig:tR}. Let $F = F_0 + \zeta$ where $F_0 \in \R^{120}$ is a piecewise linear map satisfying $F_0[0] = 0, F_0[30]=1, F_0[45] = 0.05, F_0[60] = 0.35, F_0[75] = 0.1, F_0[90] = 0.8$ and $\zeta$ is a uniform random noise in $[0,0.1]^{120}$. 
The (extended) barcode of~$F_0$ consists of two long intervals~$(0,1), (0.05,0.9)$ and one smaller interval~$(0.1, 0.35)$ that corresponds to the small variation of~$F_0$ (\cref{fig:tR} (a, left of the plot)). The stability of persistent homology implies that the barcode of~$F$, which is a noisy version of $F_0$, contains a perturbation of these three intervals along with a collection of small intervals of the form $(b,d)$ with $(d-b) < 0.1$, since~$0.1$ is the amplitude of the noise~$\zeta$.
The persistence diagram representation of this barcode can be seen on \cref{fig:tR} (d, blue triangles): the three intervals are represented by the points away from the diagonal $\{x=y\}\subset\R^2$ and the topological noise is accounted by the points close to the diagonal. 

We propose to compute a topological registration~$\filt$ of~$(K,F)$ for two simpler circular complexes with~$n=4$ and $n=15$ vertices respectively (\cref{fig:tR}, (b,c)). 
We initialize the vertex values $\filt_0$ randomly (uniformly in $[0,1]^n$), and minimize \eqref{eq:tR} via SGS. We use $q=2$, and the parameters of \cref{alg:generalized gradient} are set to $\epsilon = 0.01$, $\eta = 0.01$, $\beta=0.5$, $\gamma=0.5$.

With $n=4$ vertices, the final filter function~$\filt$ returned by \cref{alg:generalized gradient} reproduces the two main peaks of~$F$ that correspond to the long intervals~$(0,1), (0.05,0.9)$, but it fails to reproduce the small bump corresponding to~$(0.1,0.35)$ as it lacks the degrees of freedom to do so. \textit{A fortiori} the noise appearing in~$F$ is completely absent in~$\filt$, as observed in \cref{fig:tR} (d) where the two points appearing in the barcode of~$\filt_0$ are pushed towards the two points of the target barcode of~$F$ as it is the best way to reduce the distance~$W_q$. Using $n=15$ vertices the barcode~$\persmap(\filt)$ retrieves the third interval~$(0.1,0.35)$ as well and thus the final filter function~$\filt$ reaches a lower objective value. However~$\filt$ also fits some of the noise, as one of the interval in the diagram of~$x_k$ is pushed toward a noisy interval close to the diagonal (see \cref{fig:tR} (d)). 

\subsection{Topological Mean of Mapper graphs}
\label{subsec:expe_topomean}

In our last experiment, we provide an application of our SGS algorithm to the {\em Mapper} data visualization technique~\cite{Singh2007}.
Intuitively, given a data set~$X$, Mapper produces a graph~$\mapper(X)$, whose attributes, such as its connected components and loops, reflect similar structures in~$X$. For instance, the branches of the Mapper graph in~\cite{Rizvi2017} correspond to the differentiation of stem cells into specialized cells. Besides its potential for applications, Mapper  enjoys strong statistical and topological properties~\cite{Brown2021, Carriere2017, Carriere2018, Carriere2019a, Munch2016, Belchi2020}. 

In this last experiment, we propose an optimization problem to overcome one of the main Mapper limitations, i.e., the fact that Mapper sometimes contains irrelevant features, and solve it with the SGS algorithm. For a proper introduction to Mapper and its main parameters we refer the reader to the appendix (\cref{sec:appendix_mapper}).

\paragraph{Problem formulation} It is a well-known fact that the Mapper graph is not robust to certain changes of parameters which may introduce artificial graph attributes, see~\cite{Attali2009} for an approach to curate~$\mapper(X)$ from its irrelevant attributes. In our case we assume that~$\mapper(X)$ is a graph embedded in some Euclidean space~$\R^d$ ($d=2$ in our experiments), which is typically the case when the data set~$X$ is itself in~$\R^d$, and we modify the embedding of the nodes in order to cancel geometric outliers. For notational clarity we distinguish between the embedded graph~$\mapper(X)\subseteq \R^d$ and its underlying abstract graph~$\SComplex$. Let~$n$ be the number of vertices of~$\SComplex$. 

We propose an elementary scheme inspired from~\cite{Carriere2018} in order to produce a simplified version of~$\mapper(X)$. For this, we consider a family of bootstrapped data sets~$\hat X_1,\dots,\hat X_k$ obtained by sampling the data set~${\rm card}(X)$ times with replacements, from which we derive new mapper graphs~$\SComplex_1,\cdots,\SComplex_k$, whose embeddings~$\mapper(\hat X_1),\dots,\mapper(\hat X_k)$ in~$\R^d$ are fixed during the experiment. In particular, given a fixed unit vector~$\bf e$ in $\R^d$, the projection~$F_{\bf e}$ onto the line parametrized by~${\bf e}$ induces filter functions for each~$\SComplex_i$, hence barcodes~$\persmap(F_{{\bf e}},\SComplex_i)$. 

We minimize the following objective over filter functions~$\tilde{F}_{\bf e}\in \R^n$: 
\begin{equation}\label{eq:frechetmapper_one_direction}
\tilde{F}_{\bf e}\in \R^n \mapsto \sum_{i=1}^k W_2(\persmap (\tilde{F}_{\bf e}, \SComplex),\persmap(F_{\bf e}, \SComplex_i))^2 \in \R. 
\end{equation}
By viewing the optimized filter function~$\tilde{F}_{\bf e}$ as the coordinates of the vertices of~$\SComplex$ along the ${\bf e}$-axis, we obtain a novel embedding of the mapper graph~$\mapper(X)$ in~$\R^d$ that is the \emph{topological barycenter} of the family~$(F_{\bf e}, \mapper(\hat X_i))$. 

To further improve the embedding~$\mapper(X)$, we jointly optimize Eq.~\eqref{eq:frechetmapper_one_direction} over a family $\{{\bf e}_j\}_j$ of directions. Intuitively, irrelevant graph attributes do not appear in most of the subgraphs~$\mapper(\hat X_i)$ and thus are removed in the optimized embedding of~$\mapper(X)$. 
\begin{remark}
\label{remark_frechet_mean}
In some sense, the minimization \cref{eq:frechetmapper_one_direction} corresponds to pulling back to filter functions the well-known minimization problem~$\Barc \ni D \mapsto \sum_{i=1}^k W_2(D,D_i)^2$ that defines the \emph{barycenter} or \emph{Fr\'echet mean} of barcodes~$D_1,\dots,D_k$, see~\cite{turner2014frechet}. Indeed, a \emph{topological mean} of a set of filter functions~$\filt^1,\dots,\filt^k$ on simplicial complexes~$\SComplex_1,\dots, \SComplex_k$ can be defined as a minimizer of~$\filt \in \R^n \mapsto \sum W_2(\persmap(\filt,\SComplex),\persmap(\filt
^i,\SComplex_i))^2$. In our experiment, $\filt$ is interpreted as a radial projection onto the~${\bf e}$-axis, and in fact when considering several directions $\{{\bf e}_j\}_j$ the mean resulting from the optimization is actually that of the so-called Persistent Homology Transform from~\cite{Turner2014a}.
\end{remark}

\paragraph{Experiment} To illustrate this new method for Mapper, we consider a data set~$X$ of single cells characterized by chromatin folding~\cite{Nagano2017}. Each cell is encoded by the squared distance matrix~$M$ of its DNA fragments. This data set was previously studied in~\cite{Carriere2020d}, in which it was shown that the Mapper graph could successfully capture the cell cycle, represented as a big loop in the graph. However, this attribute could only be observed by carefully tuning the parameters. Here we start with a Mapper graph computed out of arbitrary parameters, and then curate the graph using bootstrap iterates as explained in the previous paragraphs.

Specifically, we processed the data set~$X$ with the stratum-adjusted correlation coefficient (SCC)~\cite{Yang2017}, with $500$kb and convolution parameter $h=1$ on all chromosomes. Then, we run a kernel PCA on the SCC matrix to obtain two lenses and computed a Mapper graph from these lenses using resolution~$15$, gain~$0.4$ on both lenses, and hierarchical clustering with threshold~$2$ on Euclidean distance. See \cref{sec:appendix_mapper} for a description of these parameters. The resulting Mapper graph~$\mapper(X)$ displayed in \cref{fig:mapper} (upper left) contains the expected main loop associated to the cell cycle, but it also contains many spurious branches. However computing the Mapper graph with same parameters on a bootstrap iterate results in less branches but also in a coarser version of the graph~(\cref{fig:mapper}, upper middle).

After using the SGS algorithm capped at~$150$ strata (see \cref{rke:dijkstra}), $\epsilon=0.01$, $\eta=0.01$, $\beta=0.5$, $\gamma=0.5$, initialized with $\mapper(X)$, and with loss computed out of $10$ bootstrap iterates and $4$ directions with angles $\{0,\pi/2,\pi/4,-\pi/4\}$, the resulting Mapper, shown in  \cref{fig:mapper} (upper right), offers a good compromise: its resolution remains high and it is curated from irrelevant and artifactual attributes. 

\begin{figure}
    \centering
    \includegraphics[width=0.3\textwidth]{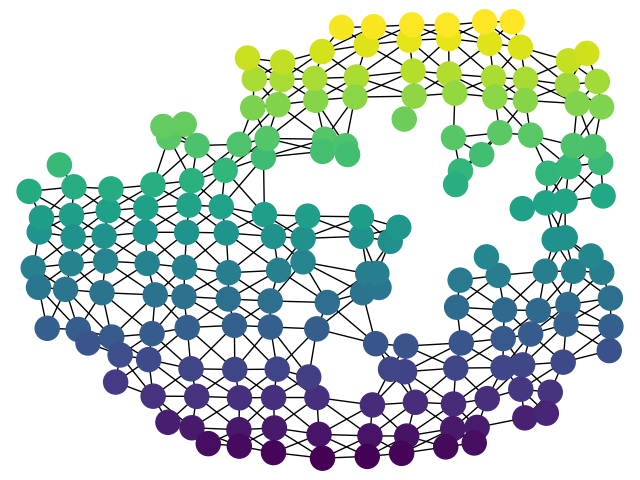}
    \includegraphics[width=0.3\textwidth]{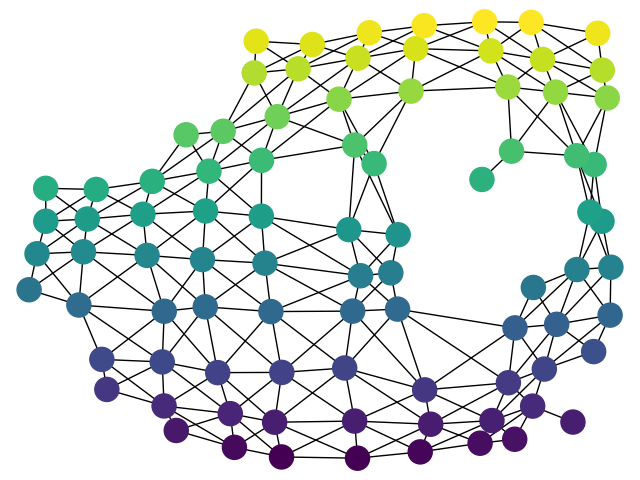}
    \includegraphics[width=0.3\textwidth]{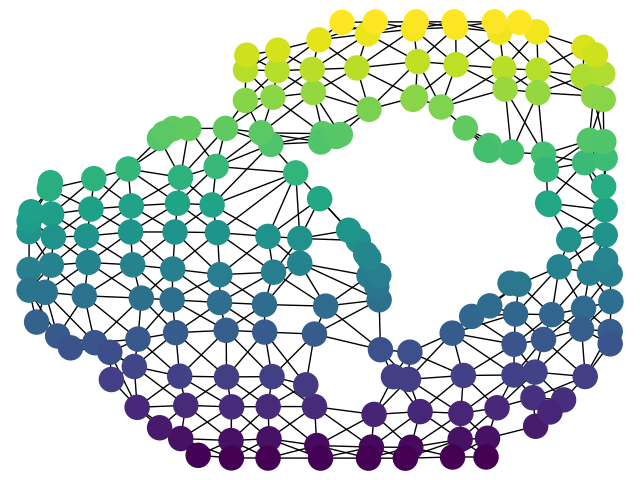}
    \includegraphics[width=0.3\textwidth]{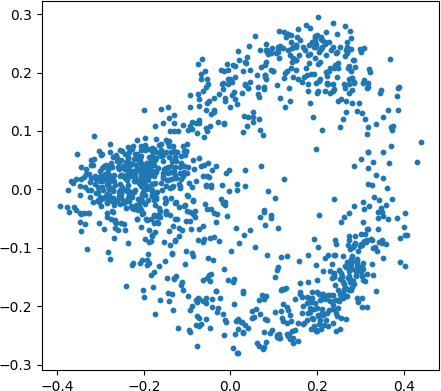}
    \includegraphics[width=0.3\textwidth]{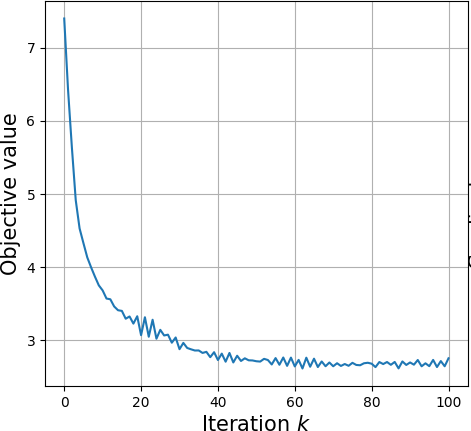}
    \caption{Different Mapper graphs colored with the first kernel PCA component. \emph{(Top row)} Left: original Mapper graph computed with a set of arbitrary parameters with many spurious branches. Middle: Mapper graph obtained from bootstrap with very low resolution. Right: curated Mapper graph obtained as the Fr{\'e}chet mean of the bootstrap iterates. \emph{(Bottom row)} Left: visualization of the data set with kernel PCA. Right: the evolution of the loss \eqref{eq:frechetmapper_one_direction} during the optimization process.}
    \label{fig:mapper}
\end{figure}

\clearpage
\bibliographystyle{plain}
\bibliography{references}

\appendix
\section{Background on Mapper}\label{sec:appendix_mapper}

The Mapper is a visualization tool that allows to represent any data set $X$ equipped with a metric and a continuous function $f:X\rightarrow\R$ with a graph. It is based on the Nerve Theorem, which essentially states that, under certain conditions, the nerve of a cover of a space has the same topology of the original space, where a cover is a family of subspaces whose union is the space itself, and the (1-skeleton of a) nerve is a graph whose nodes are the cover elements and whose edges are determined by the intersections of cover elements. The whole idea of Mapper is that since covering a space is not always simple, an easier way is to cover the image of a continuous function defined on the space with regular intervals, and then pull back this cover to obtain a cover of the original space. 

More formally,
Mapper has three parameters: the {\em resolution} $r\in \mathbb N^*$, the {\em gain} $g\in [0,1]$, and a {\em clustering method} $\mathcal C$. Essentially, the Mapper is defined as $\mapper(X)=\mathcal N(\mathcal C(f^{-1}(\mathcal I(r,g))))$, where $\mathcal I(r,g)$ stands for a cover of ${\rm im}(f)$ with $r$ intervals with $g\%$ overlap, and $\mathcal N$ stands for the nerve operation, which is applied on the cover $\mathcal C(f^{-1}(\mathcal I(r,g)))$ of $X$. This cover is made of the connected components (assessed by $\mathcal C$) of the subspaces $f^{-1}(I), I\in \mathcal I(r,g)$. See Figure~\ref{fig:exmapper}.

\begin{figure}[h]
    \centering
    \includegraphics[width=13cm]{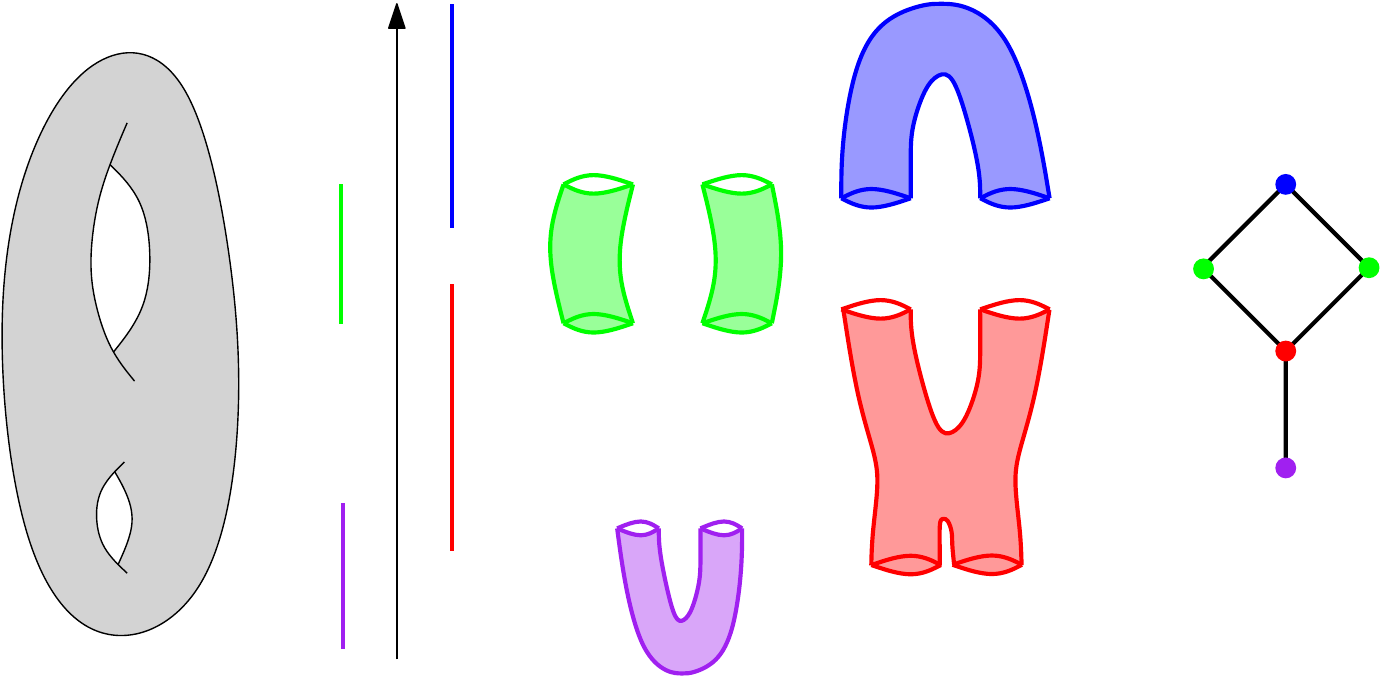}
    \caption{Example of Mapper computation on a double torus with height function covered by four intervals. Ech interval is pulled back in the original space through $f^{-1}$ and then separated into its connected components with $\mathcal C$. In particular, the cover element obtained with the preimage of the green interval is separated into its two connected components. The nerve of this new cover is then computed to obtain the Mapper. One can see that with only four intervals, the topology of the double torus is only partially captured since only one loop is present in the Mapper instead of two.}
    \label{fig:exmapper}
\end{figure}

The influence of the parameters $r,g,\mathcal C,f$ on the Mapper shape is still an active research area. For instance, the number of Mapper nodes increases with the resolution, and the number of edges increases with the gain, but these parameters, as well as the function $f$ and the clustering method $\mathcal C$, can also have more subtle effects on the Mapper shape. We refer  the interested reader to the references mentioned in this article for a more detailed introduction to Mapper.

\end{document}